\g@addto@macro{\endabstract}{\@setabstract}
\newtheorem{proposition}{Proposition}
\newtheorem{proposition?}{Proposition?}
\newtheorem{theorem}{Theorem}
\newtheorem{lemma}{Lemma}
\newtheorem{corollary}{Corollary}
\newtheorem{remark}{Remark}
\newtheorem{example}{Example}
\newtheorem{definition}{Definition}
\newcommand{\authorfootnotes}{\renewcommand\thefootnote{\@fnsymbol\c@footnote}}%
\newcommand{\blue}[1]{\textcolor{blue}{#1}}
\newcommand{\po}{{\sc pom }}
\newcommand{\mods}[1]{\left \vert #1 \right \vert ^2}
\newcommand*\colvec[3][]{
    \left(\begin{smallmatrix}\ifx\relax#1\relax\else#1\\\fi#2\\#3\end{smallmatrix}\right)
}
\newcommand{\hi}{\mathcal{H}} %Hilbert space H
\newcommand{\his}{\mathcal{H}_{\mathcal{S}}}
\newcommand{\hir}{\mathcal{H}_{\mathcal{R}}}
\newcommand{\hia}{\mathcal{H}_{\mathcal{A}}}
\newcommand{\hs}{\mathcal{H}_{\mathcal{S}}}
\newcommand{\hit}{\mathcal{H}_{\mathcal{T}}}
\newcommand{\Y}{\yen}
\newcommand{\lh}{\mathcal{L(H)}} %bounded linear operators
\newcommand{\lhs}{\mathcal{L}(\mathcal{H}_{\mathcal{S}})} %bounded linear operators on system Hilbert space
\newcommand{\lht}{\mathcal{L}(\mathcal{H}_{\mathcal{T}})} %bounded linear operators on total Hilbert space
\newcommand{\lhr}{\mathcal{L}(\hir)} %bounded linear operators on apparatus Hilbert space
\newcommand{\ip}[2]{\left\langle\,#1\,{|}\,#2\,\right\rangle} %inner product
\newcommand{\ket}[1]{|#1\rangle} %ket
\newcommand{\bra}[1]{\langle#1|} %bra
\newcommand{\state}[1]{|#1\rangle} %ket
\newcommand{\dstate}[1]{\langle#1|} %bra
\newcommand{\kb}[2]{|#1\rangle\langle#2|} %ketbra
\newcommand{\no}[1]{\left\|#1\right\|} %norm
\newcommand{\nos}[1]{\left\|#1\right\|^2} 
\newcommand{\tr}[1]{\textrm{tr}\left[#1\right]} %trace
\newcommand{\id}{\mathbbm{1}} %identity operator
\newcommand{\nul}{O} %null operator
\newcommand{\fii}{\varphi}
\newcommand{\Esf}{\mathsf{E}}
\newcommand{\Fsf}{\mathsf{F}}
\newcommand{\Sy}{\mathcal{S}}
\newcommand{\Ap}{\mathcal{A}}
\newcommand{\Bp}{\mathcal{B}}
\newcommand{\R}{\mathcal{R}}
\newcommand{\T}{\mathcal{T}}
\newcommand{\var}{\textrm{Var}} %variance
\newcommand{\E}{\mathsf{E}}%generic observable
\newcommand{\F}{\mathsf{F}}%generic observable
\renewcommand{\P}{\mathsf{P}}%sharp observable
\newcommand{\Rmb}{\mathbb{R}}
\begin{document}

% Use the \preprint command to place your local institutional report
% number in the upper righthand corner of the title page in preprint mode.
% Multiple \preprint commands are allowed.
% Use the 'preprintnumbers' class option to override journal defaults
% to display numbers if necessary
%\preprint{}

%Title of paper
\title{Symmetry, Reference Frames, and Relational Quantities in Quantum Mechanics}
\maketitle
%\tableofcontents
\begin{center}
  %\LARGE 
  %More than one Author with different Affiliations \par \bigskip

  \normalsize
  \authorfootnotes
  Leon Loveridge\footnote{l.d.loveridge@uu.nl}\textsuperscript{1}, 
Takayuki Miyadera\footnote{miyadera@nucleng.kyoto-u.ac.jp}\textsuperscript{2} and
  Paul Busch\footnote{paul.busch@york.ac.uk}\textsuperscript{3},

  \textsuperscript{1}Department of Mathematics and Descartes Centre for the History and Philosophy of Science and the Humanities, Utrecht University, 3584 CC Utrecht, The Netherlands   \par
  \textsuperscript{2}Department of Nuclear Engineering, Kyoto University, Nishikyo-ku, Kyoto, Japan 615-8540 \par
  \textsuperscript{3}Department of Mathematics, University of York, Heslington, York, UK. YO10 5DD\par \bigskip

  %\today
\end{center}

\date{\today}

\begin{abstract}
We propose that observables in quantum theory  are properly understood as representatives of
symmetry-invariant quantities relating one system to another, the latter to be called a reference system. We provide a rigorous mathematical language to introduce and study quantum reference systems, showing that the orthodox ``absolute" quantities are good representatives of observable relative quantities if the reference state is suitably localised. We use this relational formalism to critique the literature on the relationship between reference frames and superselection rules, settling a long-standing debate on the subject.

\end{abstract}

\section{Introduction}

In classical physics, symmetry, reference frames and the relativity of physical quantities are intimately connected. The position of a material object is defined as relative to a given frame, 
and the relative position of object to frame is a shift-invariant quantity. Galiliean directions/angles, velocities and time of events are all relative, and invariant only once the frame-dependence
has been accounted for. The relativity of these quantities is encoded in the Galilei group, 
and the observable quantities are those which are invariant under its action. Einstein's theory engendered a deeper relativity---the length of material bodies and 
time between spatially separated events are also frame-dependent quantities---and observables
must be sought in accordance with their invariance under the action of the Poincar\'{e} group.

In quantum mechanics the analogues of those quantities mentioned above (e.g., position, angle)
must also be understood as being relative to a reference frame. As in the normal presentation of the classical theory, the reference frame-dependence is implicit. However, in the quantum case, there arises an ambiguity regarding the definition of a reference frame: if it is classical,
this raises the spectre of the lack of universality of quantum mechanics along with technical difficulties surrounding hybrid classical-quantum systems; if quantum,
such a frame is subject to difficulties of definition and interpretation arising from 
indeterminacy, incompatibility, entanglement, and other quantum properties (see, e.g., \cite{ed1,ak1,ak2} for early discussions of some of the important issues).

In previous work \cite{mlb, lbm}, following classical intuition we have posited that observable quantum quantities are invariant under relevant symmetry transformations,
and examined
the properties of quantum reference frames (viewed as physical systems) which allow for the usual description, in which the reference frame is implicit, to be recovered. We constructed a map $\Y$ which brings out 
the relative nature of quantities normally presented in ``absolute" form in conventional treatments,
which allows for a detailed study of the relativity of states and observables in quantum mechanics
and the crucial role played by reference localisation.

The objectives for this paper are: 1) To provide a mathematically rigorous and conceptually clear framework with which to discuss quantum reference frames, making precise existing work
on the subject (e.g., \cite{brs}) and providing proofs of the main claims in \cite{lbm}; 2) to construct examples,
showing how symmetry dictates that the usual text book formulation of quantum theory
describes the relation between a quantum system and an appropriately localised reference system; 3) to provide further conceptual context for the quantitative trade-off relations proven in \cite{mlb}; 4) to provide explicit and clear explanation of what it means for states/observables to be defined relative to an external reference frame, and show how such an external description is compatible with quantum mechanics as a universal theory; 5) to introduce the concepts of absolute coherence and mutual coherence, showing the latter to be required for good approximation of relative quantities by absolute ones, and demonstrating it to be the crucial property for interference phenomena to manifest in the presence of symmetry; 6) to address the questions of dynamics and measurement under symmetry, offering an interpretation of the Wigner-Araki-Yanase theorem based on relational quantities; 7) to analyse simplified models similar to those appearing in the literature purporting to produce superpositions typically thought ``forbidden" due to superselection rules, and provide a critical analysis of large amplitude limits in this context guided by two interpretational principles due to Earman and Butterfield, leading directly to 8) to provide a historical account of two differing views on the nature of superselection rules (\cite{www, www70} ``versus" \cite{lub1, as, brs}), their fundamental status in quantum theory and precisely what restrictions arise in the presence of such a rule, showing how our framework brings a unity to the opposing standpoints; 9) to remove ambiguities and inconsistencies appearing in all previous works on the subject of the connection between superselection rules and reference frames;  10) to offer a fresh perspective, based on the concept of mutual coherence, on the nature and reality of quantum optical coherence, settling a long-standing debate
on the subject of whether laser light is ``truly" coherent. See also \cite{dia} for an important contribution on this topic. We provide general arguments and many worked examples to show precisely how the framework presented works in practice, and which 
simplify a number of models appearing in the literature.

Our paper constitutes further effort in a long line of enquiries (e.g., (\cite{brs,bor1,rov1,bene1,mermrel1,AR,sp1}) aimed at capturing the relationalism at the heart of the quantum mechanical world view. The fundamental role of symmetry has not impressed itself strongly upon previous
consideration of the relative nature of the quantum description, and we view this work (along with
\cite{mlb,lbm}) as opening new lines of enquiry in this direction. Our work is inspired by \cite{brs} and visits similar themes, and is complementary to 
recent work on resource theories (e.g., \cite{brs,ms1,ajr1,pian1,pian2}), which focus primarily on practical questions surrounding, for example, high-precision quantum metrology.

We now provide standard mathematical background material, and will work in units where
$\hbar = 1$.

\section{Notation and Some Definitions}\label{sec:nad}

\subsection{Observables and States}
Associated to each physical system is a
separable complex Hilbert space $\hi$.
We let $\lh$ denote the ($C^*$/von Neumann) algebra of all bounded linear operators in $\mathcal{H}$.
\begin{definition} Let $(\Omega, \mathcal{F})$ denote the measurable space consisting of a $\sigma$-algebra $\mathcal{F}$ of subsets of some set $\Omega$.
A normalised {\em positive operator valued measure} \emph{(}{\sc pom}\emph{)} $\Esf$ on $(\Omega, \mathcal{F})$ is a mapping $\Esf: \mathcal{F} \to \lh$  for which
\begin{enumerate}
\item $\Esf (\Omega) = \id$,
\item $\Esf (X) \geq 0 $ for all $X \in \mathcal{F}$,
\item $\Esf \left(\bigcup{X_i} \right) = \sum \Esf(X_i)$ for disjoint sequences $X_i \subset \mathcal{F}$ (sum converging weakly).
\end{enumerate}
\end{definition}
\noindent(Here $\le,\ge$ denote the standard operator ordering.)

Normalised {\sc pom}s represent \emph{observables} (subject to extra constraints in the presence of symmetry, discussed below). Throughout this paper, the pair $(\Omega, \mathcal{F})$ will normally correspond to $\left( \mathbb{R}^n, \mathcal{B}(\mathbb{R}^n) \right)$ (or possibly subsets/subalgebras) with $\mathcal{B}(\cdot)$ denoting the Borel sets. The operators $\Esf(X)$
are called {\em effects} (occasionally also {\em {\sc pom} elements} or {\em effect operators}); they satisfy $\mathds{O} \leq \Esf(X) \leq \id$. The unit operator interval $\left[ \mathds{O}, \id \right]$ comprises the set of all effects $\mathcal{E}(\hi)$. $\mathcal{E}(\hi)$ is convex as a subset of the real linear space of self-adjoint operators in $\lh$, and the collection of extremal elements is the set of projections, characterised as the idempotent effects.
If all elements of a {\sc pom} $\Esf$ are idempotent, then $\Esf$ is called a projection valued measure ({\sc pvm}), and if   $\Esf$ is defined on $\Rmb$, it  defines a unique self-adjoint operator $A:=\int  \Esf (d \lambda)$ with spectral measure $\Esf ^{A}\equiv\Esf$. An observable defined by a self-adjoint operator, or equivalently, a {\sc pvm}, will be called \emph{sharp}, and all others \emph{unsharp}.

\begin{definition}\label{def:norm}
A positive linear map $\omega : \lh \to \mathcal{A}$ \emph{(}where $\mathcal{A}$ is a von Neumann algebra\emph{)} is called
\emph{normal} if for any increasing net $(A_{\alpha}) \subset \lh$ with $\sup{\{A_{\alpha}\}}=A$, $\omega(A) = \sup{\{\omega (A_{\alpha})\}}$.
\end{definition}
Normality is equivalent to $\sigma$-weak continuity. We will denote the trace class of $\mathcal{H}$ by $\mathcal{L}_1(\hi)$ and the trace functional by $\tr{\cdot}$.
\emph{Normal states} are then obtained by setting $\mathcal{A}=\mathbb{C}$ in Definition \ref{def:norm}; any normal state is of the form $A \mapsto\tr{\rho A}\equiv\langle A\rangle_\rho$, where $\rho \in \mathcal{L}_1(\hi)$ is a positive operator and $\tr{\rho}=1$.
 The set of normal states,
denoted $\mathcal{S}(\hi)$, is (identified with) a $\sigma$-convex subset of the real vector space $\mathcal{L}_1(\hi)_{sa}$ of self-adjoint elements of $\mathcal{L}_1(\hi)$. Henceforth all states are assumed to be normal, and we freely move between algebraic (linear functional) and spatial (density operator) notions of states.
The extreme points of $\mathcal{S}(\mathcal{H})$, corresponding to the pure states, are given by the rank one projections, which will be denoted $P_{\varphi} \equiv \ket{\varphi}\bra{\varphi}$, where $\varphi\in\hi$, $\|\varphi\|=1$. We will usually identify pure normal states with unit vectors in $\mathcal{H}$.
States generate expectation-valued functionals $\mathcal{L}(\hi)_{\text{sa}} \to \mathbb{R}$ on $\mathcal{L}(\hi)_{\text{sa}}$ --- the self-adjoint part of $\lh$ --- and when restricted to $\mathcal{E}(\hi)$ can be viewed as generalised  probability measures $\mathcal{E}(\hi) \to [0,1]$.
For a given {\sc pom} $\E:\mathcal{F} \to \lh$ and $\rho \in \mathcal{L}_1(\hi)$ we will write 
$X \mapsto p^E_{\rho}(X)$ for the probability measure $X \mapsto \tr{\E(X)\rho}$ and if $\Esf = \Esf^A$ we use the shorthand $X \mapsto p^A_{\rho}(X)$ to represent the measure $X \mapsto \tr{\E^A(X)\rho}$.

\subsection{Covariant {\sc pom}s and Localisability}\label{sec:cpl}

Covariant {\sc pom}s will feature as reference quantities in the sequel, and their
localisation properties will play an important role. We review these basic notions here.

\subsubsection{Systems of Covariance, Norm-1 Property}
\begin{definition}
Let $U$ denote a unitary representation of a locally compact group
$G$, and let $\Fsf : \mathcal{F} \to \lh$ be a POM whose outcome space $\Omega$ is a $G$-space. Then $(U,\Fsf,\mathcal{H})$
is a system of covariance for $G$ if
\begin{equation}\label{eq:gencov}
\Fsf(g.X) = U(g)\Fsf(X)U(g)^* \text{~for all~} g \in G, X \in \mathcal{F}.
\end{equation}
\end{definition}
\noindent $\Fsf$ is called a \emph{covariant} {\sc pom} for $U$. The triple $(U,\Fsf,\mathcal{H})$
is called a \emph{system of imprimitivity} if $\Fsf$ is projection-valued. We often consider the case $\Omega = G$ and $G$ abelian.

\begin{remark}\rm
Systems of covariance/imprimitivity may also be defined for projective representations.
\end{remark}

We give a definition relating to the \emph{localisability} of {\sc pom}s---the so-called
norm-1 property (see, e.g., \cite{norm1}):
\begin{definition} \label{norm1}
A {\sc pom} $\Esf: \mathcal{B}(G) \to \lh$ is said to satisfy the {\em norm-1 property} if  
$\no{\Esf(X)}=1$ for all $X$ for which $\Esf(X) \neq 0$. 
\end{definition}
\noindent The following is an immediate consequence.
\begin{lemma}\label{lem:n1} If $\Esf$ satisfies the norm-1 property, then for any $X$ for which $\Esf (X) \neq 0$,
there exists a sequence of unit vectors $(\varphi _k) \subset \hi$ for which $\lim_{k \to \infty} \ip{\varphi_k} {\Esf (X)\varphi _k} = 1$. 
\end{lemma}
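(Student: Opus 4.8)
The plan is to reduce the statement to the standard variational characterisation of the operator norm for positive operators. First I would recall that, by the remarks preceding the lemma, each effect $\Esf(X)$ is self-adjoint and satisfies $\mathds{O} \le \Esf(X) \le \id$; in particular $\Esf(X)$ is a positive bounded operator. For any such operator $A$ one has $\no{A} = \sup\{\ip{\varphi}{A\varphi} : \varphi \in \hi,\ \no{\varphi}=1\}$, the supremum being taken over non-negative reals in $[0,\no{A}]$. This is completely standard: it follows from the spectral theorem, or more directly by writing $\ip{\varphi}{A\varphi} = \no{A^{1/2}\varphi}^2$ with $A^{1/2}$ the positive square root, so that $\sup_{\no{\varphi}=1}\ip{\varphi}{A\varphi} = \no{A^{1/2}}^2 = \no{A}$.

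Next I would invoke the norm-1 property: since $\Esf(X) \neq 0$, Definition \ref{norm1} gives $\no{\Esf(X)} = 1$, hence $\sup\{\ip{\varphi}{\Esf(X)\varphi} : \no{\varphi}=1\} = 1$. By the definition of supremum there exists, for each $k \in \mathbb{N}$, a unit vector $\varphi_k \in \hi$ with $1 - \tfrac{1}{k} \le \ip{\varphi_k}{\Esf(X)\varphi_k} \le 1$. Letting $k \to \infty$ then yields $\lim_{k \to \infty}\ip{\varphi_k}{\Esf(X)\varphi_k} = 1$, which is the assertion.

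I do not expect any genuine obstacle here: the only ingredient beyond the elementary definitions of effects and states is the identity $\no{A} = \sup_{\no{\varphi}=1}\ip{\varphi}{A\varphi}$ for positive $A$, and the construction of the approximating sequence is immediate from the meaning of the supremum. The statement is, as the text indicates, an immediate consequence of the norm-1 property, and the proof is essentially a one-line unpacking of that definition.
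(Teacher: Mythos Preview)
Your proof is correct and is exactly the intended argument: the paper itself gives no proof, simply stating that the lemma is ``an immediate consequence'' of Definition~\ref{norm1}, and your one-line unpacking via $\no{A}=\sup_{\no{\varphi}=1}\ip{\varphi}{A\varphi}$ for positive $A$ is precisely what that remark is pointing to.
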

This entails that such a \po gives rise to probability distributions which are (approximately) {\em localisable} in every set $X$ for which $\Esf(X)\ne 0$. In comparison, for a projection valued measure $\mathsf{P}$, for any $X$ with $\mathsf{P}(X) \neq 0$, there is a unit vector
$\varphi \in \mathcal{H}$ for which $\ip{\varphi}{ \mathsf{P} (X)\varphi } = 1$ (any unit vector in the range of  $\mathsf{P} (X)$ will have this property). Hence {\sc pom}s with the norm-1 property have, in a limiting sense, the localisability properties possessed by all {\sc pvm}s.

\begin{remark}\rm
In the case of a covariant {\sc pom}, we do not need to check all the 
subsets $X$ to confirm the norm-1 property, as shown in the following lemma. 
\begin{lemma}
Let $\E$ be a covariant {\sc pom} with $\Omega = G$.
The following are equivalent. 
\begin{itemize}
\item[(i)]$\E$ has the norm-1 property.  
\item[(ii)]$\Vert E(X) \Vert =1$ for all $X \in \mathcal{F}$ 
with $e\in X$ and $\E(X)\neq 0$.
\item[(iii)]For all $X\in \mathcal{F}$ with $e \in X$ and $\E(X) \neq 0$, 
there exists a sequence of unit vectors 
$(\varphi_k)$ such that $\lim_k \langle \varphi_k| \E(X) \varphi_k\rangle =1$.
\end{itemize}
\end{lemma}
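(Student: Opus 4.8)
The plan is to prove the three equivalences along the route (i)$\Rightarrow$(ii), (ii)$\Leftrightarrow$(iii), (ii)$\Rightarrow$(i); only the last of these carries any real content. The implication (i)$\Rightarrow$(ii) is trivial, since (ii) simply restricts the norm-1 requirement of Definition~\ref{norm1} to the subfamily of sets containing the group identity $e$.

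For (ii)$\Leftrightarrow$(iii) I would argue set by set. Every $\E(X)$ is an effect, so $\mathds{O}\le\E(X)\le\id$ and hence $\|\E(X)\| = \sup\{\langle\varphi|\E(X)\varphi\rangle : \|\varphi\|=1\} \le 1$; thus $\|\E(X)\|=1$ holds exactly when this supremum is approached by some sequence of unit vectors, which is precisely the content of Lemma~\ref{lem:n1}. Applying this observation to each $X$ with $e\in X$ and $\E(X)\neq 0$ turns (ii) into (iii) and back.

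The substantive step is (ii)$\Rightarrow$(i). Let $X\in\mathcal F$ be arbitrary with $\E(X)\neq 0$; since $\E(\emptyset)=0$, the set $X$ is nonempty, so I may fix a point $g_0\in X$. Because $\Omega=G$ carries the translation action, $g_0^{-1}.X=g_0^{-1}X$ is again a Borel subset of $G$ (translation being a homeomorphism) and it contains $e$. Covariance gives $\E(g_0^{-1}X)=U(g_0)^{*}\E(X)U(g_0)$, which is nonzero and of the same operator norm as $\E(X)$ because $U(g_0)$ is unitary. Hypothesis (ii), applied to $g_0^{-1}X$, then yields $\|\E(g_0^{-1}X)\|=1$, whence $\|\E(X)\|=1$; as $X$ was an arbitrary set on which $\E$ is nonzero, this is the norm-1 property.

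I do not anticipate a genuine obstacle. The one point requiring care is the translation step, which is exactly where the assumption $\Omega=G$ is used: it lets one move any nonempty measurable set onto a set containing $e$ while preserving both the norm and the non-nullity of its effect. The same reasoning carries over to a homogeneous $G$-space equipped with a distinguished base point, but fails for a general (non-transitive) action, so the hypothesis on $\Omega$ is essential rather than cosmetic.
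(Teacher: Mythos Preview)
Your proof is correct and follows essentially the same route as the paper: the paper also treats (ii)$\Rightarrow$(i) as the only nontrivial step and dispatches it by translating an arbitrary $Y$ with $\E(Y)\neq 0$ to a set $g.Y\ni e$ via covariance, declaring the remaining implications trivial. Your write-up is simply more explicit (you check nonemptiness, measurability of the translate, and that the translated effect remains nonzero), and your closing remark on the role of the hypothesis $\Omega=G$ is a useful addition that the paper does not spell out.
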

\begin{proof}
Assume (ii). Then for an arbitrary $Y \in \mathcal{F}$ with 
$\E(Y) \neq 0$, there exists $g\in G$ such that 
$e\in g.Y \in \mathcal{F}$ holds. Thus (i) follows. 
The other relations are trivial.\footnote{The restriction on $X \in \mathcal{F}$ for $\E(X) \neq 0$ is 
not needed in the case of $S^1$ and $\mathbb{R}^d$.
Let $\E$ be a covariant {\sc pom} with a compact $\Omega = G$.
Then the following are equivalent. 
\begin{itemize}
\item[(i)]$\E$ has norm-1 property.  
\item[(ii)]$\Vert E(X) \Vert =1$ for all $X \in \mathcal{F}$ 
with $e\in X$.
\item[(iii)]For all $X\in \mathcal{F}$ with $e \in X$, 
there exists a sequence of unit vectors 
$(\varphi_k)$ such that $\lim_k \langle \varphi_k| \E(X) \varphi_k\rangle =1$.
\end{itemize}
\begin{proof}
Assume (ii). Suppose that there is a neighbourhood $X$ of $e$ 
satisfying $\E(X) =0$. $G$ is covered by $\{g.X\}$, which can be 
reduced to a finite cover $\{g_n. X\}$. 
$\E(G) \leq \sum_n \E(g_n. X) =0$ gives a contradiction. 
\end{proof}
}
\end{proof}
\end{remark}

\subsubsection{Positions, Momenta, and Covariant Phase Space \sc{pom}s}
For the case $G=\mathbb{R}$, the position operator $Q$ with spectral measure $\E^Q : \mathcal{B}(\mathbb{R}) \to \mathcal{L}(L^2(\mathbb{R}))$ acting by multiplication, and
the strongly continuous representation $U(x)=e^{ixP}$ ($P$ momentum) of $\mathbb{R}$ gives rise to a system
of imprimitivity $(U, \E^Q, L^2(\mathbb{R}))$ under the covariance 
\begin{equation}
\E^Q(X-x) = U(x)\E^Q(X)U(x)^*.
\end{equation}
The momentum operator $P$
with spectral measure $\E^P: \mathcal{B}(\mathbb{R}) \to \mathcal{L}(L^2(\mathbb{R}))$ satisfies, with $V(p)=e^{ipQ}$, the following covariance relation with respect to boosts:
\begin{equation}
\E^P(Y-p) = V(p)\E^P(Y)V(p)^*,
\end{equation}
yielding the system of imprimitivity $(V,\E^P, L^2(\mathbb{R}))$.

Unsharp versions of position (for instance, smeared positions (e.g., \cite{OQP})) are also covariant; indeed it is such a covariance requirement
that defines the class of unsharp positions (analogously for unsharp momenta). Let $\mu$ be a probability (``confidence") measure on $\mathbb{R}$. A smeared position observable
$\Esf ^{\mu}$ is defined as
\begin{equation}
\Esf ^{\mu} (X) := (\mu * \E)(X) = \int_R \Esf (X + q) d\mu (q)
\end{equation}
where ``$*$'' denotes convolution of measures. Under the assumption of absolute continuity we can write $\mu(X) = \int_X e(x) dx$ and
$\Esf ^{\mu}(X) \equiv \Esf ^{e} (X) = (\chi_X * e)(Q)$. Such a quantity is called a smeared position observable with confidence function $e$. It is covariant, and in the limit that $e$ becomes a delta function, or equivalently, the associated $\mu$ becomes a point measure, the sharp position is returned.

An example of a necessarily unsharp covariant quantity is provided by a covariant phase-space
{\sc pom} $M:\mathcal{B}(\mathbb{R}^2) \to \mathcal{L}(L^2(\mathbb{R}))$ which is both shift and boost covariant, i.e.,
\begin{equation}\label{eq:cpo}
W(q,p)M(Z)W(q,p)^*=M(Z+(q,p)),
\end{equation}
where $W(q,p):=e^{(-i/2 )qp} e^{-iqP} e^{ipQ}$ are the Weyl operators. $M$ contains unsharp positions and momenta as marginals.

%Each $M$ satisfying equation \eqref{eq:cpo} is associated with a unique state $T$ via

%\begin{equation}
%M^T(Z) = \frac{1}{2 \pi \hbar}\int_Z W(q,p)TW(q,p)^* dq dp
%\end{equation}
%and each $M^T$ contains approximate (covariant) position and momentum as marginals, i.e., 

%\begin{equation}
%M^T(X \times \mathbb{R}) = \int_{\mathbb{R}}\mu^T(X-q)\mathsf{Q}(dq)
%\end{equation}
%and 
%\begin{equation}
%M^T( \mathbb{R} \times Y) = \int_{\mathbb{R}}\nu^T(Y-p)\mathsf{P}(dp),
%\end{equation}
%where $\mathsf{Q}$ and $\mathsf{P}$ are the spectral measures of position $Q$ and momentum $P$, respectively, and $\mu^T = p^{\mathsf{Q}}_{\Pi T \Pi^*}$ and $\nu^T = p^{\mathsf{P}}_{\Pi T \Pi^*}$ represent smearing distributions.

We now turn to the case of $G = S_1$ which plays a major role in the rest of the paper.

\subsubsection{Covariant Phases}\label{ssec:covp}

We identify $S^1$ with $[0,2 \pi]$ or occasionally $[- \pi, \pi]$ (identifying also the endpoints of these intervals); $\theta\mapsto U(\theta)=e^{i N\theta}$ is a strongly 
continuous unitary representation, for self-adjoint $N$, of $S_1$ in $L^2(S^1)$ and
 $\Fsf:\mathcal{B}(S_1) \to \mathcal{L}(L^2(S^1))$ is called a  covariant phase \po if 
\begin{equation}\label{eq:phasedef}
e^{i\theta N }\mathsf{F}(X)e^{-i \theta N } =\Fsf(X\dotplus\theta), \quad \theta\in [0,2\pi),\ X\in\mathcal{B}([0,2 \pi))
\end{equation}
(where $\dotplus$ denotes addition modulo $2 \pi$).
There is a constraint on the spectrum of the unique self-adjoint generator $N$. Using the spectral representation of $N$, $N=\int x \E^N(dx)$, we have $\int e^{i2\pi x}\E^N(dx)=\id$ so that the spectrum of $N$ must consist of integers. Recall that the generator $N$ associated with a phase shift group is called a {\em number operator}. We consider three typical cases of generators $N$ and covariant phase {\sc pom}s associated with them.

\begin{example}\rm
Consider the canonical pair of an angular momentum component and the associated angle variable of a particle in three dimensions. In this case, $N=L_z$ (say), where $L_z$ generates rotations about the $z$ axis,  and as a covariant phase \po one can take the spectral measure of the self-adjoint azimuthal angle operator, $\Esf=\Esf^\Phi$, where $\Phi \psi(r,\theta,\phi)=\phi \psi(r,\theta,\phi)$. Note that the spectrum of $N$ is $\mathbb{Z}$.
\end{example}

\begin{example}\rm
The second example is motivated by the number operator counting the eigenvalues of the harmonic oscillator Hamiltonian. The associated phase {\sc pom}s are covariant under rotations in phase space. These cannot be {\sc pvm}s due to the fact that $N$ is bounded from below \cite{hol1, ljp1}; the preceding example then figures naturally as the minimal Naimark extension of the canonical phase (defined presently). Thus, let $\{e_n \}$ be an orthonormal basis in $\hi\simeq \ell^2$ and $N := \sum_{n=0}^\infty n P[e_n] \equiv \sum_{n=0}^\infty n P_n$ be a number operator. Any covariant phase {\sc pom} conjugate to $N$ is known to be of the form 
 \begin{equation}\label{phase}
\Fsf (X) = \sum_{n,m=0}^{\infty} c_{nm} \frac{1}{2 \pi}  \int_{X} e^{i(n-m) \theta} d \theta \kb{n}{m}
\end{equation}
where $(c_{nm})$ is a so-called {\em phase matrix} --- a positive matrix for which $c_{nn}=1$ for all $n \in \mathbb{N}$. The \emph{canonical phase} $\Fsf^{\text{can}}$ is singled out by the condition $c_{n,m} = 1$ for all $n,m \in \mathbb{N}\cup\{0\}$. $\Fsf ^{\rm{can}}$ is characterised by various optimality properties (see \cite{pekju}), in particular it satisfies the norm-1 property.
\end{example}

\begin{example}\rm
As the third example we consider covariant phase {\sc pom}s  in finite dimensional Hilbert spaces; 
one such instance is the spin phase (e.g., \cite{OQP}). The norm-1 property and the associated localisability are lost when we move to the finite dimensional setting, as shown in Lemma \ref{loclem} below.

Let $\hi\simeq \mathbb{C}^d$ and consider the operator $N \in \lh$ defined by $N = \sum _{n = 0}^{d-1}nP_n$. An example of a covariant phase  {\sc pom} is given by
\begin{equation}\label{finphase}
\Fsf (X) = \sum_{n,m = 0} ^{d-1}\frac{1}{2 \pi}\int_X e^{i(n-m) \theta} \kb{m}{n} d \theta.
\end{equation}
\end{example}
For a set $X\in\mathcal{B}([0,2\pi))$, we denote its Lebesgue measure by $|X|$.

\begin{lemma}(Localisation Lemma)\label{loclem}
Consider a covariant phase {\sc pom} $\Fsf$ in a $d$-dimensional Hilbert space $\hi$. For any $X \in\mathcal{B}([0,2\pi))$ $(X \neq [0,2 \pi) )$ and for any state $\rho$, it holds that 
\begin{align*}
{\rm tr}[{\rho \Fsf(X)}]\leq d|X|/2\pi.
\end{align*}
\end{lemma}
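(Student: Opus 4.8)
The plan is to bound $\mathrm{tr}[\rho\Fsf(X)]$ by first reducing to pure states and then computing directly with the explicit form \eqref{finphase}. Since $\Fsf(X)\ge 0$ and the map $\rho\mapsto\mathrm{tr}[\rho\Fsf(X)]$ is affine and weak-$*$ continuous on the (compact) state space, its supremum over all states equals its supremum over pure states; equivalently $\mathrm{tr}[\rho\Fsf(X)]\le\no{\Fsf(X)}$, so it suffices to estimate the operator norm of $\Fsf(X)$, or just $\ip{\psi}{\Fsf(X)\psi}$ for an arbitrary unit vector $\psi=\sum_{n=0}^{d-1}\psi_n\ket{n}$.

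The key computation is then the following. Using \eqref{finphase},
\begin{equation*}
\ip{\psi}{\Fsf(X)\psi}=\frac{1}{2\pi}\int_X\Bigl|\sum_{n=0}^{d-1}\psi_n e^{-in\theta}\Bigr|^2 d\theta,
\end{equation*}
since $\sum_{n,m}\overline{\psi_m}\psi_n e^{i(n-m)\theta}=\bigl|\sum_n\psi_n e^{-in\theta}\bigr|^2$. Now I would bound the integrand pointwise: by Cauchy--Schwarz,
\begin{equation*}
\Bigl|\sum_{n=0}^{d-1}\psi_n e^{-in\theta}\Bigr|^2\le\Bigl(\sum_{n=0}^{d-1}|\psi_n|^2\Bigr)\Bigl(\sum_{n=0}^{d-1}|e^{-in\theta}|^2\Bigr)=1\cdot d=d.
\end{equation*}
Therefore $\ip{\psi}{\Fsf(X)\psi}\le\frac{1}{2\pi}\int_X d\,d\theta=d|X|/2\pi$, and taking the supremum over unit vectors $\psi$ and then over states $\rho$ gives the claim.

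I do not expect a serious obstacle here — the result follows from Cauchy--Schwarz once the diagonal form of $\ip{\psi}{\Fsf(X)\psi}$ is written out. The one point that deserves care is the passage from general states to pure states (or to the operator norm): this is where positivity of $\Fsf(X)$ is used, together with the fact that any normal state is a $\sigma$-convex combination of pure states, so that $\mathrm{tr}[\rho\Fsf(X)]\le\sup_{\no{\psi}=1}\ip{\psi}{\Fsf(X)\psi}$. A secondary remark worth including is that this bound is essentially optimal in its dependence on $d$ and $|X|$ for small $|X|$ (so no covariant phase \po\ in finite dimensions can have the norm-1 property, since $d|X|/2\pi\to 0$ as $|X|\to 0$ while $\Fsf(X)$ need not vanish), which is the point of contrast with Examples in the infinite-dimensional case and the motivation for stating the lemma.
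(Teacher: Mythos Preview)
Your argument is correct for the particular covariant phase {\sc pom} given by \eqref{finphase}, but the lemma is stated for an \emph{arbitrary} covariant phase {\sc pom} in dimension $d$, and your proof does not cover that. The step where you write
\[
\ip{\psi}{\Fsf(X)\psi}=\frac{1}{2\pi}\int_X\Bigl|\sum_{n=0}^{d-1}\psi_n e^{-in\theta}\Bigr|^2 d\theta
\]
relies on the phase matrix being identically $1$; for a general covariant phase {\sc pom} (which in finite dimensions still carries a positive phase matrix $(c_{nm})$ with unit diagonal) the integrand is $\sum_{n,m} c_{nm}\overline{\psi_m}\psi_n e^{i(n-m)\theta}$, and your Cauchy--Schwarz step no longer applies directly. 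So as written you have proved only the special case of Example~3.

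The paper's proof avoids this by a different and more robust route: instead of bounding $\mathrm{tr}[\rho\Fsf(X)]$ by the operator norm $\no{\Fsf(X)}$ and then estimating the norm from an explicit formula, it bounds by the trace, $\mathrm{tr}[\rho\Fsf(X)]\le\mathrm{tr}[\Fsf(X)]$, and computes the trace using \emph{only} the covariance condition \eqref{eq:phasedef}. Covariance forces $\ip{n}{\Fsf(X)n}=\ip{n}{\Fsf(X\dotplus\theta)n}$ for every $\theta$, so each diagonal entry is the Haar measure $|X|/2\pi$, and summing over $n$ gives $\mathrm{tr}[\Fsf(X)]=d|X|/2\pi$. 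This uses nothing about the specific form of $\Fsf$ beyond covariance, and is also shorter. Your approach could be salvaged in the general case (the density of a covariant phase {\sc pom} is unitarily equivalent to the phase matrix, whose norm is at most its trace $d$), but that extra work is exactly what the trace argument sidesteps.
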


\begin{proof}
The inequality follows immediately from $\tr{\rho \Fsf(X)}\le\tr{\Fsf(X)}$ and the fact that due to the
covariance condition \eqref{eq:phasedef} the phase distribution is uniform in the number states, i.e., for each $\theta$, 
\begin{align*}
\ip{n}{ \Fsf(X)|n}  = \ip{n}{ \Fsf(X \dotplus\theta)|n},
\end{align*}
therefore $\ip{n}{ \Fsf(X)|n} ={|X|}/{(2\pi)}$, and so $\tr{\Fsf(X)}=d|X|/(2\pi)$.
\end{proof}
We note that all covariant phase {\sc pom}s are absolutely continuous with respect to the Lebesgue measure.  The localisation lemma puts stringent bounds on the magnitude of the localisation probability if $|X|$ is small. Conversely, in order to get high localisation probability (close to 1) for small intervals $X$ in a finite dimensional system, one needs to choose the dimension $d$ to be large. We will use covariant quantities to construct reference frames in the relativisation model, where large (typically infinite-dimensional) Hilbert spaces are required
for the reference system to to be good, in a sense to be discussed.

\section{Symmetry}

\subsection{Observables as Invariant Quantities}

A quantum system $\Sy$ is constrained to behave in accordance with the symmetries of the spacetime
it inhabits and to the concomitant conservation laws that arise. An upshot of such a constraint is
that certain quantities require two systems for their definition or, more colloquially, require a reference frame. Henceforth, \emph{absolute} quantities will be understood
as those whose formal representation does not explicitly rely on such a reference, which is therefore viewed 
as \emph{external} \cite{brs}. Such absolute quantities should not be taken to be observable:\footnote{To be proper, we should write ``absolute" in quotation marks to emphasise that such quantities are not represented in reality, but are mere notational short-hands. We avoid this only for aesthetic reasons.} the absolute position of a system is not meaningful, but both the relative positions of parts of $\Sy$ as a compound system
and the position of $\Sy$ (or, e.g., its centre of mass) relative to some other system $\R$ is. Relative position is a shift-invariant quantity. We proceed with the hypothesis that what can be measured is invariant with respect to the relevant transformation group, with particular emphasis on the group of phase shifts. 

Thus, a (locally compact) symmetry group $G$ acts in the Hilbert space $\his$ of the system $\Sy$
via a (strongly continuous, projective) unitary representation $U$. In non-relativistic quantum
mechanics $G$ (the spacetime symmetry group) is the Galilei group, and the stipulation of symmetry is that any {\sc pom} $\Esf$ of $\Sy$
to be deemed observable must satisfy $U(g)\Esf(X)U(g)^* = \Esf(X)$ for all $g \in G$ and $X\in \mathcal{B}(\Omega)$ ($\Omega$ is any appropriate $G$-space).
In this paper we simplify the problem, treating only unitary representations, and focus on shifts in one dimension ($G=\mathbb{R}$), and rotations ($G=S^1$). The latter case has a spacetime realisation as rotations about an axis, and an ``internal" realisation as shifts in phase of, say, a laser beam.

\subsection{Number and Phase}

Consider a (possibly unbounded) number operator $N=\sum_n n P_n$ acting in $\his$, generating a strongly
continuous unitary representation $U_{\Sy}$ of $S^1$ in $\his$ via the unitary operators $U_{\Sy}(\theta):=e^{iN_{\Sy} \theta}$, giving rise by conjugation to an action on $\lhs$, i.e., $A \mapsto e^{iN_{\Sy} \theta} A e^{-iN_{\Sy} \theta}$, and on states $\rho \mapsto e^{-iN_{\Sy} \theta} \rho e^{iN_{\Sy} \theta}$. 

Consider the mapping
$\tau_{\Sy}:\lhs \to \lhs$ defined by

\begin{equation}
\tau_{\Sy}(A)=\sum P_n A P_n,
\end{equation}
with its predual $\tau_{\Sy}{_*} : \mathcal{T}_1(\his) \to \mathcal{T}_1(\his)$ taking the same form:
\begin{equation}
\tau_{\Sy}{_*}(\rho)=\sum P_n \rho P_n.
\end{equation}
$\tau_{\Sy}{_*}$ is familiar from various contexts. In the quantum theory of measurement,
it is the L\"{u}ders map arising from a non-selective measurement of $N_{\Sy}$; in an optical setting it is called a dephasing channel. It also appears in decoherence theory. $\tau_{\Sy}{_*}$ is trace-preserving, and hence bounded and trace-norm continuous. It can be shown that for any pure state 
$P[\phi]$, $\tau_{\Sy}{_*}(P[\phi])$ is the mixture of $N_{\Sy}$-eigenstates which minimises the Hilbert-Schmidt distance from $P[\phi]$; we omit the proof.

\begin{proposition}\label{prop:invh}
Let $\mu$ denote the (normalised) Haar measure on $S^1$. 
For self-adjoint $A \in \lhs$ the following are equivalent:
\begin{enumerate}
\item $[A,P_n]=0$ for all $n$ (and thus $[A,N_{\Sy}]=0$ for bounded $N_{\Sy}$).
\item  $U(\theta)AU(\theta^*)=A$ for all $\theta$.
\item $\int U(\theta) A U(\theta)^* d\mu (\theta) = A$.
\item $\tau_{\Sy}(A) = A$.
\end{enumerate}
For $\rho \in \mathcal{L}_1(\his)$, the following are also equivalent:
\begin{enumerate}
\item[(5)] $[\rho, P_n]=0$  for all $n$ (and thus $[\rho,N_{\Sy}]=0$ for bounded $N_{\Sy}$).
\item[(6)]  $U(\theta)\rho U(\theta^*)=\rho$ for all $\theta$.
\item[(7)] $\int U(\theta)^* \rho U(\theta) d\mu (\theta) = \rho$.
\item[(8)] $\tau_{\Sy}{_*}(\rho) = \rho$.
\end{enumerate}
\end{proposition}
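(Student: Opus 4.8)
The plan is to prove a cycle of implications among (1)--(4), and then obtain (5)--(8) by passing to the predual, noting that the dephasing map $\tau_\Sy$ is the dual of $\tau_{\Sy*}$ and that the integral condition for states uses $U(\theta)^*$ where the one for operators uses $U(\theta)$, exactly as in the difference between the Heisenberg and Schr\"odinger actions stated in the text. For the operator part I would run the chain $(1)\Rightarrow(4)\Rightarrow(2)\Rightarrow(3)\Rightarrow(1)$, or alternatively $(1)\Leftrightarrow(4)$, $(1)\Leftrightarrow(2)$, $(2)\Leftrightarrow(3)$ independently; the former is cleaner.

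For $(1)\Rightarrow(4)$: if $[A,P_n]=0$ for all $n$, then $P_nAP_m=P_nP_mA=\delta_{nm}P_nAP_n$, so summing the block decomposition $A=\sum_{n,m}P_nAP_m$ collapses to $\sum_nP_nAP_n=\tau_\Sy(A)$. Here one should be mildly careful that $\sum_n P_n=\id$ converges strongly and that the double series $\sum_{n,m}P_nAP_m$ reconstructs $A$ in the strong (or weak) operator topology; this is the one place where, for unbounded $N_\Sy$, one invokes that the $P_n$ still form a complete orthogonal family of projections. For $(4)\Rightarrow(2)$: if $A=\sum_nP_nAP_n$ then $U(\theta)AU(\theta)^*=\sum_n e^{in\theta}P_nAP_ne^{-in\theta}=\sum_nP_nAP_n=A$, using $U(\theta)P_n=e^{in\theta}P_n$. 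For $(2)\Rightarrow(3)$: integrate the identity $U(\theta)AU(\theta)^*=A$ against the normalised Haar measure; since the integrand is constant in $\theta$ the integral is $A$ (the integral exists in the weak sense, which suffices). For $(3)\Rightarrow(1)$: compute $P_m\big(\int U(\theta)AU(\theta)^*d\mu(\theta)\big)P_k=\int e^{i(m-k)\theta}d\mu(\theta)\,P_mAP_k=\delta_{mk}P_mAP_k$; so if the integral equals $A$, then $P_mAP_k=0$ for $m\neq k$, i.e. $A$ is block-diagonal, which is equivalent to $[A,P_n]=0$ for all $n$. The parenthetical remark that this gives $[A,N_\Sy]=0$ when $N_\Sy$ is bounded is then immediate since $N_\Sy=\sum_n nP_n$ converges in norm in that case.

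The state part $(5)$--$(8)$ is obtained by the same four arguments applied to $\rho\in\mathcal L_1(\his)_{sa}$ (every trace-class operator also has a strongly convergent block decomposition against the $P_n$), with two cosmetic changes: first, the relevant map is $\tau_{\Sy*}(\rho)=\sum_nP_n\rho P_n$, which has literally the same form as $\tau_\Sy$, so $(5)\Leftrightarrow(8)$ and $(8)\Rightarrow(6)$ go through verbatim; second, in $(6)\Leftrightarrow(7)$ one uses $U(\theta)^*\rho U(\theta)$ rather than $U(\theta)\rho U(\theta)^*$, but this makes no difference to the computation since $\{U(\theta)^*:\theta\in S^1\}=\{U(\theta):\theta\in S^1\}$ and Haar measure is invariant under inversion, so integrating the $U(\theta)^*\rho U(\theta)$-conjugates against $d\mu$ is the same averaging operation. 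Alternatively, and perhaps more elegantly, one can note that $\rho$ satisfies (6) iff $U(\theta)\rho U(\theta)^*=\rho$ iff $\tr{U(\theta)\rho U(\theta)^*A}=\tr{\rho A}$ for all $A$, iff $\tr{\rho\,U(\theta)^*AU(\theta)}=\tr{\rho A}$ for all $A$; choosing $A$ to range over the self-adjoint operators and using that (6) for all $\theta$ is equivalent to invariance of the functional $A\mapsto\tr{\rho A}$ under the Heisenberg action reduces (5)--(8) to (1)--(4) by duality.

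\textbf{Main obstacle.} The only genuinely non-routine point is the bookkeeping with the (possibly unbounded) number operator: one must phrase everything in terms of the spectral projections $P_n$ and the strongly convergent resolution $\sum_nP_n=\id$, rather than in terms of $N_\Sy$ itself, and be careful about in which operator topology the block decompositions $A=\sum_{n,m}P_nAP_m$ and $\rho=\sum_{n,m}P_n\rho P_m$ converge. Everything else is a short direct computation using $U(\theta)P_n=e^{in\theta}P_n$ and $\int_{S^1}e^{ik\theta}d\mu(\theta)=\delta_{k,0}$.
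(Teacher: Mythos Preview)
Your argument is correct. The paper states this proposition without proof, treating the equivalences as standard; your cycle $(1)\Rightarrow(4)\Rightarrow(2)\Rightarrow(3)\Rightarrow(1)$ via the block decomposition, the character orthogonality $\int_{S^1}e^{ik\theta}d\mu(\theta)=\delta_{k,0}$, and the duality reduction for $(5)$--$(8)$ is exactly the routine verification the authors had in mind, and your caution about working with the spectral projections $P_n$ rather than $N_\Sy$ itself (and about strong convergence of $\sum_n P_n=\id$) is the right way to handle the possibly unbounded generator.
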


\begin{proposition}\label{prop:inv2}
The following hold (prime denoting commutant).
\begin{enumerate}
\item For any $A \in \{P_n\}^{\prime}$, $p_{\rho}^{A}(X) = p^A_{\tau_{\Sy}{_*}(\rho)}(X)$.
\item For any $\rho \in \{P_n\}^{\prime}$, $p_{\rho}^{A}(X) = p^{\tau_{\Sy}(A)}_{\rho}(X)$.
\end{enumerate}
\end{proposition}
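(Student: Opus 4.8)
The plan is to derive both identities from a single fact: that $\tau_{\Sy}$ and its predual $\tau_{\Sy*}$ are mutually dual with respect to the trace pairing,
\[
\tr{\tau_{\Sy}(B)\,\rho}=\tr{B\,\tau_{\Sy*}(\rho)}\qquad\text{for all }B\in\lhs,\ \rho\in\L_1(\his),
\]
combined with the characterisations of $S^1$-invariance supplied by Proposition \ref{prop:invh}.

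First I would verify the displayed duality. For finite-rank $\rho$ it is immediate from cyclicity of the trace and $P_n=P_n^2$, since $\tr{P_nBP_n\rho}=\tr{BP_n\rho P_n}$, and summing over $n$ gives the identity. The only point needing care is the interchange of the (possibly infinite) sum with the trace when $N_{\Sy}$ is unbounded: here I would use that $\tau_{\Sy*}$ is trace-norm continuous (noted in the text), so $\tau_{\Sy*}(\rho)=\sum_n P_n\rho P_n$ converges in trace norm, and pair this against the bounded operator $B$; equivalently, test first on $\rho=P[\varphi]$ and invoke $\sum_n\nos{P_n\varphi}=\nos{\varphi}$.

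For item (1): if $A\in\{P_n\}'$, then by the functional calculus each spectral projection $\Esf^A(X)$ also commutes with every $P_n$, so Proposition \ref{prop:invh} (equivalence of its items (1)--(4), applied to $B=\Esf^A(X)$) gives $\tau_{\Sy}(\Esf^A(X))=\Esf^A(X)$, whence
\[
p^A_{\rho}(X)=\tr{\Esf^A(X)\,\rho}=\tr{\tau_{\Sy}(\Esf^A(X))\,\rho}=\tr{\Esf^A(X)\,\tau_{\Sy*}(\rho)}=p^A_{\tau_{\Sy*}(\rho)}(X).
\]
For item (2): if $\rho\in\{P_n\}'$, then Proposition \ref{prop:invh} (items (5)--(8)) gives $\tau_{\Sy*}(\rho)=\rho$, and reading the right-hand side as the distribution of the relativised {\sc pom} $X\mapsto\tau_{\Sy}(\Esf^A(X))$ in the state $\rho$,
\[
p^A_{\rho}(X)=\tr{\Esf^A(X)\,\rho}=\tr{\Esf^A(X)\,\tau_{\Sy*}(\rho)}=\tr{\tau_{\Sy}(\Esf^A(X))\,\rho}=p^{\tau_{\Sy}(A)}_{\rho}(X).
\]

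I do not expect a genuine obstacle here; the computation is routine. The two points to get right are (a) justifying the interchange of the sum defining $\tau_{\Sy}/\tau_{\Sy*}$ with the trace in the unbounded case, as above, and (b) the bookkeeping convention that in (2) the symbol $p^{\tau_{\Sy}(A)}_\rho$ must denote the distribution of the dephased observable $\tau_{\Sy}\circ\Esf^A$, not the spectral measure of the operator $\tau_{\Sy}(A)$ — these differ in general (already for a Pauli $X$ on $\complex^2$ with $P_0,P_1$ the computational-basis projections), and only the former reading makes the statement correct.
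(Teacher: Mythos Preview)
Your argument is correct and is essentially the same as the paper's: the paper omits the proof of (1) as straightforward and derives (2) from the duality $\tr{\rho\,\tau_{\Sy}(\Esf^A(X))}=\tr{\tau_{\Sy*}(\rho)\,\Esf^A(X)}$, which is exactly the identity you establish and apply. Your observation (b) about the reading of $p^{\tau_{\Sy}(A)}_\rho$ is a genuine point the paper leaves implicit; the displayed duality in the paper's proof makes clear that $\tau_{\Sy}(\Esf^A(X))$, not the spectral measure of the operator $\tau_{\Sy}(A)$, is intended.
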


We omit the proof of (1) which is straightforward, and note that (2) follows from the duality
\begin{equation}\label{eq:spi}
\tr{\rho \tau_{\Sy}(\E^A(X))} = \tr{\tau_{\Sy}{_*}(\rho) \E^A(X)} = \tr{\tau_{\Sy}{_*}(\rho) \tau_{\Sy}(\E^A(X))};
\end{equation}
the final equality is not part of the proof, but shows that demanding states or observables to be invariant is equivalent to demanding the invariance of both.
The proposition holds also if appropriately rephrased for unsharp $\E$ in place of $A$. This shows that no invariant quantity (i.e., no {\em boda fide} observable) of $\Sy$ can distinguish between $\rho$ and its invariant ``counterpart", $\tau_{\Sy}{_*}(\rho)$. Operationally, then, the stipulation of invariance of observables partitions the state space into equivalence classes of indistinguishable states under the obvious equivalence relation. In the dual picture, keeping to invariant states means absolute and invariant quantities cannot be distinguished. Thus, stipulating that either
states or observables must be invariant constitutes a restriction to ordinary quantum theory,
and only if both states {\em and} observables are unrestricted do we have the usual textbook description.

\subsection{Position and Momentum}

The shift group on $\mathbb{R}$ is unitarily implemented in $L^2(\mathbb{R})$ by the operators
$U(x)=e^{ixP}$, with $P$ the momentum operator in one space dimension. The spectral measure $\E^Q$ of position $Q$
is singled out (among spectral measures) by the condition $\E^Q(X-x) = U(x)\E^Q(X)U(x)^*$. Unsharp positions also satisfy
such a covariance criterion, and thus, as non-invariant quantities, absolute positions (sharp or unsharp) do not represent observable quantities, reflecting the lack of absolute space. 

However, it may be possible to distinguish separate parts of a given quantum system, $\Sy$ and $\R$, and it may be possible to speak of the position of $\Sy$ relative to $\R$, and therefore to measure
the shift-invariant quantity $Q_{\Sy}\otimes \id - \id \otimes Q_{\R}$ or, indeed, any other shift-invariant quantity of $\Sy + \R$. Similar considerations apply to boosts.

{\remark{\rm
The non-compactness of (the shift group on) $\mathbb{R}$ rules out the existence of a normalisable Haar
measure playing the role of $\mu$ in Proposition \ref{prop:invh} (as also pointed out recently by Smith et al. \cite{smith}).}}

Therefore, the absolute position $Q_{\Sy}$
should be understood as representing the relative position $Q_{\Sy} - Q_{\R}$, in the situation that the $Q_{\R}$ system may be suppressed, or ``externalised" \cite{brs} from the description.
We now turn to a general analysis of the possibility of such an externalisation for arbitrary groups and relative quantities, before turning once more to typical examples.

%\section{Intermezzo: Representations and Irreducibility}

\section{Relativisation}

In this section we introduce a \emph{relativisation} mapping $\Y$, and prove various mathematical properties satisfied by it. We discuss the physical interpretation of $\Y$ as the making 
explicit of a reference system, and in the following section show that under high localisation
of the reference system with respect to an appropriate covariant quantity used to define $\Y$,
 the description of the system alone in terms of absolute, non-invariant quantities
can provide a statistically good account of the relative quantities. Conversely, 
it is shown that in the case of reference system delocalisation, the description afforded by system quantities is necessarily invariant, giving generally poor representation of relative observables.
The $\Y$ map generalises (by considering a {\sc pom} for the reference and more general groups) and makes mathematically precise (by avoiding improper states, and giving rigorous meaning to the integral) the ``$\$$" map of \cite{brs}. We also introduce the predual, $\Y_*$, which de-relativises states, replacing the
erroneous use of $\$$ also on states in \cite{brs}.

\subsection{Definition and Properties of the Map $\Y$}

\begin{definition}
Let $\hit=\his\otimes\hir$ with $\his,\hir$ finite or infinite dimensional Hilbert spaces hosting strongly continuous unitary representations $U_{\Sy}$ and $U_{\R}$ respectively of a locally compact metrisable group $G$ and let $\Fsf$ be a covariant \po acting in $\hir$. 
Then $\Y:\mathcal{L}(\his) \to \mathcal{L}(\hit)$ is defined by 
\begin{equation}\label{eq:ydef}
\Y(A)= \int_G U_{\Sy}(g)AU_{\Sy}(g)^* \otimes \Fsf (dg).
\end{equation}
\end{definition}
\noindent $\Y$ will be called a {\em relativisation map}, various properties of which will be given in
proposition \ref{prop2}. $\Y$ also acts on {\sc pom}s by $(\Y \circ \Esf)(X):=\Y(\Esf(X))$.

We must first give the definition of this integral. If $\hir$ is finite-dimensional and $G$ is compact and metrisable, 
there exists a unique positive $T$ such that $\Fsf(X) = \kappa \int_X U_{\R}(g)TU_{\R}(g)^* dg$ where $\kappa$ is chosen so that $\int_G U_{\R}(g)TU_{\R}(g)^* dg = \id$ \cite{davies}. Then \cite{mlb}, the integral
\eqref{eq:ydef} may be defined by 

\begin{equation}\label{eq:ydef2}
\Y(A)= \kappa \int_G U_{\Sy}(g)AU_{\Sy}(g)^* \otimes U_{\R}(g)TU_{\R}(g)^* dg.
\end{equation}

For the case that $\his$ and $\hir$ are of infinite dimension, more work is required (see also \cite{mlak}, section 5.4). 
Let $G$ be a locally compact second countable group. 
We first construct $\Y$ for a subset $\mathcal{A} \subset \mathcal{L}(\his)$
on which the action $\alpha_g$ is norm continuous, noting that this subset is 
weakly dense; see the discussion below.

For $e \in G$, for any $\epsilon >0$ there exists a neighbourhood $U$ such that 
$\Vert \alpha_g(A) - A\Vert \leq \epsilon \Vert A\Vert$. 
By translating this $U$ we obtain a covering $G= \cup_{g \in G} g U$. 
The Lindel\"{o}f property of $G$ allows us to obtain a countable cover 
$G= \cup_i U_i$ 
out of it. By taking their intersections we obtain a disjoint 
countable cover (a mesh) $G= \cup_n V_n$ 
such that for any $g, g' \in V_n$ it holds that
$\Vert\alpha_g(A) - \alpha_{g'}(A) \Vert 
\leq \epsilon \Vert A\Vert$. 
Let $\{\epsilon_{N}\}$ be a decreasing sequence converging to $0$. 
By employing the above construction we can construct a mesh $G=
\cup_n V^{N}_n$ for each 
$N$ so that $V^N_n = \cup_{m \in H^N_M} V^{M}_m$ holds for 
each $N\leq M$ with some proper $H^N_M \subset \mathbf{N}$. 
(That is, a mesh of $M$ is strictly finer than that of $N$.) 
We choose $g^N_n \in V^N_n$ for each $n$ (and $N$). 
Now we assume a covariant {\sc pom} $\E(\cdot)$ on the reference side to be 
 projection-valued. This suffices since any {\sc pom} $\F$ can be dilated to a {\sc pvm} by 
Naimark extension.  

We introduce for each $N$, the mapping 
\begin{eqnarray*}
\Y_N(A) := \sum_{n} \alpha_{g^N_n}(A) \otimes \E(V^N_n).
\end{eqnarray*}
It is easy to see that this is bounded. 
In fact, for an arbitrary normalised vector $|\psi\rangle \in \his \otimes \hir$, 
it holds that
\begin{eqnarray*}
\langle \psi | \Y_N(A)^* \Y_N(A)|\psi \rangle 
= \sum_n \mbox{tr}[\rho^N_n \alpha_{g^N_n}(A^*A) ]
p^N_n \leq \sum_n p^N_n \Vert A\Vert^2 = \Vert A\Vert^2, 
\end{eqnarray*}
where $p^N_n := \langle \psi | \id \otimes \E(V^N_n) |\psi\rangle$ 
and $\rho^N_n$ is a density operator uniquely determined by 
$\mbox{tr}[\rho^N_n X] = \langle \psi |X\otimes \E(V^N_n) |\psi \rangle$. 
\\
Now we show that the sequence $\{\Y_N(A)\}$ is a Cauchy sequence. 
As an arbitrary $A$ can be decomposed into two self-adjoint operators, 
it suffices to show the property for self-adjoint $A$. 
For $\epsilon>0$, we show that $\Vert \Y_N(A) - \Y_M(A) \Vert
\leq \epsilon \Vert A\Vert$ for $\epsilon_N, \epsilon_M \leq \epsilon$. 
Let $M>N$. For an arbitrary normalised $|\psi\rangle \in \his\otimes \hir$, 
we have 
\begin{eqnarray*}
|\langle \psi | \Y_M (A) - \Y_N(A) |\psi\rangle |
&= &\left|
\sum_n p^N_n \mbox{tr}[\rho^N_n \alpha_{g^N_n}(A)]
- \sum_m p^M_m \mbox{tr}[\rho^M_m \alpha_{g^M_m}(A)]
\right|
\\
&\leq &
\sum_n \left| 
p^N_n \mbox{tr}[\rho^N_n \alpha_{g^N_n}(A)]
- \sum_{m \in H^N_M} \mbox{tr}[\rho^M_m \alpha_{g^M_m}(A)]\right|
\\
&\leq &
\sum_n \sum_{m\in H^N_M}
p^M_m
\left|
\mbox{tr}[\rho^M_m (\alpha_{g^N_n}(A) - \alpha_{g^M_m}(A)]
\right|
\leq \epsilon \Vert A\Vert,
\end{eqnarray*}
where we used $\sum_{m\in H^N_M} p^M_m \rho^M_m = p^N_n \rho^N_n$.
Thus for $A=A^*$, we find $\Vert \Y_M(A) - \Y_N(A)\Vert \leq 
\epsilon \Vert A\Vert$.  
Thus we can define $\Y(A)$ by $\Y(A):=\lim_N \Y_N(A)$.  
\\
Note that this definition does not depend on 
the choice of covers. In fact one can see that for 
two covers $\{V^N_n\}$ and $\{\hat{V}^N_n\}$ for $\epsilon_N$, 
their intersections $\{V^N_n \cap V^N_m\}$ is also a cover. 
It is easy to see that the difference between the $\Y_N(A)$ constructed
with $\{V^N_n\}$ ($\{\hat{V}^N_n\}$) and $\{V^N_n \cap \hat{V}^N_m\}$ 
is smaller than $\epsilon \Vert A\Vert$. 

Now we discuss the density of $\mathcal{A}$ in $\mathcal{L}(\his)$.  
We assume that the action is weakly continuous---a natural assumption from a physical point of view. 
In addition the action is assumed to be implemented by 
a unitary operator $U(g)$. 
Then one can see that the representation 
$U(g)$ is strongly continuous. We can define, for a smooth function $f$ 
whose support is compact in $G$, $A(f) := \int \mu(dg) U(g)AU(g)^* 
f(g)$. For such $A(f)$ the action $\alpha_g$ is norm continuous. 
We may introduce $\mathcal{A}$ as a subalgebra generated 
by such elements. 
If we take $f$ to be localised around the unit of $G$, 
$A(f)$ gets close to $A$ with respect to the weak topology. 
Thus $\mathcal{A}$ is dense in $\mathcal{L}(\his)$. 

As a final remark, we show that this $\Y(A)$ can be defined on the 
whole $\mathcal{L}(\his)$ for Abelian $G$ implemented by a true 
unitary representation. 
As each $U(g)$ commutes with each $U(g^{\prime})$, their generators can be 
diagonalised simultaneously. For simplicity we treat here only $G=\mathbb{R}$ 
and write its generator as $K = \int_{\mathbb{R}} k \P(dk)$. 
Now we introduce $P_E= \int_{|k|\leq E} \P(dk)$. Then 
one can see that for any $A \in \mathcal{L}(\his)$, 
$P_E A P_E$ is a ``smooth" element (i.e., 
$\alpha_g(P_E AP_E)$ is norm continuous)
and 
$\Y(P_E A P_E)$ is defined. 
In addition its norm is bounded as 
$\Vert \Y(P_E A P_E) \Vert \leq \Vert A \Vert$. 
Now for arbitrary vectors $|\psi\rangle, |\phi \rangle 
\in \his \otimes \hir$, we 
define a sesquilinear form $Q(|\phi\rangle, |\psi\rangle )$  
by $\lim_{E\to \infty} 
\langle \phi | \Y(P_E A P_E) |\psi\rangle$. 
We first confirm that this is well-defined. 
%It suffices to show it for a positive $A$ and 
%$|\phi\rangle = |\psi\rangle$. 
For any $\epsilon>0$, there exists $E_0$ such that 
$\Vert (\id - P_{E_0})|\psi\rangle \Vert, \Vert (\id - P_{E_0})|\phi \rangle \Vert 
\leq \epsilon$. 
Then for any $E\geq E' \geq E_0$, 
%For any $E\geq E'$, 
we have
%\begin{eqnarray*}
%\langle \psi | \Y(E'AE')|\psi \rangle 
%= \langle \psi | E'\Y(EAE)E' |\psi\rangle.  
%\end{eqnarray*}
\begin{eqnarray*}
\left|\langle \psi | \Y(P_EA P_E) |\phi \rangle 
- \langle \psi | \Y(P_{E'} A P_{E'}) | \phi \rangle \right|
&=& \left| \langle \psi | \Y(P_E A P_E) |\phi \rangle 
- \langle \psi | P_{E'} \Y (P_E A P_E ) P_{E'} | \phi \rangle \right|
\\
&\leq& (2\epsilon + \epsilon^2 )\Vert A \Vert. 
\end{eqnarray*}
Thus this sequence is Cauchy. 
On the other hand, each quantity is 
bounded by $\Vert A\Vert \Vert |\psi \rangle \Vert 
\Vert |\phi \rangle \Vert$. 
Thus it converges.  
%
%Thus $\langle \psi | \Y(EAE) |\psi\rangle$ is 
%monotonically increasing with respect to $E$. 
%Taking account that this is bounded by $\Vert A\Vert$, 
%there exists $\lim_E \langle \psi |\Y(EAE) |\psi \rangle 
%= Q(|\psi \rangle, |\psi \rangle )$. 
It is also easy to see that 
this sesquilinear form is bounded as 
$Q(|\psi \rangle, |\phi \rangle ) \leq \Vert A\Vert 
\Vert |\psi \rangle \Vert \Vert |\phi \rangle \Vert$. 
Thus there exists an operator $\Y(A)$ satisfying 
$\langle \psi |\Y(A) |\phi \rangle = Q(|\psi \rangle, 
|\phi \rangle)$. Moreover, it is easy to see that 
such defined $\Y(A)$ is bounded as $\Vert \Y(A) \Vert 
\leq \Vert A\Vert$.

\if{
\blue{For reference: 
$\Y(A)$ for $A \in \mathcal{L}(\his)$
is defined in the case of $G=S^1$ as follows. 
(For simplicity we assume that $N_{\R}$ is nondegenerate and $U_{\Sy}(\theta)$ 
implements weakly continuous action of $S^1$.)
We can write $|\psi_1\rangle , |\psi_2\rangle \in \his \otimes \hir$ as
$|\psi_1\rangle = \sum_n |\xi^n_1\rangle \otimes |n\rangle$
and $|\psi_2 \rangle = \sum_m  
|\xi^m_2 \rangle \otimes |m\rangle$. 
They satisfy
\begin{eqnarray*}
\Vert |\psi_1\rangle \Vert = \sqrt{\sum_n \Vert \xi^n_1\Vert}
\\
\Vert |\psi_2\rangle\Vert = \sqrt{\sum_m \Vert \xi^m_2\Vert}.
\end{eqnarray*}
Then a sesquilnear form $\langle \psi_1 | \Y(A) |\psi_2\rangle$
is formally introduced as, 
\begin{eqnarray*}
\langle \psi_1 | \Y(A) |\psi_2\rangle 
:= \sum_{nm} \int \langle \xi^n_1  |U_{\Sy}(\theta) AU_{\Sy}(\theta)^*|\xi^m_2 \rangle 
\langle n | \F(d\theta) |m\rangle.
\end{eqnarray*} 
The integral is defined, for instance, by usual Riemann sum. 
In fact, complex measure $\langle n| \F(\Delta)|m\rangle$ is 
bounded as, 
\begin{eqnarray*}
|\langle n |\F(\Delta) |m\rangle|
\leq \langle n | \F(\Delta)|n \rangle^{1/2}
\langle m | \F(\Delta)|m \rangle^{1/2}
= \frac{|\Delta|}{2\pi}, 
\end{eqnarray*}
and $ \langle \xi^n_1  |U_{\Sy}(\theta) AU_{\Sy}(\theta)^*|\xi^m_2 \rangle $
is continuous. 
In addition, each integral is bounded as
\begin{eqnarray*}
\left| \int \langle \xi^n_1  |U_{\Sy}(\theta) AU_{\Sy}(\theta)^*|\xi^m_2 \rangle 
\langle n | \F(d\theta) |m\rangle \right|
\leq \Vert A\Vert \Vert \xi^n_1\Vert \Vert \xi^m_2\Vert. 
\end{eqnarray*}
}
}
\fi

%\blue{For the case that $\his$ and $\hir$ are of infinite dimension, we assume $G$ is metrisable and complete. Then the integral
%$\int_G T(g)\F(dg)$ may be constructed as in \cite[Prop.~1]{holevochoi} for $T:G \to \mathcal{L}_1(\his)$ a Borel function. Then, $\ip{\Psi}{\Y(A)\Psi}$ takes the form (see the proof of statement (2), below) $\ip{\Psi}{\Y(A)\Psi} = \tr { \int_G \sum_m \ket {\xi _m(g)}\bra{\xi _{m}(g)} \Fsf  (dg)}$ and $g \mapsto T(g)=\ket {\xi _m(g)}\bra{\xi _{m}(g)}$ is \red{(check)} a Borel function and \red{(check)} $tr[T(g)]\le \|A\|\|\psi\|^2$}.
%\red{Make a note to decide: should we consistently use caps $\Psi$ for tensor product vector states?}

\if{
\blue{For reference: 
$\Y(A)$ for $A \in \mathcal{L}(\his)$
is defined in the case of $G=S^1$ as follows. 
(For simplicity we assume that $N_{\R}$ is nondegenerate and $U_{\Sy}(\theta)$ 
implements weakly continuous action of $S^1$.)
We can write $|\psi_1\rangle , |\psi_2\rangle \in \his \otimes \hir$ as
$|\psi_1\rangle = \sum_n |\xi^n_1\rangle \otimes |n\rangle$
and $|\psi_2 \rangle = \sum_m  
|\xi^m_2 \rangle \otimes |m\rangle$. 
They satisfy
\begin{eqnarray*}
\Vert |\psi_1\rangle \Vert = \sqrt{\sum_n \Vert \xi^n_1\Vert}
\\
\Vert |\psi_2\rangle\Vert = \sqrt{\sum_m \Vert \xi^m_2\Vert}.
\end{eqnarray*}
Then a sesquilnear form $\langle \psi_1 | \Y(A) |\psi_2\rangle$
is formally introduced as, 
\begin{eqnarray*}
\langle \psi_1 | \Y(A) |\psi_2\rangle 
:= \sum_{nm} \int \langle \xi^n_1  |U_{\Sy}(\theta) AU_{\Sy}(\theta)^*|\xi^m_2 \rangle 
\langle n | \F(d\theta) |m\rangle.
\end{eqnarray*} 
The integral is defined, for instance, by usual Riemann sum. 
In fact, complex measure $\langle n| \F(\Delta)|m\rangle$ is 
bounded as, 
\begin{eqnarray*}
|\langle n |\F(\Delta) |m\rangle|
\leq \langle n | \F(\Delta)|n \rangle^{1/2}
\langle m | \F(\Delta)|m \rangle^{1/2}
= \frac{|\Delta|}{2\pi}, 
\end{eqnarray*}
and $ \langle \xi^n_1  |U_{\Sy}(\theta) AU_{\Sy}(\theta)^*|\xi^m_2 \rangle $
is continuous. 
In addition, each integral is bounded as
\begin{eqnarray*}
\left| \int \langle \xi^n_1  |U_{\Sy}(\theta) AU_{\Sy}(\theta)^*|\xi^m_2 \rangle 
\langle n | \F(d\theta) |m\rangle \right|
\leq \Vert A\Vert \Vert \xi^n_1\Vert \Vert \xi^m_2\Vert. 
\end{eqnarray*}
}
}
\fi

\if
{\blue{It seems that we have three options for rigorously constructing $\Y$ in the infinite dimensional case.
\begin{enumerate}
\item Base it on a construction of Holevo \cite[Prop.~1]{holevochoi}, assuming $G$ is metrisable and complete (both reasonable!). For any finite measure $\mu$ on $G$ and bounded $A$ the map (Banach space valued function)
$g \mapsto U(g)AU(g)^*$ is Bochner integrable since $\no{U(g)AU(g)^*}\leq \no{A}$, and so 
$\int_G \no{U(g)AU(g)^*} d \mu < \infty$ which, according to wikipedia, is necessary and sufficient for Bochner integrability (I need to find a proper reference). \blue{Then, it seems to me, that we can carry out Holevo's construction (which I'm afraid I don't yet fully understand) replacing all occurrences of trace norm with operator norm ($\no{A} \leq \no{A}_{\rm tr}$). The only problem seems to be that he uses separability of the convex set of density matrices with respect to the trace norm. For an infinite Hilbert space, I don't think we have separability for the uniform topology on $B(\mathcal{H})$ ($B(H)$ is separable with respect to SOT and WOT), so I am still unsure how to proceed.}
\item Use duality. The map $G \to \mathcal{S}(\his)$; $g \mapsto U^*(g)\rho U(g)$ is Bochner integrable and the integral $\int_G U^*(g)\rho U(g) \tr{\sigma \Fsf(dg)}$ exists, I think. It is a Bochner integral. But this is nothing but the predual of $\Y$, as we have it, so perhaps we can define $\Y$ as the dual of this map?
\item Define with respect to a {\sc pvm}, and project.
\end{enumerate}
}}
\fi

 %defined by $\Y(\Esf)(X) := \Y \left(\Esf (X) \right)$.

\begin{proposition}\label{prop2}
$\Y : \lhs \to \mathcal{L}(\hit)$ has the following properties.
\begin{enumerate}
\item \label{it:lua} $\Y$ is linear, unital \emph{(}$\Y(\id _{\hs}) = \id _{\mathcal{H}_T}$\emph{)}, and preserves adjoints \emph{(}$ \Y(A^*) = \Y (A)^*$ for any $A \in \lhs$\emph{)}.
\item $\Y$ \label{it:phb} is positive and hence bounded. 
\item $\Y$ is completely positive.
\item $\Y$ is the dual  of a bounded linear map $\Y_*: \mathcal{L}_1 (\hit) \to \mathcal{L}_1 (\his)$ defined
by ${\rm tr} \left[ R \Y (A) \right]={\rm tr} \left[\Y_* (R) A \right]$ for all 
$A \in \mathcal{L}(\his),~R\in \mathcal{L}_1 (\hit)$. In particular, $\Y$ is normal.
\item \label{item:5} If $E$ is an effect, so is $\Y(E)$. With $\Esf: \mathcal{B}(G) \to \lhs$, $\Esf \mapsto \Y \circ \Esf\equiv \Esf^{(\Y)}$ defines a map from $\mathcal{L(H_S)}$-valued {\sc POM}s to %:\mathcal{B}(\Omega) \to 
 $\mathcal{L}(\hit)$-valued POMs.
%\item[\rm (vi)] If $\Esf$ has the norm-1 property then $\Y$ is injective. 
%\item[\rm (vii)] If $\Y$ is injective, $\ran (\Y_*)$ is weakly dense in $\mathcal{T(H_S)}$. 
\item If $\Fsf$ is projection-valued, $\Y$ is multiplicative, i.e., $\Y (AB) = \Y (A) \Y (B)$ for all $A,B \in \lhs$,---and is thus an  algebraic  $^*$-homomorphism.
\item With $\mathcal{U}(g) = U_{\Sy}(g)\otimes U_{\R}(g)$,
\begin{equation}
\mathcal{U}(g)\Y (A) \mathcal{U}(g)^* = \Y (A)
\quad\text{for all }A \in \mathcal{L}(\mathcal{H}_S),\quad g \in G.
\end{equation}
 If $U_{\Sy}(g)AU_{\Sy}(g)^*=A$ for all $g \in G$, then
$\Y(A) = A \otimes \id$.  
\end{enumerate}
\end{proposition}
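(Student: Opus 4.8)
\emph{Strategy.} I would verify the seven items working throughout with the finite‑sum approximants $\Y_N(A)=\sum_n \alpha_{g^N_n}(A)\otimes\E(V^N_n)$ from the construction above (here $\alpha_g(\cdot)=U_\Sy(g)(\cdot)U_\Sy(g)^*$ and $\E$ is the Naimark {\sc pvm} of $\Fsf$, with $\E=\Fsf$ when $\Fsf$ is already projection‑valued), together with the density of the norm‑continuous subalgebra $\mathcal{A}\subset\lhs$ and, for abelian $G$, the spectral cut‑offs $P_E$. Each identity is first checked on these approximants — where it is an elementary manipulation of finite sums of operators — and then promoted to $\Y$ by passing to the limit ($\Y_N(A)\to\Y(A)$ in operator norm on $\mathcal{A}$, or weakly via $\Y(P_EAP_E)\to\Y(A)$), using that the positive cone, the unit interval, and the $^*$‑homomorphism relations are closed in the relevant topologies, and that the bound $\no{\Y(A)}\le\no{A}$ already in the construction propagates each property from $\mathcal{A}$ to all of $\lhs$.

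\emph{Items (1), (2), (5).} Linearity is built into \eqref{eq:ydef}. Unitality is $\Y(\id_{\hs})=\int_G\id\otimes\Fsf(dg)=\id\otimes\Fsf(G)=\id_{\hit}$, and $\langle\psi|\Y(A^*)|\phi\rangle=\overline{\langle\phi|\Y(A)|\psi\rangle}$ since each summand $U_\Sy(g)A^*U_\Sy(g)^*\otimes\Fsf(dg)$ is the adjoint of $U_\Sy(g)AU_\Sy(g)^*\otimes\Fsf(dg)$ (the $\Fsf(dg)$ being self‑adjoint). For (2): if $A\ge0$ then every approximating summand $\alpha_{g^N_n}(A)\otimes\E(V^N_n)$ is a tensor product of positive operators, hence positive, so $\Y_N(A)\ge0$ and $\Y(A)=\lim_N\Y_N(A)\ge0$; in the abelian extension $P_EAP_E\ge0$ and weak limits of positive operators are positive. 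Positivity together with unitality on the unital $C^*$‑algebra $\lhs$ gives boundedness with $\no{\Y}=1$. Item (5) is then immediate: $0\le E\le\id$ gives $\Y(E)\ge0$ and $\id-\Y(E)=\Y(\id-E)\ge0$, so $\Y(E)$ is an effect, while $X\mapsto\Y(\Esf(X))$ is normalised by $\Y(\id)=\id$ and $\sigma$‑additive by the normality of item (4) applied to the increasing net of partial sums.

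\emph{Items (6) and (3).} Assume $\Fsf$ (hence $\E$) projection‑valued. For compatible refining meshes ($M>N$), $\E(V^N_n)\E(V^M_m)=\E(V^N_n\cap V^M_m)$ equals $\E(V^M_m)$ when $m\in H^N_M$ and vanishes otherwise, so $\Y_N(A)\Y_M(B)=\sum_n\sum_{m\in H^N_M}\alpha_{g^N_n}(A)\alpha_{g^M_m}(B)\otimes\E(V^M_m)$; since $g^N_n,g^M_m$ lie in the common cell $V^N_n$ of oscillation $\le\epsilon_N$, this is within $O(\epsilon_N)\no{A}\no{B}$ of $\sum_m\alpha_{g^M_m}(AB)\otimes\E(V^M_m)$, which tends to $\Y(AB)$; hence $\Y(A)\Y(B)=\Y(AB)$, and with (1) this makes $\Y$ a unital $^*$‑homomorphism. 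Item (3) follows: by Naimark $\Fsf(\cdot)=W^*\E(\cdot)W$ for an isometry $W:\hir\to\hik$, so $\Y(A)=(\id_{\hs}\otimes W)^*\,\Y_{\E}(A)\,(\id_{\hs}\otimes W)$ with $\Y_{\E}:\lhs\to\mathcal{L}(\his\otimes\hik)$ the relativisation built from the {\sc pvm} $\E$; $\Y_{\E}$ is a $^*$‑homomorphism by the preceding sentence, hence completely positive, and $B\mapsto(\id\otimes W)^*B(\id\otimes W)$ is completely positive, so $\Y$ is completely positive as a composition. (Alternatively, writing $\alpha_g(A)\otimes\Fsf(X)=V_{g,X}(A\otimes\id)V_{g,X}^*$ with $V_{g,X}=U_\Sy(g)\otimes\Fsf(X)^{1/2}$ exhibits each $\Y_N$ as a finite sum of completely positive maps.)

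\emph{Items (4) and (7), and the main obstacle.} For $R\in\mathcal{L}_1(\hit)$ set $\Y_*(R):=\int_G U_\Sy(g)^*\,{\rm tr}_{\hir}\!\big[R(\id\otimes\Fsf(dg))\big]\,U_\Sy(g)$; for $R\ge0$ the $\mathcal{L}_1(\his)$‑valued set function $X\mapsto{\rm tr}_{\hir}[R(\id\otimes\Fsf(X))]$ is positive‑operator‑valued with total variation $\tr{R}$, and conjugation by unitaries is trace‑norm isometric, so the integral converges in trace norm; splitting a general $R$ into four positive pieces yields a bounded $\Y_*$. A Fubini interchange (justified on the approximants) gives $\tr{R\,\Y(A)}=\tr{\Y_*(R)\,A}$ for all $A,R$, so $\Y=(\Y_*)^*$ is the Banach‑space adjoint of a bounded map on the predual and hence $\sigma$‑weakly continuous, i.e.\ normal. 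For (7), pulling $\mathcal{U}(g)=U_\Sy(g)\otimes U_\R(g)$ through the integral and using $U_\Sy(g)U_\Sy(h)=U_\Sy(gh)$ and the covariance $U_\R(g)\Fsf(X)U_\R(g)^*=\Fsf(gX)$ turns $\mathcal{U}(g)\Y(A)\mathcal{U}(g)^*$ into $\int_G U_\Sy(gh)AU_\Sy(gh)^*\otimes\Fsf(g\,dh)$, equal to $\Y(A)$ after the substitution $h\mapsto gh$ (invariance of Haar measure, i.e.\ a relabelling of the mesh, using the cover‑independence of $\Y$); and if $U_\Sy(g)AU_\Sy(g)^*=A$ for all $g$, the first tensor factor is constant and $\Y(A)=A\otimes\Fsf(G)=A\otimes\id$. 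The genuine difficulty is not any one of these one‑line formal identities but the bookkeeping: phrasing the adjoint in (1), the Fubini in (4), the $\E(X)\E(Y)=\E(X\cap Y)$ collapse in (6) and the change of variables in (7) consistently in terms of the finite‑sum approximants (or the sesquilinear‑form definition), controlling the $O(\epsilon_N)$ error from replacing a $g^N_n$‑translate with a $g^M_m$‑translate within a mesh cell, and verifying that the density of $\mathcal{A}$ and the cut‑offs $P_E$ really carry every property over to all of $\lhs$.
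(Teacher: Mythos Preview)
Your proposal is correct and establishes all seven items, but it takes a genuinely different route from the paper on the key points.

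For positivity (2) and complete positivity (3), the paper does \emph{not} go through the approximants at all: it writes $A=B^2$, expands $\psi$ in an orthonormal basis, and manipulates the integral directly to obtain $\langle\psi|\Y(A)\psi\rangle=\sum_m\int_G\langle\xi_m(g)|\Fsf(dg)\,\xi_m(g)\rangle$, which is manifestly nonnegative; complete positivity is the same computation tensored with $M_n(\mathbb{C})$. Your argument instead observes that each $\Y_N$ is a finite sum of maps $A\mapsto V_{g,X}(A\otimes\id)V_{g,X}^*$, hence completely positive, and---more elegantly---that via Naimark $\Y=(\id\otimes W)^*\Y_{\E}(\cdot)(\id\otimes W)$ with $\Y_{\E}$ a $^*$-homomorphism, giving complete positivity in one stroke of Stinespring. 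Your route is cleaner and reuses item~(6), whereas the paper's basis computation is more elementary but somewhat formal about the integral manipulations.

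For multiplicativity (6), the paper works directly with the double integral $\int\!\!\int f_{\varphi,\varphi'}(g,g')\,\langle\phi|\Fsf(dg)\Fsf(dg')\phi'\rangle$ and observes that the product measure $X\times Y\mapsto\Fsf(X)\Fsf(Y)$ is supported on the diagonal when $\Fsf$ is projection-valued, collapsing the double integral to $\Y(AB)$. You instead push the computation through the mesh approximants, using $\E(V^N_n)\E(V^M_m)=\E(V^N_n\cap V^M_m)$ and an $O(\epsilon_N)$ estimate. Both are fine; the paper's is shorter, yours is more faithful to the construction of $\Y$ actually given.

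For (4) the paper simply cites Davies, while you write down $\Y_*$ explicitly and verify duality. For (7) the paper's computation is essentially identical to yours (covariance of $\Fsf$ plus change of variable), though note that the ``invariance of Haar measure'' phrasing is a slight red herring---what is really used is covariance of $\Fsf$ and the cover-independence of the construction, as you correctly add in parentheses.
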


\begin{proof}
\begin{enumerate}
\item  These properties follow immediately from the definition.
\item \label{it:pos} To prove positivity, consider $\ip{\psi}{\Y (A) \psi}$ for $\psi \in \hit$ and assume that $A$ is positive, therefore $A=B^2$ for a (unique) $B \geq 0$. Let $\{\varphi _i \otimes \phi _j \}$ be an orthonormal
basis in $\mathcal{H}_{\Sy} \otimes \mathcal{H}_{\mathcal{R}}$, and $\psi = \sum_{i,j}c_{ij}\varphi _i \otimes \phi _j$. Let $\gamma_{g}(A) \equiv U_{\Sy}(g)AU_{\Sy}(g)^*$; we then have 
\begin{equation}\label{yenpos}
\ip{\psi}{\Y (A) \psi} = \sum_{i,j,k,l}\bar{c}_{ij} c_{kl} \int_G \ip{\varphi _i}{\gamma_{g}(A) \varphi _k}\ip{\phi_j}{ \Fsf (dg) \phi_l}.
\end{equation}
Writing $\gamma_{g} (A) = U_{\Sy}(g)B^2  U_{\Sy}(g)^*$ and $B^2 = \sum_m B \ket{\varphi_m} \bra{\varphi_m}B$, the expression for $\ip{\psi}{\Y (A) \psi}$ becomes
\begin{equation}\label{yenpos2}
\ip{\psi}{\Y (A) \psi} = \sum_m \int_G \ip{\xi _{m}(g)}{ \Fsf (dg) \xi_m (g)},
\end{equation}
where we have defined $\xi _{m}(g) := \sum_{k,l}c_{kl}\ip{\varphi_m}{BU_{\Sy}(g)^* \varphi _k} \phi _l$. The right hand side of the expression \eqref{yenpos2} is manifestly positive. Any positive (linear) map between $C^*$-algebras with unit is automatically bounded---see \cite{con1}, Prop. 33.4. Now we note that the right hand side of \eqref{yenpos2} can be written
\begin{equation}
\tr { \int_G \sum_m \ket {\xi _m(g)}\bra{\xi _{m}(g)} \Fsf  (dg)}.
\end{equation}

%\red{We observe that $\| \xi _{m}(g) \|^2=|\langle U_\Sy(g)B\varphi_m|\psi\rangle|^2<\infty$ and that
%$\sum_m\| \xi _{m}(g) \|^2=\|BU_\Sy(g)\psi\|^2 \leq \|A\|\|\psi\|^2$. We (should be able to) use this to show that the RHS of (16) is actually bounded by $\|A\|\|\psi\|^2$, which would establish the existence of $\Y$ for bounded positive $A$ and hence for all bounded operators.}

\item In order to show that $\Y \otimes \id_n : \lhs \otimes M_n(\mathbb{C}) \to \lhs \otimes \lhr \otimes M_n(\mathbb{C})$ is positive we introduce an orthonormal basis $\{ \eta _k\} \subset \mathbb{C}^n$. Then the proof runs along essentially the same lines as (\ref{it:pos}) above, and one finds
 that for $\Psi = \sum_{i,j,k}c_{ijk}\varphi_i \otimes \phi_j \otimes \eta_k$, and repeating
 the argument, letting $A=B^2 = \sum_pB \ket{\varphi _p}\bra{\varphi_p}B$, 
 $\ip{\Psi}{\Y(A) \otimes \id \Psi}$ is given as
 \begin{equation}
 \ip{\Psi}{\Y(A) \otimes \id \Psi} = \sum_{i,j,k,l,m,n}\int_G \ip{\varphi_i}{U(g)B^2U(g)^* \varphi_l}\ip{\phi_j}{\Fsf (dg)\phi_m}\ip{\eta_k}{\eta_n}
 \end{equation}
 which may be written as 
 \begin{equation}
 \ip{\Psi}{\Y(A) \otimes \id \Psi} = \sum_p \int_G \ip{\zeta_p(g)}{\Fsf  (dg) \zeta_p(g)},
 \end{equation}
 where $\zeta_g(p) := \sum_{l,m,n} \ip{\varphi_p}{BU(g)^*\varphi_l} \phi_m \otimes \eta_n$.
 Thus, by the same argument as in (\ref{it:pos}), $\Y$ is completely positive.
 \item The normality of $\Y$ follows from $\Y$ being the dual of the positive linear map $\Y_* :\mathcal{S}(\hit) \to \mathcal{S}(\his)$ \cite[Lemma 2.2.]{davies}.
\item The effect property $0 \leq\Y(E) \leq 1$ follows immediately from \ref{it:lua} and \ref{it:phb}. That $\Y$ has the $\sigma$-additivity property of a {\sc pom} follows from the normality of $\Y$. 
\item  Let $\Fsf$ be projection valued. For $A,B \in \lhs$, let $f_{\varphi, \varphi ^{\prime}}(g, g^{\prime})$  denote the bounded complex function
$\ip{\varphi}{\gamma_g(A)\gamma_{g^{\prime}}(B)\varphi ^{\prime}}$: 
\begin{equation}\label{eq:ypm}
\ip{\varphi \otimes \phi}{\Y(A)\Y(B)\varphi ^{\prime} \otimes \phi^{\prime}}=\int \int f_{\varphi, \varphi ^{\prime}}(g,g^{\prime}) \ip{\phi}{\Fsf(d g) \Fsf (d g ^{\prime}) \phi ^{\prime}}.
\end{equation}
The product measure defined by $X\times Y \mapsto \F(X)\F(Y)$ is zero whenever $X \cap Y = \emptyset$, and hence the right hand side of \eqref{eq:ypm} reduces to
\begin{equation}
\int \ip{\varphi}{U_{\Sy}(g) (AB) U_{\Sy}(g)^* \varphi ^{\prime}}\ip{\phi}{\Fsf (d g) \phi^{\prime}},
\end{equation}
which is the expression for $\ip{\varphi \otimes \phi}{\Y(AB) \varphi^{\prime}\otimes \phi ^{\prime}}$.
\item 
We compute:
\begin{align*}\label{eqn:yensym}
\mathcal{U}(g)\Y (A)\mathcal{U}(g)^* &= \int_G {U}_{\Sy}(gg^{\prime})A {U}_{\Sy}(gg^{\prime})^* \Fsf (d(gg^{\prime})) = \Y(A)
\end{align*} 
Therefore to each bounded self-adjoint operator of the system the map $\Y$ assigns a bounded self-adjoint 
operator (see below) $\Y (A)$ acting in $\hit$ which is invariant under 
the action of $\mathcal{U}$.
\end{enumerate} 
\end{proof}

{\remark\rm  From (\ref{item:5}) we observe that $\Y$ not only relativises self-adjoint operators, but also their spectral measures, and more generally any {\sc pom}. An example of the latter, which we will encounter in subsection \ref{subsubsec:rpo}, is the relativisation of a covariant phase {\sc pom}, resulting in a relative phase observable.}

\subsection{Examples}
In this subsection we give examples of familiar relative quantities obtained under $\Y$ to demonstrate that $\Y$ functions as expected. Its main utility, however, lies in the fact that 
it relativises arbitrary quantities.

\subsubsection{Position and Momentum}
Consider the spectral measure $\E^{Q_{\Sy}}$ of the position $Q_{\Sy}$, $\E^{Q_{\R}}$ of $Q_{\R}$, and unitary shifts $U_{\Sy}(x)=e^{ixP_{\Sy}}$ and $U_{\R}(x)=e^{ixP_{\R}}$. Then,
\begin{equation}
(\Y \circ \E^{Q_{\Sy}})(X) = \int_{\mathbb{R}}e^{ixP_{\Sy}}\E^{Q_{\Sy}}(X)e^{-ixP_{\Sy}}\otimes \E^{Q_{\R}}(dx),
\end{equation}
which may be written as 
\begin{equation}
(\Y \circ \E^{Q_{\Sy}})(X) = \int_{\mathbb{R}} \int_{\mathbb{R}} \chi_X(x^{\prime}-x)\E^{Q_{\Sy}}(dx^{\prime}) \otimes \E^{Q_{\R}}(dx).
\end{equation}
This is easily recognised as the spectral measure of the relative position 
\begin{equation}
Q_{\Sy} - Q_{\R} = \int_{\mathbb{R}} \int_{\mathbb{R}} (x-x^{\prime})\E^{Q_{\Sy}}(dx^{\prime}) \otimes \E^{Q_{\R}}(dx).
\end{equation}
Therefore, one may formally write $\Y(Q_{\Sy}) = Q_{\Sy}-Q_{\R}$.

Under the given relativisation, the spectral measure $\E^{P_{\Sy}}$ of the momentum $P_{\Sy}$ takes the simple form $(\Y\circ \E^{P_{\Sy}})(X) = \E^{P_{\Sy}}(X)\otimes \id$, and again we write $\Y(P_{\Sy}) = P_{\Sy} \otimes \id$. 

Relativisation of momentum under boosts follows an identical argument; we find that (using the same symbol $\Y$ for relativising with respect to boosts)

\begin{equation}
(\Y \circ \E^{P_{\Sy}})(Y) = \int_{\mathbb{R}} e^{iyQ_{\Sy}}\E^{P_{\Sy}}(X)^{-iyQ_{\Sy}}\otimes \\E^{P_{\R}}(dy),
\end{equation}
yielding 
\begin{equation}
(\Y \circ \E^{P_{\Sy}})(Y)=\int_{\mathbb{R}} \int_{\mathbb{R}} \chi_Y(y^{\prime}-y)\E^{P_{\Sy}}(dy^{\prime}) \otimes \E^{P_{\R}}(dy),
\end{equation}
which is the spectral measure of $P_{\Sy}-P_{\R}$. Under this relativisation, $Q \mapsto Q \otimes \id$. 
\begin{remark}\rm
With respect to the boost part of the Galilei group, the momentum relativisation assumed $\Sy$ and $\R$ are of equal mass. For systems with different mass, i.e., $m_{\Sy}$ and $m_{\R}$, $\Y$ must be appropriately redefined. 
\end{remark}

We note also the possibility of unsharp relativisations, that is, allowing for one or both of the spectral measures $\E^{Q_{\Sy}}$ and $\E^{Q_{\R}}$ to be replaced by unsharp or smeared (covariant) positions. The same applies for unsharp momenta.

As an example, which we revisit in section \ref{sec:lad}, we may take a smeared position $(\chi_X *e)(Q_{\R}) \equiv \E^{Q_\R}_{e}(X)$ for the reference, yielding

\begin{equation}
(\Y \circ \E)(X) = \int_{\mathbb{R}} e^{iPx}\Esf(X)e^{-iPx} \otimes \E^{Q_\R}_{e}(dx).
\end{equation}

%\subsubsection{Covariant Phase Space Reference}

\if
Consider now the map $\Y:\lhs \to \lh$ defined by

\begin{equation}
\Y(A) = \frac{1}{2 \pi \hbar}\int W_{\Sy}(q,p) A W_{\Sy}(q,p)^* \otimes W_{\R}(q,p) T W_{\R}(q,p)^* dq dp,
\end{equation}
which now represents a quantity {\it invariant} with respect to phase space translations. 
The uncertainty relation for joint localisability implies that covariant phase space relativisations are necessarily unsharp.

\fi

\subsubsection{Angle}

Here we consider two systems with $\Theta_{\Sy}, \Theta_{\R}$ being their 
angle operators conjugate to the $z$-components of their angular momenta. For the covariant 
{\sc pvm} of $\R$ we thus choose $\Fsf=\Esf^{\Theta_{\R}}$. Then we obtain:
\begin{align}
& \Y(\Theta_{\Sy}) = \int _0 ^{2 \pi} U(\theta') \Theta_{\Sy} U(\theta ^{\prime})^* \otimes \Esf^{\Theta_{\R}}d (\theta^{\prime})\\
& = \int _0 ^{2 \pi}U(\theta^{\prime}) \left[\int _{0}^{2 \pi} \theta  \Esf^{\Theta_{\Sy}}(d \theta)\right] U(\theta^{\prime})^{*} \otimes \Esf^{\Theta_{\R}}(d \theta ^{\prime}).
\end{align}
Exploiting the covariance of the spectral measure $\Esf^{\Theta_{\Sy}}$ and performing the substitution $\theta+\theta^{\prime} \equiv \theta^{{\prime} {\prime}}$ we find the above equal to
\begin{align}
&\int _{0}^{2 \pi} \left[\int_{0} ^{2 \pi} (\theta ^{\prime \prime} - \theta^{\prime})  \Esf^{\Theta_{\Sy}} (d{\theta ^{\prime \prime}}) \right] \otimes \Esf^{\Theta_{\R}}(d{\theta ^{\prime}})\\
& = \int _{0}^{2 \pi} \left[\Theta_{\Sy} - \theta^{\prime} \id \right] \otimes \Esf^{\Theta_{\R}}(d{\theta^{\prime}})\\
& = \Theta_{\Sy} \otimes \id - \id \otimes \Theta_{\R}.
\end{align}
Therefore, $\Y(\Theta_{\Sy}) =  \Theta_{\Sy} - \Theta_{\R}$.

\subsubsection{Phase}\label{subsubsec:rpo}

We may use $\Y$ to construct a relative phase observable as given in \cite{jpthesis, hlp1}. Let $\Fsf$ be a covariant phase {\sc pom} as defined in \eqref{phase}, and denote by $\Fsf^{\R}$ a covariant phase for $\R$.
%: \red{[no need to repeat formula -- you probably mean here to have the phase matrix $(c'_{mn})$?]}
%\begin{equation}
%\Fsf (X) = \sum_{n,m=0}^{\infty} c_{n,m} \frac{1}{2 \pi} \int _{X}e^{i(n-m) \theta ^{\prime}} \ket{n} \bra{m} d \theta ^{\prime}.
%\end{equation}
Then $ \Y \circ\Fsf$ is given by:
\begin{equation}\label{eq:relphs}
\Y \bigl[\Fsf(X)\bigr] =\int _0 ^{2 \pi} \Fsf (X \dotplus \theta) \otimes \Fsf^{\R} (d\theta), 
\end{equation}
and
\begin{equation}\Y [\Fsf(X)] = \frac{1}{(2 \pi)^2} \sum_{n,m,k,l} \widetilde{c}_{n,m,k,l} \int _{0}^{2\pi}
 d \theta \int _{X \dotplus \theta} e^{i(n-m) \theta ^{\prime}} \ket{n} \bra{m} \otimes  \ket{k} \bra{l} e^{i(k-l) \theta } d \theta ^{\prime},
\end{equation}
where $ \widetilde{c}_{n,m,k,l} \equiv c_{n,m} c^{\prime}_{k,l}$.
%\begin{multline}
%\Y [E(X)] = \frac{1}{(2 \pi)^2} \sum_{n,m,k,l} \widetilde{c}_{n,m,k,l} \\ 
%\int _X \left[ \int_{0} ^{2 \pi} e^{i \theta \left( (n-m) - (k - l) \right)} d \theta \right] \ket{n} 
%\bra{m} \otimes  \ket{k} \bra{l} e^{i(n-m) \theta ^{\prime} } d \theta ^{\prime}
%\end{multline}
Writing $ \ket{n,k} \equiv \ket{n} \otimes \ket{k}$, we have
\begin{equation}
\Y \bigl[\Fsf(X)\bigr] = \frac{1}{2 \pi} \sum_{n,m,k,l} \widetilde{c}_{n,m,k,l}\delta_{n-m,k - l}  \int _X  \ket{n,k} \bra{m,l} e^{i(n-m) \theta } d \theta  ,
\end{equation}
which is a relative phase observable. % \red{[Give defining property.]}

\section{Restriction}

\subsection{Basic Properties}
Consider now a fixed state $\omega$ of $\R$ and the isometric embedding $\mathcal{V}_{\omega} :\mathcal{L}_1(\his) \to \mathcal{L}_1(\hit)$ defined by $\rho \mapsto \rho \otimes \omega$. This has a dual (\emph{restriction}) map $\Gamma_{\omega}:\lht \to \lhs$, which on tensor product operators $A \otimes B$ takes the form 
\begin{equation*}
\Gamma_{\omega}(A \otimes B) = A \tr{\omega B}.
\end{equation*} 
%\blue{Equivalently, $\Gamma_{\omega}$ may be defined
%$\Gamma_{\omega}(\Lambda) = \mathrm{tr}{_\R}[\Lambda \id \otimes \omega]$.}
\begin{proposition}
 $\Gamma_{\omega}$ possesses the following properties.
\begin{enumerate}
\item $\Gamma_{\omega}$ is linear, unital, adjoint-preserving.
\item $\Gamma_{\omega}$ is completely positive (and therefore positive).
\item $\Gamma_{\omega}$ is normal.
\item $\Gamma_{\omega}$ is a (normal) conditional expectation in the sense of von Neumann algebras.
\end{enumerate}
\end{proposition}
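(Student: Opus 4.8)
The plan rests on the observation that $\Gamma_\omega$ is, by construction, the Banach-space adjoint $(\mathcal{V}_\omega)^{*}$ of the embedding $\mathcal{V}_\omega:\mathcal{L}_1(\his)\to\mathcal{L}_1(\hit)$, $\rho\mapsto\rho\otimes\omega$. Since $\no{\rho\otimes\omega}_1=\no{\rho}_1$ (as $\omega$ is a state), $\mathcal{V}_\omega$ is an isometry, so $\Gamma_\omega=(\mathcal{V}_\omega)^{*}:\lht\to\lhs$ is automatically linear, contractive, and weak-$*$-to-weak-$*$ continuous; the last property is exactly $\sigma$-weak continuity, i.e.\ normality, which gives item (3) at once, together with the linearity asserted in (1). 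For unitality, test the defining identity $\tr{(\rho\otimes\omega)X}=\tr{\rho\,\Gamma_\omega(X)}$ with $X=\id$: since $\tr{\rho\otimes\omega}=\tr{\rho}$ for all $\rho\in\mathcal{L}_1(\his)$, one reads off $\Gamma_\omega(\id)=\id$. Adjoint-preservation follows by replacing $X$ with $X^{*}$ in the same identity and using $\omega^{*}=\omega$ together with $\overline{\tr{AB}}=\tr{B^{*}A^{*}}$, which yields $\tr{\rho\,\Gamma_\omega(X^{*})}=\tr{\rho\,\Gamma_\omega(X)^{*}}$ for all $\rho$.

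For complete positivity (item (2)) I would exhibit an explicit Kraus (Stinespring) form. Diagonalise $\omega=\sum_k\lambda_k\kb{\chi_k}{\chi_k}$ with $\lambda_k\ge 0$, $\sum_k\lambda_k=1$, and $\{\chi_k\}$ orthonormal, and set $V_k:=\sqrt{\lambda_k}\,(\id_{\his}\otimes\ket{\chi_k}):\his\to\hit$, so that $\sum_k V_k^{*}V_k=\id_{\his}$. For $X\ge 0$ the partial sums of $\sum_k V_k^{*}XV_k$ increase and are bounded above by $\no{X}\id$, hence the series converges $\sigma$-weakly; by linearity it converges for every $X\in\lht$, and a direct computation on product operators gives $\sum_k V_k^{*}(A\otimes B)V_k=A\sum_k\lambda_k\ip{\chi_k}{B\chi_k}=A\,\tr{\omega B}=\Gamma_\omega(A\otimes B)$. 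Since both $X\mapsto\sum_k V_k^{*}XV_k$ and $\Gamma_\omega$ are normal and agree on the $\sigma$-weakly dense $*$-algebra $\lhs\otimes_{\mathrm{alg}}\lhr\subset\mathcal{L}(\his\otimes\hir)$, they coincide on all of $\lht$; a map of Kraus form is manifestly completely positive, which also re-establishes positivity. (Equivalently: $\mathcal{V}_\omega$ is the completely positive, trace-preserving channel ``append the ancilla state $\omega$'', so its Heisenberg dual $\Gamma_\omega$ is completely positive and unital.)

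For item (4), identify $\lhs$ with the von Neumann subalgebra $\mathcal{N}:=\lhs\otimes\id_{\hir}$ of $\M:=\lht$, and read $\Gamma_\omega$ as a map $\M\to\M$ with range $\mathcal{N}$. Since $\Gamma_\omega(A\otimes\id_{\hir})=A\,\tr{\omega}=A$, the map restricts to the identity on $\mathcal{N}$, hence is idempotent; being in addition unital and positive it is a norm-one projection of $\M$ onto the subalgebra $\mathcal{N}$, and by Tomiyama's theorem every such projection is a conditional expectation. Alternatively, the module identity $\Gamma_\omega\bigl((A_1\otimes\id)\,X\,(A_2\otimes\id)\bigr)=(A_1\otimes\id)\,\Gamma_\omega(X)\,(A_2\otimes\id)$ is immediate on product operators $X=A\otimes B$ and extends to all $X$ by linearity and normality. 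Normality of this conditional expectation is precisely item (3).

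There is no genuine obstacle here; the only points requiring care are (i) that the identities above, verified on product operators only, propagate to all of $\lht$, which relies on the normality of $\Gamma_\omega$ together with the $\sigma$-weak density of the algebraic tensor product in $\mathcal{L}(\his\otimes\hir)$; and (ii) the convergence of the infinite Kraus sum when $\omega$ has infinite-rank support, which is handled by the standard monotone-convergence argument for increasing bounded nets of positive operators.
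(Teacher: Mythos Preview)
Your proof is correct. For items (1)--(3) the paper simply defers to Davies' book, so your self-contained arguments---duality of $\Gamma_\omega$ with the predual embedding $\mathcal{V}_\omega$ for linearity and normality, direct testing against trace-class operators for unitality and adjoint-preservation, and an explicit Kraus decomposition for complete positivity---supply details the paper does not spell out. For item (4) your direct verification of the module identity on product operators followed by extension via normality is exactly what the paper does; the Tomiyama-theorem alternative you offer is a cleaner route that the paper does not mention, and it has the advantage of packaging idempotency, positivity, and the bimodule property into a single appeal to a standard structural result rather than checking the module identity by hand.
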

We recall (e.g., \cite{davies}) that given $\lh$ and a von Neumann subalgebra $\mathcal{W} \subset \lh$ a normal conditional expectation $\mathcal{E}:\lh \to \mathcal{W}$ is a positive, adjoint-preserving normal map satisfying the additional properties i) $\mathcal{E}(X)=X$ if and only if 
$X \in \mathcal{W}$ and ii) $\mathcal{E}(X_1YX_2)=X_1\mathcal{E}(Y)X_2$ for any $X_1,~ X_2 \in \mathcal{W}$ and $Y \in \lh$.
\begin{proof} (1) - (3), see \cite[Ch. 9]{davies}; for the final part we view $\lhs$ as a subalgebra of $\lht$ by identifying $A \in \lhs$ with $A \otimes \id \in \lht$. Then, 
\begin{equation}
\Gamma_{\omega}\left((X_1\otimes \id)(A \otimes B)(X_2 \otimes \id) \right) 
= X_1AX_2 \omega (B) = X_1 \Gamma_{\omega} (A \otimes B)X_2 \otimes \id.
\end{equation}
We extend by linearity to finite sums $\sum_{i,j}A_i \otimes B_j \in \lht$, and to infinite sums by continuity.
%\begin{enumerate}
%\item Trivial consequence of definition.
%\item $\mathcal{V}_{\omega}$ is clearly completely positive, and therefore so is $\Gamma_{\omega}$.
%\item As the dual of the positive map $\mathcal{V}_{\omega}:\mathcal{S}(\his)\to \mathcal{S}(\hit)$, $\Gamma_{\omega}$ is normal (\cite{davies}, lemma 2.2.)
\end{proof}

Thus we will sometimes refer to $\Gamma_{\omega}$ as a restriction {\em channel}. $\Gamma_{\omega}$ restricts {\sc pom}s of $\Sy + \R$ to those of $\Sy$,
and is used to translate back from the relative picture to the absolute one, contingent upon the state $\omega$ of $\R$. For a pure product state, for example, for a given self-adjoint $R \in \lht$ and fixed unit $\phi \in \hir$ the expression $\ip{\cdot \otimes \phi}{R \cdot \otimes \phi}$
determines a bounded, real valued quadratic form $\his \times \his \to \mathbb{C}$ and therefore a 
unique bounded self-adjoint operator $R_{\phi} =\Gamma_{\phi}(R) \in \lhs$.

The restriction map is related to the trace as follows. By the duality $\lhs \cong \mathcal{L}_1(\his)^*$ the map 
$\tr{R \cdot \otimes \omega}:\mathcal{L}_1(\his) \to \mathbb{C}$ determines a unique
bounded $A_{\omega} \in \lhs$, which is self-adjoint when $R \in \lht$ is and $A_{\omega} = \Gamma_{\omega}(R)$.

\subsection{Further Properties}\label{subsec:fup}
The restriction maps $\Gamma_{\omega}$ have further properties of interest,
which we collect here. For the purpose of characterising the relationship between the choice of state $\omega$ and the quantities thus obtained under $\Gamma_{\omega}$ it is convenient to introduce covariant channels.

\begin{definition}
Let $U$ and $V$ be unitary representations of $G$ in Hilbert spaces $\hi$ and $\mathcal{K}$ respectively, and let $\Lambda: \lh \to \mathcal{L(K)}$. $\Lambda$ is called covariant
if $\Lambda (U(g)AU(g)^*) = V(g)\Lambda(A)V(g)^*$ for all $A \in \lh$ and $g \in G$. 
\end{definition}
Covariance for maps acting on the trace class takes an obvious analogous form. The next lemma demonstrates that the restriction map $\Gamma_{\omega}$ applied to an invariant quantity is invariant if $\omega$ is invariant.

{\lemma \label{lem:cova} Let $\Gamma_{\omega}:\lht \to \lhs$ be a restriction channel for some state $\omega$.  $\Gamma_{\omega}$ is covariant if and only if $\omega$ is invariant. If $\Gamma_{\omega}$ is covariant, then $\Gamma_{\omega}(R)$ is invariant if $R$ is.}

\begin{proof}
Writing $\mathcal{U}(g) = U_{\Sy}(g)\otimes U_{\R}(g)$, the first part follows immediately from the covariance condition
\begin{equation}
\sum_{i,j}U_{\Sy}(g)A_iU_{\Sy}(g)^*\omega (U_{\R}(g)B_jU_{\R}(g)^*))=\sum_{i,j}U_{\Sy}(g)A_iU_{\Sy}(g)^*\omega (B_j),
\end{equation} 
to hold for an arbitrary bounded operator $\sum_{i,j}A_i \otimes B_j \in \lh$ (or a limit of such terms) and all $g \in G$.
For the second part, if $R = \mathcal{U}(g)R\mathcal{U}(g)^*$ then clearly $U_{\Sy}(g)\Gamma_{\omega}(R)U_{\Sy}(g)^* = \Gamma_{\omega}(R)$. 
\end{proof}

Thus, for invariant (observable) $R \in \lht$, the only possible way to achieve 
a non-invariant restriction $\Gamma_{\omega}(R)$ is by choosing a non-invariant $\omega$. 
An invariant $\omega$ therefore yields, in the case of number/phase, restricted quantities satisfying (1)-(4) of Proposition \ref{prop:invh}.

A simple calculation shows that the partial trace map ${\rm tr}_{\mathcal{H}_{\R}}:\mathcal{L}_1(\his \otimes \hir) \to \mathcal{L}_1(\his)$ is covariant with respect to $U_{\Sy}$ and $\mathcal{U}$ (and analogously for the partial trace over $\his$). The following is a trivial consequence:
\begin{proposition}\label{prop:pti}
For an arbitrary state $\Omega \in \mathcal{L}_1(\hit)$, ${\rm tr}_{\mathcal{H}_{\R}}(\Omega)$ is invariant under
$U_{\Sy}$ if $\Omega$ is invariant under $\mathcal{U}$. 
\end{proposition}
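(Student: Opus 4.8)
The plan is to read the statement off the covariance of the partial trace that was just recorded, with no further structural input. Recall that covariance of ${\rm tr}_{\mathcal{H}_{\R}}:\mathcal{L}_1(\hit)\to\mathcal{L}_1(\his)$ with respect to $\mathcal{U}$ and $U_{\Sy}$ means
\begin{equation*}
{\rm tr}_{\mathcal{H}_{\R}}\bigl(\mathcal{U}(g)\,\Omega\,\mathcal{U}(g)^*\bigr)=U_{\Sy}(g)\,{\rm tr}_{\mathcal{H}_{\R}}(\Omega)\,U_{\Sy}(g)^*
\end{equation*}
for all $g\in G$ and all $\Omega\in\mathcal{L}_1(\hit)$. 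Granting this, the argument is immediate: if $\Omega$ is $\mathcal{U}$-invariant, i.e.\ $\mathcal{U}(g)\Omega\mathcal{U}(g)^*=\Omega$ for every $g$, then applying ${\rm tr}_{\mathcal{H}_{\R}}$ to both sides and invoking the displayed identity gives $U_{\Sy}(g)\,{\rm tr}_{\mathcal{H}_{\R}}(\Omega)\,U_{\Sy}(g)^*={\rm tr}_{\mathcal{H}_{\R}}(\Omega)$ for all $g$, which is precisely the assertion that ${\rm tr}_{\mathcal{H}_{\R}}(\Omega)$ is $U_{\Sy}$-invariant.

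Thus the only point needing any care is the covariance of the partial trace itself, which I would verify in the standard way. First check it on product operators $\rho\otimes\sigma$: since $\mathcal{U}(g)(\rho\otimes\sigma)\mathcal{U}(g)^*=U_{\Sy}(g)\rho U_{\Sy}(g)^*\otimes U_{\R}(g)\sigma U_{\R}(g)^*$, one has
\begin{equation*}
{\rm tr}_{\mathcal{H}_{\R}}\bigl(\mathcal{U}(g)(\rho\otimes\sigma)\mathcal{U}(g)^*\bigr)=U_{\Sy}(g)\rho U_{\Sy}(g)^*\cdot{\rm tr}\bigl[U_{\R}(g)\sigma U_{\R}(g)^*\bigr]=U_{\Sy}(g)\rho U_{\Sy}(g)^*\cdot{\rm tr}[\sigma],
\end{equation*}
the last step by cyclicity of the trace, and this equals $U_{\Sy}(g)\bigl({\rm tr}_{\mathcal{H}_{\R}}(\rho\otimes\sigma)\bigr)U_{\Sy}(g)^*$. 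Extend to finite linear combinations of product operators by linearity of all maps involved, and then to arbitrary $\Omega\in\mathcal{L}_1(\hit)$ by the trace-norm density of such combinations together with the trace-norm continuity of the partial trace and of conjugation by the unitaries $\mathcal{U}(g)$ and $U_{\Sy}(g)$.

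There is essentially no obstacle: the entire content sits in the covariance of the partial trace, which is the routine computation above, and the proposition is then a one-line consequence. Equivalently one may phrase everything dually: ${\rm tr}_{\mathcal{H}_{\R}}$ is the predual of the embedding $A\mapsto A\otimes\id$, and the claim reduces to the first identity in item~(7) of Proposition~\ref{prop2}, namely $\mathcal{U}(g)(A\otimes\id)\mathcal{U}(g)^*=U_{\Sy}(g)AU_{\Sy}(g)^*\otimes\id$, combined with the duality definition of invariance of states.
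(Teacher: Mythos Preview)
Your proposal is correct and follows exactly the paper's approach: the paper states just before the proposition that ``a simple calculation shows that the partial trace map \ldots\ is covariant with respect to $U_{\Sy}$ and $\mathcal{U}$'' and then declares the proposition a ``trivial consequence''. You have simply spelled out both the simple calculation and the trivial consequence.
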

Hence we have the following:
\begin{corollary}\label{cor:ptti}
${\rm tr}_{\mathcal{H}_{\R}}(\tau_{\T *} (\rho_{\Sy} \otimes \rho_{\R}))$ is invariant under $U_{\Sy}$.
\end{corollary}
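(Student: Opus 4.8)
The plan is to derive the corollary directly from Proposition \ref{prop:invh} and Proposition \ref{prop:pti}, with essentially no computation. Write $\mathcal{U}(\theta) = U_{\Sy}(\theta)\otimes U_{\R}(\theta)$. Since $U_{\Sy}$ and $U_{\R}$ are each generated by a number operator, $\mathcal{U}$ is the representation of $S^1$ on $\hit$ generated by the total number operator $N_{\T} = N_{\Sy}\otimes\id + \id\otimes N_{\R} = \sum_n n\,P^{\T}_n$, and the map $\tau_{\T *}(\cdot) = \sum_n P^{\T}_n(\cdot)P^{\T}_n$ is precisely the associated dephasing channel. The only point that needs care is this identification of $\tau_{\T *}$ as the dephasing channel built from $N_{\T}$ (rather than from the local operators $N_{\Sy}$, $N_{\R}$); once it is in place the two quoted propositions do all the work.

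First I would note that $\tau_{\T *}$ is idempotent, $\tau_{\T *}\circ\tau_{\T *} = \tau_{\T *}$, because $P^{\T}_n P^{\T}_m = \delta_{nm}P^{\T}_n$. Consequently, for any product state $\rho_{\Sy}\otimes\rho_{\R}$ the state $\sigma := \tau_{\T *}(\rho_{\Sy}\otimes\rho_{\R})$ is a fixed point of $\tau_{\T *}$, i.e.\ it satisfies condition (8) of Proposition \ref{prop:invh} applied to the number operator $N_{\T}$ on $\hit$. By the equivalence of (8) and (6) in that proposition, $\sigma$ is invariant under $\mathcal{U}$, that is, $\mathcal{U}(\theta)\,\sigma\,\mathcal{U}(\theta)^* = \sigma$ for all $\theta$.

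Finally I would invoke Proposition \ref{prop:pti} with $\Omega = \sigma$: since $\sigma$ is $\mathcal{U}$-invariant, its partial trace ${\rm tr}_{\hir}(\sigma) = {\rm tr}_{\mathcal{H}_{\R}}\bigl(\tau_{\T *}(\rho_{\Sy}\otimes\rho_{\R})\bigr)$ is invariant under $U_{\Sy}$, which is exactly the assertion of the corollary. There is no substantive obstacle here; the content of the statement is entirely carried by the idempotency of $\tau_{\T *}$ together with the two previously established propositions, and the proof is a short chain of implications.
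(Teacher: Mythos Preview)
Your proof is correct and follows essentially the same approach as the paper: the corollary is stated immediately after Proposition~\ref{prop:pti} with the phrase ``Hence we have the following'', indicating that it is meant as a direct consequence of that proposition together with the $\mathcal{U}$-invariance of $\tau_{\T *}(\rho_{\Sy}\otimes\rho_{\R})$ (Proposition~\ref{prop:invh}, (6)$\Leftrightarrow$(8)). Your explicit use of the idempotency of $\tau_{\T *}$ to verify condition~(8) is exactly the right way to unpack this implication.
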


\subsection{Restrictions after $\Y$}

The restriction map $\Gamma_{\omega}$ may be composed with the relativisation map $\Y$
to give, for arbitrary $G$,

\begin{equation}
(\Gamma_{\omega}\circ\Y) (A) = \int_{G} U_{\Sy}(g)AU_{\Sy}(g)^* d\mu^{\F}_{\omega}(g),
\end{equation}
where the measure $\mu^{\F}_{\omega}:= \omega \circ \F$ (or, for a density operator $\rho$ corresponding to $\omega$, $\mu^{\F}_{\rho}(X) = \tr{\F(X)\rho}$). As we shall see in the next section, the measure $\mu^{\F}_{\omega}$ dictates the  
proximity of $A$ and $(\Gamma_{\omega}\circ \Y)(A)$.

\section{Localisation and Delocalisation}\label{sec:lad}

\subsection{High Localisation}

Recall (Lemma \ref{lem:n1}) that if a {\sc pom} $\Fsf$ satisfies the norm-1 property then for each $X$ for which $\Fsf (X) \neq 0$  there exists a sequence of unit vectors $(\phi_n) \subset \hir$ such that $\lim_{n \to \infty} \ip{\phi_n}{\Fsf (X) \phi_n} = 1$. This ``localising sequence" $(\phi_n)$ allows for  the expression of expectation values of relative observables to be given in terms of those of absolute quantities to arbitrary precision:
 
\begin{theorem}\label{prop:subrec}
Let $\Fsf$ have the norm-1 property and let $G$ be either $S_1$ {\rm(}which we identify with the interval $[-\pi, \pi)${\rm)} or $\mathbb{R}$ written additively with identity $0$. If $(\phi_n) \subset \hir$ is a sequence of unit vectors which becomes well localised at $g=0$, then for each $A \in \lhs$ and all $\varphi \in \his$
\begin{equation}
\lim_{n \to \infty} \ip{\varphi \otimes \phi _n}{ \Y(A) \varphi \otimes \phi_n} =\ip{ \varphi}{ A \varphi}
\end{equation}
\end{theorem}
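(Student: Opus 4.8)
The plan is to reduce the statement to the norm-1 localisation property of $\Fsf$ by expressing $\ip{\varphi\otimes\phi_n}{\Y(A)\varphi\otimes\phi_n}$ as an integral against the probability measure $\mu_n := \ip{\phi_n}{\Fsf(\cdot)\phi_n}$ on $G$, and then showing this measure concentrates at the identity $0$. Concretely, from the definition \eqref{eq:ydef} we have
\begin{equation*}
\ip{\varphi\otimes\phi_n}{\Y(A)\varphi\otimes\phi_n} = \int_G \ip{\varphi}{U_{\Sy}(g)AU_{\Sy}(g)^*\varphi}\, d\mu_n(g),
\end{equation*}
where $\mu_n$ is a genuine probability measure since $\Fsf(G)=\id$ and $\|\phi_n\|=1$. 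The integrand $g\mapsto f(g):=\ip{\varphi}{U_{\Sy}(g)AU_{\Sy}(g)^*\varphi}$ is bounded by $\|A\|$ and, because $U_{\Sy}$ is strongly continuous, is continuous in $g$, with $f(0)=\ip{\varphi}{A\varphi}$. So the theorem amounts to: $\int_G f\,d\mu_n \to f(0)$, i.e. $\mu_n \to \delta_0$ in the relevant weak sense.

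First I would make precise what ``becomes well localised at $g=0$'' means and extract from it that $\mu_n(V)\to 1$ for every open neighbourhood $V$ of $0$; this is exactly what a localising sequence in the sense of Lemma \ref{lem:n1} provides when $V$ ranges over (a neighbourhood basis of) sets with $\Fsf(V)\neq 0$ — note every open neighbourhood of the identity has $\Fsf(V)\neq 0$ by covariance, as in the footnote to the covariant norm-1 lemma. Then the estimate is routine: given $\epsilon>0$, use continuity of $f$ at $0$ to pick $V$ with $|f(g)-f(0)|<\epsilon$ for $g\in V$, split
\begin{equation*}
\left| \int_G f\,d\mu_n - f(0)\right| \le \int_V |f(g)-f(0)|\,d\mu_n(g) + \int_{G\setminus V} |f(g)-f(0)|\,d\mu_n(g) \le \epsilon + 2\|A\|\,\mu_n(G\setminus V),
\end{equation*}
and let $n\to\infty$ so that $\mu_n(G\setminus V)\to 0$. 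Since $\epsilon$ is arbitrary the limit is $f(0)=\ip{\varphi}{A\varphi}$.

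The main obstacle is the one technical subtlety rather than the estimate itself: making sure the hypothesis genuinely delivers $\mu_n(V)\to1$ for a neighbourhood basis at $0$, and that this is compatible with $\Fsf$ possibly being merely a {\sc pom} (not a {\sc pvm}) — here the covariant norm-1 characterisation is what guarantees $\Fsf(V)\ne0$ so that a localising sequence exists for each such $V$. A secondary point worth a line is that in the construction of $\Y(A)$ via a Naimark dilation or the Cauchy-sequence limit $\Y_N$, one should check the integral representation above holds for the limiting operator $\Y(A)$ and not merely for the approximants; this follows from $\|\Y_N(A)-\Y(A)\|\to0$ together with $\mu_n$ being a probability measure, so matrix elements pass to the limit uniformly in $n$. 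Both cases $G=S_1$ and $G=\mathbb{R}$ are handled identically once ``well localised at $0$'' is read as concentration of $\mu_n$ near $0$; non-compactness of $\mathbb{R}$ plays no role because we only need mass escaping from a fixed neighbourhood to vanish.
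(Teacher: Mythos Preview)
Your proposal is correct and follows essentially the same route as the paper: write $\ip{\varphi\otimes\phi_n}{\Y(A)\varphi\otimes\phi_n}$ as $\int_G f(g)\,d\mu_n(g)$ with $f(g)=\ip{\varphi}{U_\Sy(g)AU_\Sy(g)^*\varphi}$ continuous and $\mu_n=\ip{\phi_n}{\Fsf(\cdot)\phi_n}$, then split the integral over a small neighbourhood of $0$ and its complement and use localisation plus continuity. The only cosmetic difference is that the paper synchronises the neighbourhoods with $n$ (taking $\Delta_n=(-1/2n,1/2n)$ and demanding $\mu_n(\Delta_n)>1-1/n$), whereas you fix $V$ by an $\epsilon$-continuity argument and send $n\to\infty$ afterwards; your version is arguably cleaner, and your remarks on why $\Fsf(V)\ne0$ and why the integral representation survives the $\Y_N\to\Y$ limit are points the paper leaves implicit.
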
 

\begin{proof}
Assuming without loss of generality that $\|\varphi\|=1$, we write 
\begin{equation}
\ip{\varphi \otimes \phi_n}{\Y(A) \varphi \otimes \phi_n} =\int_G \ip{\varphi}{U(g) A U(g)^* \varphi}\ip{\phi_n}{\Fsf (d g) \phi_n} = \ip{\varphi}{A \varphi} +c_n
\end{equation}
where $c_n$ is the ``error" for each $n$ which we show goes to zero as $n$ becomes large.
\begin{equation}
\left|c_n \right| = \left|\int_G \ip{\varphi}{\left(U(g) A U(g)^* -A \right) \varphi}\ip{\phi_n}{\Fsf (dg) \phi_n}  \right|.
\end{equation}
Let $\Delta_n = (-1/2n,1/2n)$; then 
\begin{eqnarray}
\left|c_n \right| \leq \left|\int_{\Delta_n}  \ip{\varphi}{\left( U(g) A U(g)^*- A \right) \varphi}\ip{\phi_n}{\Fsf (d g) \phi_n} \right| \\+ \left| \int_{G \backslash \Delta_n} \ip{\varphi}{\left( U(g) A U(g)^* -A \right) \varphi}\ip{\phi_n}{\Fsf (d g) \phi_n}  \right|.
\end{eqnarray}

Now the $\phi_n$ are chosen as follows: for each $n$, there is a $\phi_n$ for which $|\ip{\phi_n}{\Fsf (\Delta_n) \phi_n} - 1| < 1/n$ and $|\ip{\phi_n}{\Fsf (G \backslash\Delta_n) \phi_n}| < 1/n$.
Therefore the second term is bounded above by $2 \no{A} \int_{G \backslash \Delta_n} \ip{\phi_n}{\Fsf (d g) \phi_n}$ which vanishes in the limit. For the first term, writing $f_{\varphi}^A := g \mapsto \ip{\varphi}{(U(g)AU(g)^* - A)\varphi}$ , we estimate 
\begin{equation}\label{eq:limco}
\int_{\Delta_n}|f_{\varphi}^A (g)|\ip{\phi_n}{\F(dg)\phi_n} \leq \int_{\Delta_n} \sup_{g \in \Delta_n} |f_{\varphi}^A (g)|\ip{\phi_n}{\F(dg)\phi_n} \leq \sup_{g \in \Delta_n}|f_{\varphi}^A(g)|\ip{\phi_n}{\F(\Delta_n) \phi_n}.
\end{equation}
From the continuity for self-adjoint $A$ of the real function $f_{\varphi}^A$
it follows that $f_{\varphi}^A(g) \to 0$ as $g \to 0$, and therefore the right hand side 
of \eqref{eq:limco} goes to zero in the $g \to 0$ limit. This extends to arbitrary bounded $A$, as can be seen by decomposing $A$ into real and imaginary parts.
\end{proof}
Therefore by choosing a localising sequence $(\phi_n) \subset \hir$ we can make $\ip{\varphi \otimes \phi_n}{\Y(A) \varphi \otimes \phi_n}$ as close to $\ip{\varphi}{A\varphi}$ as we like.
This result rests crucially on the assumption that the chosen $\Fsf$ satisfies the norm-$1$ property. The main result may thus be rephrased (for the localising sequence $(\phi_n)$) in terms of $\Y$ and $\Gamma$ as follows:
\begin{equation}\label{eq:hll}
\lim_{n \to \infty} (\Gamma_{\phi_n} \circ \Y)(A) = A
\end{equation}
in the weak topology on $\lhs$, that is, in the topology of pointwise convergence of expectation values.

\subsubsection{Examples}
\begin{example}{\bf Qubit algebra.}\label{qinf} \rm
Consider the space $\mathcal{L}(\mathbb{C}^2)$ and a basis of Pauli operators with identity: $\{ \id, \sigma _1, \sigma _2, \sigma _{3}\}$. 
Let  $N_S:=\frac 12(\id +\sigma_3)$ (which has spectrum $\{ 0,1 \}$ and corresponding eigenvectors denoted $\ket{0}, \ket{1}$). 
We addend an infinite dimensional reference system, $\hir$, with ``number basis'' $\{\ket n:\,n\in\mathbb{N}\}$, thus defining  $N_{\mathcal{R}}:=\sum_{n=0}^\infty n\kb nn$ . Then we may use $\Fsf \equiv \Fsf ^{\text{can}}$ on $\hir$ (see eq. \eqref{phase}) to construct $\Y$, and we find that
\begin{align}
\Y(\id) &= \id\otimes \id,\\
\Y(\sigma_{3}) &= \sigma_3 \otimes \id,\\
\Y (\sigma _1) &= \sum_{m\geq0} \left( \ket{0} \bra{1} \otimes \ket{m+1} \bra{m} +  \ket{1} \bra{0} \otimes  \ket{{m}} \bra{{m+1}}\right), \label{eqn:xrel}\\
\Y(\sigma _2) &=  i\sum_{m\geq0}\left(-\ket{0} \bra{1} \otimes \ket{m+1} \bra{m} +  \ket{1} \bra{0} \otimes  \ket{{m}} \bra{{m+1}}\right)\label{eqn:yrel}.
\end{align}
The possibility of good phase localisation of states with respect to $\Fsf ^{\rm can}$
allows the entire qubit algebra $\mathcal{L}(\mathbb{C}^2)$ to be recovered in the following way. Let $A \in \mathcal{L}(\mathbb{C}^2)$ be an arbitrary self-adjoint element, and $\varphi \in \mathbb{C}^2$ an arbitrary unit vector. Define $\ket{\phi_n} =\frac{1}{\sqrt{n+1}} \sum_{j= 0}^n \ket{j}$, which represents an approximately localised phase centred at zero.
 
\begin{remark}\rm The property that $\{\ket{\phi_n}\}$  represents an approximate phase eigenstate at phase value $\theta=0$ for $\Fsf^{\rm can}$ means the following: for every $\delta>0$, the probability of localisation in the interval $[-\delta/2,+\delta/2]$ approaches 1 as $n\to\infty$. Thus, 
\[
\lim_{n\to\infty}\left\langle\phi_n\big|\Fsf^{\rm can}\bigl(\bigl[-\tfrac\delta2,\tfrac\delta2\bigr]\bigr)\phi_n\right\rangle=1\quad\text{for any }\delta\in(0,2\pi).
\]
We sketch a proof of this property. In fact, we will find that the speed of convergence can be specified more precisely: we can allow $\delta$ to tend to zero as $\delta=\delta_n:=(n+1)^{(-1+\epsilon)/2}$ for any $\epsilon\in(0,1)$. We put $\Delta_n:=[-\tfrac{\delta_n}2,\tfrac{\delta_n}2]$ and $X_n=[-\pi,\pi]\setminus\Delta_n$.

Thus, we show that the probability $p_n:=\left\langle\phi_n\big|\Fsf^{\rm can}\bigl([-\pi,\pi]\setminus\Delta_n\bigr)\phi_n\right\rangle\to 0$ as $n\to\infty$. We have
\[\begin{split}
p_n&=\frac1{2\pi(n+1)}\int_{X_n}\left(\sum_{k=0}^ne^{ik\theta}\sum_{\ell=0}^ne^{-i\ell\theta}\right)d\theta
=\frac1{2\pi(n+1)}\int_{X_n}\frac{1-\cos\bigl((n+1)\theta\bigr)}{1-\cos\theta}d\theta\\
&\le\frac1{2\pi(n+1)}\frac1{1-\cos(\delta_n/2)}\int_{X_n}\bigl[1-\cos\bigl((n+1)\theta\bigr)\bigr]d\theta\\
&=\frac1{n+1}\frac1{1-\cos(\delta_n/2)}\left[1-\frac{\delta_n}{2\pi}-2\left.\frac{\sin\bigl((n+1)\theta\bigr)}{2\pi(n+1)}\right|_{\delta_j/2}^\pi\right]\\
&=\frac1{n+1}\frac1{1-\cos(\delta_n/2)}\left[1-\frac{\delta_n}{2\pi}+\frac{\sin\bigl((n+1)\delta_n/2\bigr)}{\pi(n+1)}\right]=\frac1{n+1}\frac1{1-\cos(\delta_n/2)}\left[1-\frac{\delta_n}{2\pi}
+\frac{\delta_n}{2\pi}\frac{\sin\,u}{u}\right]\\
&\le\frac{1}{n+1}\frac1{1-\cos(\delta_n/2)} \\
&\le \frac{1}{n+1}\frac1{\frac{\delta_n^2}8-\frac{\delta_n^4}{24\cdot 16}}
=\frac{8(n+1)^{-\epsilon}}{1-{\delta_n^2}/48}\ \to\ 0\ \text{as }n\to\infty.
\end{split}
\] $\square$

\end{remark}

With $\varphi = c_0 \ket{0} + c_1 \ket{1}$ (normalised)  and $A= a_0 {\id} + \mathbf{a}\cdot \boldsymbol{\sigma} = a_0\id+a_1\sigma_1+a_2\sigma_2+a_3\sigma_3$,
we have 
\[
\ip{\varphi} {A \varphi}= a_0 + 2a_1 {\rm Re} (\bar{c}_0c_1) + 2i a_2 {\rm Im} (c_0 \bar{c}_1) + a_3 (\mods{c_0} - \mods{c_1}).
\]
Evaluating 
\begin{align*}
 \ip{\varphi \otimes \phi_n}{\Y(\id)\varphi \otimes \phi_n}&= 1,\\
 \ip{\varphi \otimes \phi_n}{\Y(\sigma _1)\varphi \otimes \phi_n} &= \frac{n}{n+1}2\,{\rm Re} (\bar{c}_0c_1)=\frac{n}{n+1}\ip{\varphi}{\sigma_1\varphi},\\
 \ip{\varphi \otimes \phi_n}{\Y(\sigma _2)\varphi \otimes \phi_n}&= \frac{n}{n+1}2i\,{\rm Im}(c_0 \bar{c}_1)=\frac{n}{n+1}\ip{\varphi}{\sigma_2\varphi},\\
\ip{\varphi \otimes \phi_n}{\Y(\sigma _3)\varphi \otimes \phi_n} &= \ip{\varphi}{\sigma_3 \varphi},
\end{align*}
we see that as $n$ becomes large, we indeed reproduce for any unit vectors $\varphi \in \mathcal{H}_S$ the expectation values of the basis operators $\id,\sigma_1,\sigma_2,\sigma_3$, and therefore for all $A \in \mathcal{L}(\mathbb{C}^2)$:
\begin{equation}
\lim_{n \to \infty}\ip{\varphi \otimes \phi_n}{\Y(A)\varphi \otimes \phi_n}= \ip{\varphi}{A \varphi}. 
\end{equation}

In conclusion, we see that by making the reference system explicit and taking the limit of a highly phase-localised state, the statistics of any absolute qubit effect offers an accurate representation of the relative qubit effect
in $\mathcal{H}_{\Sy} \otimes \mathcal{H}_{\R}$.
\end{example}

\begin{example}{\bf Finite cyclic group.}\label{ex:fcg} \rm 
We may construct $\Y$ so that $\Y(A)$ is invariant with respect to a unitary representation $U_{\Sy} \otimes U_{\R}$ of some finite cyclic group $G$. 
Hence, let $G$ be a group of cyclic permutations of a finite index set $I$, which can therefore be identified with $G$. 
We consider a Hilbert space  $\mathcal{H}_{\R}$ that allows a direct sum decomposition into subspaces of equal dimension, 
$\mathcal{H}_{\R} = \bigoplus_{i\in I} \mathcal{H}_{\R,i}$, and define a unitary representation $U_{\R}:G \to \mathcal{L}(\mathcal{H}_{\R})$ such that for any $g\in G$, $U_\R(g)$ maps a given orthonormal basis of each $\mathcal{H}_{\R,i}$ to a given orthonormal basis of $\mathcal{H}_{\R,g.i}$ (where $i\mapsto g.i$ denotes an action of $G$ on $I$).  
Let $\{P_i\}$ (or technically the map $i \mapsto P_i$) denote the {\sc pvm} composed of the projections onto $\mathcal{H}_{\R_i}$.
Then $\left( U_{\R}, \{P_i\}, \{ \mathcal{H}_{\R}\} \right)$ is a system of imprimitivity for $G$, with the covariance 
$U_{\R}(g) P_i U_{\R}(g) ^* = P_{g.i}$.

With $U_{\Sy}: G \to \mathcal{L(H_S)}$ any representation of $G$ in $\mathcal{H_S}$ and with $\hit = \mathcal{H_S} \otimes \mathcal{H_R}$, define $\Y : \mathcal{L(H_S)} \to \lht$ by:
\begin{equation}
\Y(A) = \sum _{g \in G} U_{\Sy}(g)AU_{\Sy}(g)^* \otimes P_g ,
\end{equation}
which, with $\mathcal{U} := U_{\Sy} \otimes U_{\R}$, satisfies $\mathcal{U}(g) \Y (A) \mathcal{U}(g)^*=\Y(A)$. Furthermore $\Y$ is a $^*$-homomorphism (since the covariant {\sc pom} $\{P_g\}$ generating $\Y$ is projection valued).
%a fact which follows easily from the orthogonality $P_g P_{g^{\prime}} = \delta _{g,g^{\prime}} P_g$. 
Then there exists a state $\phi \in \mathcal{H}_{\R}$ for which 
\begin{equation}
\ip{\varphi}{A \varphi}=\ip{\varphi \otimes \phi}{\Y(A) \varphi \otimes \phi}
\end{equation}
for all $\varphi \in \mathcal{H_S}$. Indeed, we have
\begin{equation}\label{ydis}
\ip{\varphi \otimes \phi}{\Y(A) \varphi  \otimes \phi} = \sum_{g \in G}\ip{\varphi}{U_{\Sy}(g)AU_{\Sy}(g)^* \varphi} \ip{\phi}{P_g  \phi},
\end{equation}
so that $\phi$ may be chosen to be any unit vector $\phi \in \mathcal{H}_{\R,e}$, where $e$ is the identity element of $G$: in this case
$P_g \phi = \delta_{g,e} \phi$, and \eqref{ydis} collapses to 
\begin{equation}
\ip{\varphi}{U(e)A U(e)^* \varphi}  = \ip{\varphi}{A \varphi} \text{for all~} \varphi,
\end{equation}
i.e., $\Gamma_{\phi}(\Y(A))=A$.
Therefore, by choosing a state localised at the identity of $G$, $\phi \in \mathcal{H}_{\R,e}$, all expectation values of any self-adjoint 
$A \in \mathcal{L}(\mathcal{H_S})$ are precisely those of the relativised $\Y(A) \in \lht$. 
\end{example}

\begin{example}{\bf Unsharp Position.}\label{ex:upo} \rm
The quality of the reference system, understood as the localisability of the covariant quantity, dictates the quality of approximation of relative quantities by absolute ones (see \cite{mlb} for a detailed investigation of this phenomenon.) An intuitive example of this behaviour is 
given by fixing a smeared position $\E^{Q_\R}_{e}$ for the reference, yielding, for sharp
$\E^{Q_{\Sy}}$ of $\Sy$
\begin{equation}
(\Y \circ \E^{Q_{\Sy}})(X) = \int_{\mathbb{R}} e^{iPx}\E^{Q_{\Sy}}(X)e^{-iPx} \otimes \E^{Q_\R}_{e}(dx).
\end{equation}
After restriction, we therefore wish to find the {\sc pom} $\tilde{\E}^{Q_{\Sy}}$ defined by 
\begin{equation}
\ip{\varphi}{\tilde{\E}^{Q_{\Sy}}(X)\varphi}= \ip{\varphi}{(\Gamma_{\phi} \circ \Y)(\E^{Q_{\Sy}}) (X) \varphi} \text{~for all}~ \varphi, ~X
\end{equation}
and to compare this with $\E^{Q_{\Sy}}$.
Moving to the position representation the right hand side of this expression may be written:
\begin{equation}
\int \int \int \chi_{X + y + z}(x)\mods{\varphi(x)}e(y)\mods{\phi(z)}dxdydz,
\end{equation}
which we write as 
\begin{equation}
\int dx \mods{\varphi(x)} F_X(x) = \ip{\varphi}{\tilde{\E}_{Q_{\Sy}}(X)}.
\end{equation}
After some manipulations we find that 

\begin{equation}
F_X(x) = \chi_X * (e * \mods{\phi})(x),
\end{equation}
and that therefore 
\begin{equation}
\tilde{\E}_{Q_{\Sy}}(X) = \chi_X * (e * \mods{\phi})(Q_{\Sy}).
\end{equation}

The spread of the function $\tilde{e} = e * \mods{\phi}$ dictates the (in)accuracy of $\tilde{\E}^{Q_{\Sy}}$
as it approximates ${\E}^{Q_{\Sy}}$. This spread can be quantified in different ways. Using the variance measure, we find that 
$\var(\tilde{e}) = \var(e) + \var\bigl(\mods{\phi}\bigr)$. 
\begin{definition}\label{def:ovw}
For $0 \leq \epsilon < 1$, the {\em overall width} $W_{\epsilon}(p)$ at confidence level $1-\epsilon$ of a probability measure $p$ 
is defined by  
\begin{equation}\label{eq:ovew}
W(p; 1-\epsilon):=\inf_{I}\{|I|: p(I) \geq 1-\epsilon\};
\end{equation}
here the infimum of the lengths, $|I|$, is taken over all intervals $I$ in $\mathbb{R}$.
\end{definition}
 We may also use the overall width $W_{\epsilon}(\tilde{e})$ applied to the density $\tilde{e}$ and the fact that the overall width of a convolution is bounded below by the width of the function with the greatest width, i.e., $W_{\epsilon}(\tilde{e}) \geq \max\{W_{\epsilon}(e), W_{\epsilon}(\mods{\phi}) \}$.

Therefore, the quality (localisability of the smeared position) of the reference system dictates the quality of the representing absolute quantity. Here the inaccuracy inherent in the reference system features as a lower bound on the inaccuracy of the absolute quantity. Even with perfect localisation of the reference with respect to the sharp position, 
there is a residual inaccuracy in the position of the system arising from the unsharpness of
the covariant reference position. To get perfect accuracy, we need the preparation to be highly localised at $0$, and the smearing distribution to be highly localised around $0$.
\end{example}

%{\bf Example 4: Covariant Phase Space observable}

%In this case we find

%\begin{equation}
%(\Gamma_{\rho}\circ \Y)(A) = \frac{1}{2 \pi \hbar}\int W_{\Sy}(q,p) A W_{\Sy}(q,p)^* \tr{\rho W_{\R}(q,p) T W_{\R}(q,p)^*}dqdp
%\end{equation}

%\subsection{Covariance and changing frame???}

\subsection{Phase Delocalisation}

At the other extreme (to that of high localisation) we may also consider very poorly localised reference states, including the worst case scenario of complete delocalisation, possible only for compact groups. For concreteness, we focus on the phase case.

Consider covariant phase {\sc pom} $\F$ and invariant state $\omega$. Such a state is completely
delocalised with respect to $\F$, i.e., $\mu ^{\F}_{\omega} \equiv (\omega \circ \F)(X) = \frac{|X|}{2\pi}$. This is the Haar measure on $S^1$ under identification of $S^1$ with $[0,2\pi]$. Thus we may formally write $\mu ^{\F}_{\omega}(d \theta) =  d \theta/2\pi$.

The composition of relativisation and restriction for a delocalised state $\omega$ then has the following effect on $\lhs$ and $\mathcal{L}_1(\his)$ respectively:

\begin{equation}\label{eq:yobs}
(\Gamma_{\omega} \circ \Y)(A) = \frac{1}{2 \pi}\int_{S^1} U(\theta)AU(\theta)^* d \theta \equiv \tau_{\Sy}(A);
\end{equation}
with predual/Schr\"{o}dinger picture
\begin{equation}\label{eq:ystate}
(\Gamma_{\omega} \circ \Y)_*(\rho)=\frac{1}{2 \pi}\int_{S^1} U(\theta)^*\rho U(\theta) d \theta \equiv \tau{{_\Sy}_*}(\rho).
\end{equation}
The above equations hold if (for example) we choose for $\omega$ the density operator $\tau_{\R *}(\rho_{\R})$
for any $\rho_{\R}$ of $\R$.

Thus, from the perspective of $\Y$, completely delocalised reference states give rise to restricted quantities/states which are phase-shift invariant/commute with $N_{\Sy}$. We note that the above relations (eq. \eqref{eq:yobs} and \eqref{eq:ystate}) also generalise, i.e., 
\begin{proposition}
Let $\Lambda$ denote a general 
relative (invariant) self-adjoint operator acting in $\hit$. If $\omega$ is invariant, then
$\Gamma_{\omega}$ is covariant. Then there exists a self-adjoint $A \in \lhs$ for which  $$\Gamma_{\omega}(\Lambda) = \frac{1}{2 \pi}\int_{S^1} U(\theta)AU(\theta)^* d \theta \equiv \tau_{\Sy}(A).$$
\end{proposition}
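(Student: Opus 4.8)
The first assertion is nothing but Lemma \ref{lem:cova}: invariance of $\omega$ is equivalent to covariance of $\Gamma_{\omega}$, i.e.
\[
\Gamma_{\omega}\bigl(\mathcal{U}(\theta)R\mathcal{U}(\theta)^*\bigr)=U_{\Sy}(\theta)\,\Gamma_{\omega}(R)\,U_{\Sy}(\theta)^*\qquad\text{for all }R\in\lht,\ \theta,
\]
where $\mathcal{U}(\theta)=U_{\Sy}(\theta)\otimes U_{\R}(\theta)$. So the real content is the second statement, and the plan is to show that $\Gamma_{\omega}(\Lambda)$ is itself $U_{\Sy}$-invariant and then to exploit the fact, already recorded in Proposition \ref{prop:invh}, that $\tau_{\Sy}$ acts as the identity on invariant self-adjoint operators.

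Concretely, I would set $A:=\Gamma_{\omega}(\Lambda)$. Since $\Gamma_{\omega}$ is linear, bounded and adjoint-preserving, $A$ is a bounded self-adjoint element of $\lhs$. Using the invariance of $\Lambda$, namely $\mathcal{U}(\theta)\Lambda\mathcal{U}(\theta)^*=\Lambda$ for all $\theta$, together with the covariance of $\Gamma_{\omega}$ just recalled, one computes for every $\theta$
\[
U_{\Sy}(\theta)A\,U_{\Sy}(\theta)^*=U_{\Sy}(\theta)\Gamma_{\omega}(\Lambda)U_{\Sy}(\theta)^*=\Gamma_{\omega}\bigl(\mathcal{U}(\theta)\Lambda\mathcal{U}(\theta)^*\bigr)=\Gamma_{\omega}(\Lambda)=A,
\]
so $A$ satisfies condition (2) of Proposition \ref{prop:invh}. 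By the equivalence of conditions (2), (3) and (4) there, this is the same as $\tau_{\Sy}(A)=A$, which in the integral form of (3) (see also \eqref{eq:yobs}) reads $\frac1{2\pi}\int_{S^1}U(\theta)AU(\theta)^*\,d\theta=A$. Hence $\Gamma_{\omega}(\Lambda)=A=\tau_{\Sy}(A)=\frac1{2\pi}\int_{S^1}U(\theta)AU(\theta)^*\,d\theta$, which is the claimed identity. Such an $A$ is of course not unique: any $A'$ with $\tau_{\Sy}(A')=\tau_{\Sy}(A)$, e.g.\ any $A'$ for which $A-A'$ lies in the kernel of $\tau_{\Sy}$, will serve equally well.

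There is essentially no serious obstacle here; the proof is a direct concatenation of Lemma \ref{lem:cova} and Proposition \ref{prop:invh}. The only points deserving a word of care are that $\Gamma_{\omega}(\Lambda)$ genuinely lands among the bounded self-adjoint operators on $\his$ (which follows from the already-established linearity, boundedness and adjoint-preservation of $\Gamma_{\omega}$) and that the operator-valued integral defining $\tau_{\Sy}$ is understood in the appropriate weak sense, exactly as in \eqref{eq:yobs}.
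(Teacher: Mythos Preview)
Your proof is correct and follows exactly the route the paper indicates: the paper states only that the result is a simple corollary of Lemma \ref{lem:cova} and Proposition \ref{prop:invh}, and you have spelled out precisely that deduction (indeed, the second part of Lemma \ref{lem:cova} already contains your computation showing $\Gamma_{\omega}(\Lambda)$ is invariant). Your additional remarks on boundedness, self-adjointness, and non-uniqueness of $A$ are apt and slightly more explicit than the paper itself.
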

The proof is a simple corollary of Lemma \ref{lem:cova} and Proposition \ref{prop:invh}. We include it separately to emphasise the general property of invariance of restricted quantities obtained from general invariant quantities and invariant reference states.

\subsection{Discussion}

We have now seen that in the case of perfect reference phase localisation, 
absolute quantities of $\Sy$, along with non-invariant states, provide an adequate theoretical
and empirical account of the statistics produced by invariant quantities of $\Sy + \R$. In this case, the reference can be externalised, and excluded from the description. Though such a localised state is certainly a quantum state, there is a sense in which it may be viewed as classical---if the reference were provided by an abelian algebra (say, $C_0(G)$, embedded in $\mathcal{L}(L^2(G))$), classical pure states correspond to points in $G$ which, are of course, localised (moreover, as shown in \cite{mlb}, ``good" reference frames must be large, pointing to some form of classicality). These observations go some way towards justifying the informal
use of external/classical reference frames in working with non-invariant states of $\Sy$, which has become common-place in the literature.

The $\tau_{\Sy}$ mapping (and $\tau_{\Sy *}$) manifests in (at least) two distinct ways. Initially,
we observed that the assumption that observables of $\Sy$ are invariant implies that $\rho_{\Sy}$ and $\tau_{\Sy *}(\rho_{\Sy})$ cannot be distinguished, yielding an equivalence class of indistinguishable states (indeed, the notion of state could be redefined as this class). Now we see that $\tau_{\Sy *}$ produces a state description for $\Sy$ applicable when $\R$ is prepared in a completely phase-indefinite state (for example, an eigenstate of $N_{\R}$.) Indeed, $\tau_{\Sy *}$
(or \eqref{eq:ystate}) is the ``twirling" operation used in, e.g., 
\cite{brs}, to yield a state description in which some observer of $\Sy$ ``lacks a phase reference". There, it is argued that this is the description one would use if some experimenter 
wished to describe the state of $\Sy$, but had no knowledge of the value of the (classical) phase reference that the state of $\Sy$ implicitly refers to.

In our formulation, this averaging arises as part of the physical description 
of an experiment in which the reference phase is completely phase-indefinite. Number states are of 
this type, and the phase-indefiniteness is a quantum restriction arising from number-phase preparation uncertainty relations. Therefore, we are able to give an alternative interpretation
of to what the ``lack of a phase reference" may be understood to refer: it is the situation
in which a \emph{quantum} phase reference is completely phase-indeterminate. There is no requirement of epistemic arguments regarding information possessed by experimenters, nor any need to refer to classical phase references at all.

\subsection{The findings of \cite{mlb}: General Considerations}

We briefly review the findings of \cite{mlb}, which presents a study of ``intermediate" situations
for the reference frame, in between the very high and very low localisation covered in the 
present paper. \cite{mlb} provides quantitative and operational size-versus-inaccuracy trade-off relations highlighting the necessity of large apparatus for good agreement between some arbitrary effect $A$ and $\Gamma_{\omega}(E)$ for invariant effect $E$. 

We now state the main results more precisely. In the following, the Hilbert spaces involved are assumed finite dimensional. The operator norm $\no{\cdot}$ on $\lhs$ induces the metric $D(A,B):= \no{A-B} = \sup_{\sigma}|\sigma (A) - \sigma (B)|$, which when restricted to the set of effects $\mathcal{E}(\his)$ gives an operational measure of the discrepancy between the effects $A$ and $B$. 

In the following, the quantity $W^0_{\epsilon}(\mu_{\omega_\R}^\F)$ refers to the overall width (cf.~Definition \ref{def:ovw}) of the probability measure $\mu_{\omega_\R}^\F(X)\equiv \omega_{\R}\circ \F(X)$
around $0$, i.e., 
\begin{eqnarray*}
W^0_{\epsilon}(\mu_{\omega_\R}^\F)
:=  \inf \left\{w \big |\ \mu_{\omega_{\R}}^{\F}\bigl( \mathcal{I}(0,w)\bigr)\geq 1-\epsilon\right\},
\end{eqnarray*}
where $\mathcal{I}(0,w)$ denotes the closed interval of width $w \leq 2 \pi$ centred at $\theta \in [-\pi. \pi)$.
Of course, $W^0_{\epsilon}(\mu_{\omega_\R}^\F) 
\geq W_{\epsilon}(\mu_{\omega_\R}^\F)$. This $\inf$ can be replaced by $\min$ (see, e.g., \cite{pape17}, chapter 12).

In the case in which relative quantities of $\Sy + \R$ are obtained through $\Y$, strong localisation of $\omega$ around $\theta = 0$ gives good approximation between absolute and relativised effects: 
\begin{proposition}\label{prop1}
Let $\Y$ be a relativisation map and $\Gamma_{\omega}$ a restriction map.
For an arbitrary effect $A$ and $0\leq \epsilon <1$, it holds that
\begin{eqnarray*}
D\bigl(A, \Gamma_{\omega}(\Y (A))\bigr)
\leq \bigl\Vert [N_{\Sy}, A]\bigr\Vert  
\left( \tfrac12{W^0_{\epsilon}(\mu_{\omega}^{\F})}(1-\epsilon) 
+ \pi \epsilon \right).
\end{eqnarray*}
\end{proposition}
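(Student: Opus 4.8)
The plan is to bound the operator-norm distance $D\bigl(A,\Gamma_\omega(\Y(A))\bigr)$ by using the integral representation of the composite map from subsection 6.3, namely
\begin{equation*}
(\Gamma_\omega\circ\Y)(A)=\int_{S^1}U(\theta)AU(\theta)^*\,d\mu^\F_\omega(\theta),
\end{equation*}
so that, since $\mu^\F_\omega$ is a probability measure and $U(0)=\id$,
\begin{equation*}
A-\Gamma_\omega(\Y(A))=\int_{S^1}\bigl(A-U(\theta)AU(\theta)^*\bigr)\,d\mu^\F_\omega(\theta).
\end{equation*}
Taking norms and using convexity of $\no{\cdot}$ gives $D\bigl(A,\Gamma_\omega(\Y(A))\bigr)\le\int_{S^1}\no{A-U(\theta)AU(\theta)^*}\,d\mu^\F_\omega(\theta)$. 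The first key step is therefore to estimate the integrand $\no{A-U(\theta)AU(\theta)^*}$ pointwise in $\theta$. Writing $U(\theta)=e^{iN_\Sy\theta}$ and differentiating, one has $U(\theta)AU(\theta)^*-A=\int_0^\theta U(s)\,i[N_\Sy,A]\,U(s)^*\,ds$, whence $\no{A-U(\theta)AU(\theta)^*}\le|\theta|\,\no{[N_\Sy,A]}$. (Here one should be a little careful about unboundedness of $N_\Sy$, but since $A$ is an effect it is bounded; if $[N_\Sy,A]$ is to make sense one restricts attention to $A$ in a suitable domain, or one notes the bound $\no{A-U(\theta)AU(\theta)^*}\le 2\no{A}\le 2$ as a crude fallback on the complement of a small interval.)

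The second key step is to split the integral at the overall-width interval. Let $w:=W^0_\epsilon(\mu^\F_\omega)$ and let $\mathcal{I}(0,w)$ be the closed interval of width $w$ centred at $0$ realizing (up to the $\min$ remark) the overall width, so that $\mu^\F_\omega\bigl(\mathcal{I}(0,w)\bigr)\ge 1-\epsilon$ and hence $\mu^\F_\omega\bigl(S^1\setminus\mathcal{I}(0,w)\bigr)\le\epsilon$. On $\mathcal{I}(0,w)$ we have $|\theta|\le w/2$, so the pointwise bound gives integrand $\le\tfrac{w}{2}\no{[N_\Sy,A]}$ there; integrating against a measure of mass $\le 1-\epsilon$ on that set contributes at most $\tfrac{w}{2}\no{[N_\Sy,A]}(1-\epsilon)$. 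On the complement $S^1\setminus\mathcal{I}(0,w)$ we use the universal bound $|\theta|\le\pi$ (recall $S^1$ is identified with $[-\pi,\pi)$), giving integrand $\le\pi\no{[N_\Sy,A]}$, and the measure of this set is $\le\epsilon$, contributing at most $\pi\epsilon\,\no{[N_\Sy,A]}$. Adding the two contributions yields exactly
\begin{equation*}
D\bigl(A,\Gamma_\omega(\Y(A))\bigr)\le\no{[N_\Sy,A]}\left(\tfrac12 W^0_\epsilon(\mu^\F_\omega)(1-\epsilon)+\pi\epsilon\right),
\end{equation*}
which is the claimed inequality.

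The main obstacle I anticipate is the pointwise Lipschitz-type estimate $\no{U(\theta)AU(\theta)^*-A}\le|\theta|\,\no{[N_\Sy,A]}$ and its justification when $N_\Sy$ is unbounded: one needs $s\mapsto U(s)[N_\Sy,A]U(s)^*$ to be (strongly, or at least weakly) integrable and the fundamental-theorem-of-calculus argument to be valid. In the finite-dimensional setting declared for this subsection this is unproblematic — everything is a smooth matrix-valued function — so the honest route is simply to invoke finite-dimensionality, observe $\tfrac{d}{d\theta}\bigl(U(\theta)AU(\theta)^*\bigr)=U(\theta)i[N_\Sy,A]U(\theta)^*$, and integrate. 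A secondary minor point is to confirm that the $\inf$ in the definition of $W^0_\epsilon$ is attained (the paper notes this, citing \cite{pape17}), so that the splitting interval genuinely has measure $\ge 1-\epsilon$; if one prefers to avoid this, replace $w$ throughout by $w+\eta$ and let $\eta\downarrow 0$ at the end.
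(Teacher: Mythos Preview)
The paper does not supply its own proof of this proposition; it is quoted from the companion paper \cite{mlb}. Your overall strategy---express $A-\Gamma_\omega(\Y(A))$ as an integral against $\mu_\omega^\F$, bound the integrand by $|\theta|\,\no{[N_\Sy,A]}$ via the fundamental theorem of calculus (legitimate here since the paper explicitly assumes finite dimension in this subsection), and split the integral at the interval $\mathcal{I}(0,w)$ with $w=W^0_\epsilon(\mu_\omega^\F)$---is sound and is the natural route.

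There is, however, a genuine slip in your accounting. You write that on $\mathcal{I}(0,w)$ you are ``integrating against a measure of mass $\le 1-\epsilon$''. This is backwards: by definition of $W^0_\epsilon$, one has $\mu_\omega^\F\bigl(\mathcal{I}(0,w)\bigr)\ge 1-\epsilon$, not $\le$. Taken at face value, your bound on the inner contribution would only give $\tfrac{w}{2}\no{[N_\Sy,A]}\cdot 1$, which is too weak.

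The repair is simple but you should make it explicit. Set $p:=\mu_\omega^\F\bigl(\mathcal{I}(0,w)\bigr)\ge 1-\epsilon$, so the complement has mass $1-p\le\epsilon$. Your two pointwise bounds then give
\[
D\bigl(A,\Gamma_\omega(\Y(A))\bigr)\le\no{[N_\Sy,A]}\Bigl(\tfrac{w}{2}\,p+\pi(1-p)\Bigr).
\]
Since $w/2\le\pi$, the right-hand side is monotone \emph{decreasing} in $p$; hence over $p\in[1-\epsilon,1]$ it is maximised at $p=1-\epsilon$, which yields exactly $\no{[N_\Sy,A]}\bigl(\tfrac{w}{2}(1-\epsilon)+\pi\epsilon\bigr)$. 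Alternatively one may invoke the absolute continuity of covariant phase {\sc pom}s (noted after Lemma~\ref{loclem}) to argue that $w\mapsto\mu_\omega^\F(\mathcal{I}(0,w))$ is continuous and hence equals $1-\epsilon$ exactly at the minimising width; but the monotonicity argument is cleaner and requires no regularity.
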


Bad localisation gives bad approximation:
\begin{proposition}\label{prop2}
For $A= \frac{1}{2}\bigl(|0\rangle \langle 0| + |1\rangle \langle 1| 
  +|0\rangle \langle 1| + |1\rangle \langle 0|\bigr)$, it holds that 
\begin{eqnarray*}
D\bigl(A, \Gamma_{\omega_{\R}}( \Y(A))\bigr)
\geq \frac{\epsilon}{2}\Bigl( 1- \cos \left (\tfrac12{W^0_{\epsilon}(\mu_{\omega_\R}^{\F})}\right) \Bigr).
%(1-\epsilon). 
\end{eqnarray*} 
\end{proposition}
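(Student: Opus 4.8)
The plan is to collapse the whole statement to a two-dimensional computation on the system qubit, followed by one elementary estimate for a circular moment of the reference's phase distribution. First I would evaluate $(\Gamma_{\omega_\R}\circ\Y)(A)$ explicitly. Here $N_{\Sy}=|1\rangle\langle1|$, so $U_{\Sy}(\theta)|0\rangle=|0\rangle$ and $U_{\Sy}(\theta)|1\rangle=e^{i\theta}|1\rangle$, whence $U_{\Sy}(\theta)AU_{\Sy}(\theta)^*=\tfrac12\bigl(|0\rangle\langle0|+|1\rangle\langle1|+e^{-i\theta}|0\rangle\langle1|+e^{i\theta}|1\rangle\langle0|\bigr)$. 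Using the composition formula $(\Gamma_{\omega}\circ\Y)(A)=\int_{S^1}U_{\Sy}(\theta)AU_{\Sy}(\theta)^*\,d\mu^{\F}_{\omega}(\theta)$ obtained above, and writing $z:=\int_{S^1}e^{i\theta}\,d\mu^{\F}_{\omega_\R}(\theta)$, I get $(\Gamma_{\omega_\R}\circ\Y)(A)=\tfrac12\bigl(|0\rangle\langle0|+|1\rangle\langle1|+\bar z\,|0\rangle\langle1|+z\,|1\rangle\langle0|\bigr)$. Hence $A-(\Gamma_{\omega_\R}\circ\Y)(A)=\tfrac12\bigl((1-\bar z)|0\rangle\langle1|+(1-z)|1\rangle\langle0|\bigr)$ is an off-diagonal self-adjoint $2\times2$ operator whose operator norm equals $\tfrac12|1-z|$, so that $D\bigl(A,\Gamma_{\omega_\R}(\Y(A))\bigr)=\tfrac12|1-z|\ge\tfrac12\,{\rm Re}(1-z)=\tfrac12\int_{S^1}(1-\cos\theta)\,d\mu^{\F}_{\omega_\R}(\theta)$.

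It then remains to bound $\int_{S^1}(1-\cos\theta)\,d\mu$ from below in terms of the overall width, where I abbreviate $\mu:=\mu^{\F}_{\omega_\R}$ and $W:=W^0_\epsilon(\mu)$. For any $w<W$, the definition of the width around $0$ forces $\mu\bigl(\mathcal{I}(0,w)\bigr)<1-\epsilon$, hence $\mu\bigl(S^1\setminus\mathcal{I}(0,w)\bigr)>\epsilon$, where $\mathcal{I}(0,w)=[-w/2,w/2]$ is the closed arc of length $w$ about $0$. On the complement of that arc one has $|\theta|>w/2$, and since $\theta\mapsto1-\cos\theta$ is nondecreasing in $|\theta|$ on $[0,\pi]$ while $w/2\le\pi$, there $1-\cos\theta\ge1-\cos(w/2)$. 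Restricting the nonnegative integrand to that complement gives $\int_{S^1}(1-\cos\theta)\,d\mu\ge(1-\cos(w/2))\,\mu\bigl(S^1\setminus\mathcal{I}(0,w)\bigr)>\epsilon\,(1-\cos(w/2))$, and letting $w\uparrow W$ (using continuity of $\cos$, and that the infimum defining $W$ may be taken as a minimum as remarked before Proposition \ref{prop1}) yields $\int_{S^1}(1-\cos\theta)\,d\mu\ge\epsilon\,(1-\cos(W/2))$. Combining with the first paragraph gives $D\bigl(A,\Gamma_{\omega_\R}(\Y(A))\bigr)\ge\tfrac{\epsilon}{2}\bigl(1-\cos\bigl(\tfrac12 W^0_\epsilon(\mu^{\F}_{\omega_\R})\bigr)\bigr)$, as required.

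The only step that needs care is the last one: one must be sure that the arc in the width functional is exactly the set off which $1-\cos\theta$ can be bounded below, and that the strict inequality $\mu(S^1\setminus\mathcal{I}(0,w))>\epsilon$ is handled correctly in the passage $w\uparrow W$ (it is, once the $1-\cos$ factor is inserted and a limit is taken). Everything else is a direct $2\times2$ calculation. I would also stress, in contrast with the high-localisation result (Theorem \ref{prop:subrec}), that no norm-1 hypothesis on $\F$ enters here: the lower bound holds for every covariant phase {\sc pom} and every reference state, which is precisely what makes it a genuine obstruction to approximating $A$ by $\Gamma_{\omega_\R}(\Y(A))$ when the reference is poorly localised.
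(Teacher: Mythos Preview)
Your proof is correct. The paper itself does not prove this proposition---it is stated without proof as part of a review of results from \cite{mlb}---so there is no argument in the paper to compare against; your direct $2\times2$ computation of $\Gamma_{\omega_\R}(\Y(A))$ followed by the elementary width estimate on $\int(1-\cos\theta)\,d\mu^{\F}_{\omega_\R}$ is the natural route and presumably mirrors what appears in \cite{mlb}.
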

And finally, 

\begin{theorem}\label{thm:owb}
Let $A$ be an effect defined by $A=\frac{1}{2}(|0\rangle \langle 0 | 
+ |1\rangle \langle 1| + |0\rangle \langle 1 | + |1\rangle \langle 0|)$.  
For $\omega_{\R}$ satisfying $\Delta_{\omega_{\R}} N_{\R} < \frac{1}{6}$,
\begin{eqnarray*}
D(A, \Gamma_{\omega_{\R}}(\Y(A))) > \frac{1}{32}.
\end{eqnarray*}
For $\omega_{\R}$ satisfying 
$\Delta_{\omega_\R} N_{\R} \geq \frac{1}{6}$, 
it holds 
\begin{eqnarray*}
D(A, \Gamma_{\omega_{\R}}(\Y(A))) 
\geq \frac{1}{32}\left(
1- \cos\left( 
\frac{\pi}{12\Delta_{\omega_{\R}}N_{\R}}\right)
\right). 
\end{eqnarray*}
\end{theorem}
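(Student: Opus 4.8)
The statement concerns the \emph{same} effect $A$ as the preceding proposition, which already supplies the lower bound $D\bigl(A,\Gamma_{\omega_\R}(\Y(A))\bigr)\ge\tfrac\epsilon2\bigl(1-\cos\bigl(\tfrac12 W^0_\epsilon(\mu^\F_{\omega_\R})\bigr)\bigr)$ for every $\epsilon\in[0,1)$. Since $1-\cos$ is nondecreasing on $[0,\pi]$ and $W^0_\epsilon(\mu^\F_{\omega_\R})\le 2\pi$ always, the plan is to combine this with a \emph{number--phase preparation uncertainty relation} bounding $W^0_\epsilon(\mu^\F_{\omega_\R})$ from below by a decreasing function of $\Delta_{\omega_\R}N_\R$. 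Physically: a reference with small number spread is forced to be phase-delocalised, so its phase distribution cannot concentrate in a short interval about $0$, and the preceding proposition then forces $D$ to be large; quantitatively this is exactly the content of the two regimes in the statement.

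The key lemma to establish is: for all covariant phase {\sc pom}s $\F$, all states $\omega_\R$ and all $\epsilon\in(0,1)$, $W^0_\epsilon(\mu^\F_{\omega_\R})\ge\min\bigl\{2\pi,\ \tfrac{1-\epsilon}{c_0+c_1\Delta_{\omega_\R}N_\R}\bigr\}$ for suitable universal constants $c_0,c_1$. I would prove it by bounding the density: covariant phase {\sc pom}s are absolutely continuous, and for $\F$ as in \eqref{phase} the density of $\mu^\F_{\omega_\R}$ is $f(\theta)=\tfrac1{2\pi}\sum_{n,m}c_{nm}\langle m|\omega_\R|n\rangle e^{i(n-m)\theta}$; using $|c_{nm}|\le1$, writing $p_n=\langle n|\omega_\R|n\rangle$, and estimating $\sum_n\sqrt{p_n}$ by splitting the number distribution into dyadic shells about its mean (Cauchy--Schwarz on each shell, the shell masses being controlled by $\Delta_{\omega_\R}N_\R$), one gets $\sup_\theta f(\theta)\le c_0+c_1\Delta_{\omega_\R}N_\R$, whence $\mu^\F_{\omega_\R}\bigl([-\tfrac w2,\tfrac w2]\bigr)\le w\,\sup_\theta f(\theta)$ immediately yields the claimed lower bound on $W^0_\epsilon$. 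An alternative, more ``dynamical'' route uses the covariance \eqref{eq:phasedef}: rotating $\omega_\R\mapsto e^{-iN_\R\theta}\omega_\R e^{iN_\R\theta}$ translates $\mu^\F_{\omega_\R}$ by $\theta$, and since a {\sc pom} does not increase trace distance while $\|e^{-iN_\R\theta}\omega_\R e^{iN_\R\theta}-\omega_\R\|_1\le|\theta|\,\|[N_\R,\omega_\R]\|_1\le 2|\theta|\,\Delta_{\omega_\R}N_\R$ (the commutator estimate being an equality on pure states, extended by concavity of the variance), a concentration interval of length $w\le\pi$ carrying mass $\ge1-\epsilon$ would, after translation by $w$, produce a disjoint interval of mass $\le\epsilon$, contradicting the trace-distance bound unless $w\gtrsim(1-\epsilon)/\Delta_{\omega_\R}N_\R$. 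If $\F$ is not projection-valued one first passes to a Naimark dilation, e.g.\ the canonical angle observable of the first example, which has the norm-1 property.

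Given the lemma the assembly is routine: fix $\epsilon=\tfrac1{16}$ so that $\tfrac\epsilon2=\tfrac1{32}$. If $\Delta_{\omega_\R}N_\R<\tfrac16$, the lemma forces $\tfrac12 W^0_{1/16}(\mu^\F_{\omega_\R})>\tfrac\pi2$, so $1-\cos\bigl(\tfrac12 W^0_{1/16}\bigr)>1$ and hence $D>\tfrac1{32}$. If $\Delta_{\omega_\R}N_\R\ge\tfrac16$, the lemma gives $\tfrac\pi{12\,\Delta_{\omega_\R}N_\R}\le\tfrac12 W^0_{1/16}\le\pi$, and monotonicity of $1-\cos$ on $[0,\pi]$ upgrades the preceding proposition to $D\ge\tfrac1{32}\bigl(1-\cos(\tfrac\pi{12\,\Delta_{\omega_\R}N_\R})\bigr)$, as claimed.

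The main obstacle is the \emph{constants}. The elementary arguments above give the correct $1/\Delta_{\omega_\R}N_\R$ scaling of $W^0_\epsilon$, but a careless execution produces a constant slightly too small to reach exactly the thresholds $\tfrac16$, $\tfrac1{32}$ and $\tfrac\pi{12}$; one must therefore use the sharp form of the commutator/quantum-speed-limit estimate (e.g.\ $|\langle e^{-iN_\R\theta}\rangle_{\omega_\R}|\ge\cos(\Delta_{\omega_\R}N_\R|\theta|)$ of Mandelstam--Tamm type) or an essentially optimal density bound $\sup_\theta f(\theta)\lesssim\Delta_{\omega_\R}N_\R$, keep the $\epsilon$-dependence explicit, and verify that the choice $\epsilon=\tfrac1{16}$ closes the gap. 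A secondary technical point is uniformity over mixed $\omega_\R$ and over all covariant phase {\sc pom}s, including the finite-dimensional ones of Example 3, for which Lemma \ref{loclem} already gives the required delocalisation; the Naimark reduction makes the estimates uniform.
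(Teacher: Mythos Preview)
The paper does not actually prove this theorem; it is quoted without proof from reference \cite{mlb} as part of a summary of results established there. Your overall strategy---feed a number--phase uncertainty bound of the form $W^0_\epsilon\cdot\Delta_{\omega_\R}N_\R\gtrsim\text{const}$ into the preceding proposition at $\epsilon=\tfrac{1}{16}$---is exactly the intended mechanism.

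Your worry about the constants is misplaced for your \emph{second} route, which in fact closes on the nose. The identity $\omega_\R-e^{-iN_\R w}\omega_\R e^{iN_\R w}=\int_0^w e^{-iN_\R s}[-iN_\R,\omega_\R]e^{iN_\R s}\,ds$ gives $\|\omega_\R-\omega_\R^w\|_1\le w\,\|[N_\R,\omega_\R]\|_1$, and your estimate $\|[N_\R,\omega_\R]\|_1\le 2\Delta_{\omega_\R}N_\R$ is correct (equality on pure states; for mixtures, triangle inequality plus Jensen and concavity of the variance). If $w=W^0_\epsilon\le\pi$ the translated interval is disjoint, so $1-2\epsilon\le\tfrac12\|\omega_\R-\omega_\R^w\|_1\le w\,\Delta_{\omega_\R}N_\R$. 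With $\epsilon=\tfrac{1}{16}$ this says either $W^0_{1/16}>\pi$ or $W^0_{1/16}\ge\tfrac{7}{8}/\Delta_{\omega_\R}N_\R$. Since $\tfrac{7}{8}>\tfrac{\pi}{6}$, for $\Delta_{\omega_\R}N_\R\ge\tfrac{1}{6}$ both alternatives imply $W^0_{1/16}\ge\pi/(6\Delta_{\omega_\R}N_\R)$, and for $\Delta_{\omega_\R}N_\R<\tfrac{1}{6}$ one is forced into $W^0_{1/16}>\pi$. Plugging into Proposition~\ref{prop2} gives precisely the two stated bounds. No Naimark dilation or Mandelstam--Tamm refinement is needed.

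The genuine gap is in your \emph{first} route. The dyadic-shell Cauchy--Schwarz argument yields $\sum_n\sqrt{p_n}\lesssim 1+\Delta_{\omega_\R}N_\R$, but the density satisfies $\sup_\theta f(\theta)\le\tfrac{1}{2\pi}\bigl(\sum_n\sqrt{p_n}\bigr)^2$, which is quadratic, not linear, in that sum. Hence you only obtain $W^0_\epsilon\gtrsim(1-\epsilon)/(1+\Delta_{\omega_\R}N_\R)^2$, the wrong scaling; this cannot be pushed to $\pi/(6\Delta_{\omega_\R}N_\R)$ for large $\Delta_{\omega_\R}N_\R$. Drop that route and keep the trace-distance argument, and the proof is complete as written.
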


One may go beyond the case of invariant quantities being obtained using $\Y$, and consider general invariant quantities. The following holds for a general (a finite-dimensional, connected) Lie group $G$, acting via projective unitary representations $U_{\Sy}$ and $U_{\R}$ in $\his$ and $\hir$ respectively, with self-adjoint generators $N_{\Sy}$ and $N_{\R}$.

\begin{theorem}\label{th:tradeoff} Recall that $V(A)=\Vert A-A^2\Vert$, $D(A,B) = \no{A-B}$,
$N_{\Sy}$ and $N_{\R}$ are number operators on $\his$ and $\hir$ respectively, and $\omega_{\R} \in \mathcal{S}(\hir)$. Then the following inequality holds:
\begin{align*}
\bigl\Vert [A, N_\Sy]\bigr\Vert 
%&
\leq %\blue{2?} 
2 D\bigl(\Gamma(E), A\bigr) \Vert N_\Sy\Vert
%\\
%&
+2 \left(\omega_{\R}(N_{\R}^2)-\omega_{\R}(N_{\R})^2\right)^{1/2}
%\\
%&
\,\bigl(2D(\Gamma(E),A)+V(A)\bigr)^{1/2}.
\end{align*}
\end{theorem}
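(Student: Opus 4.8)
The inequality relates the degree of non-invariance of $A$, measured by $\Vert [A,N_\Sy]\Vert$, to two terms: the approximation error $D(\Gamma(E),A)$ between $A$ and the restriction of a genuinely invariant effect $E$, scaled by $\Vert N_\Sy\Vert$, and the fluctuation $(\Delta_{\omega_\R}N_\R)$ of the number operator in the reference state. The overall strategy is to extract $[A,N_\Sy]$ as an infinitesimal version of the non-invariance $U_\Sy(g)AU_\Sy(g)^* - A$, exploit that $E$ \emph{is} invariant under $\mathcal{U}(g)=U_\Sy(g)\otimes U_\R(g)$, and then push the difference onto the reference side, where the only remaining obstruction to invariance is the spread of $N_\R$ in $\omega_\R$. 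Concretely, I would work at the level of the generator by differentiating a one-parameter family at the origin.

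\textbf{Step 1: reduce to first order in the group parameter.} Since $G$ is a connected Lie group, it suffices to estimate $\Vert [A,N_\Sy]\Vert$, and because $N_\Sy$ generates a one-parameter subgroup it is enough to consider $G=\mathbb{R}$ acting by $\theta\mapsto e^{i\theta N_\Sy}$; one has $[A,N_\Sy] = -i\,\frac{d}{d\theta}\big|_{0}\, e^{i\theta N_\Sy}Ae^{-i\theta N_\Sy}$, so bounding $\Vert e^{i\theta N_\Sy}Ae^{-i\theta N_\Sy}-A\Vert$ linearly in $|\theta|$ for small $\theta$ controls the commutator. I would therefore aim to show $\Vert e^{i\theta N_\Sy}Ae^{-i\theta N_\Sy}-A\Vert \le |\theta|\bigl(\text{RHS of the theorem}\bigr) + o(\theta)$, and then divide by $|\theta|$ and let $\theta\to 0$.

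\textbf{Step 2: insert the invariant effect and move to the reference.} Write
\[
e^{i\theta N_\Sy}Ae^{-i\theta N_\Sy}-A
= e^{i\theta N_\Sy}(A-\Gamma(E))e^{-i\theta N_\Sy} - (A-\Gamma(E)) + \bigl(e^{i\theta N_\Sy}\Gamma(E)e^{-i\theta N_\Sy}-\Gamma(E)\bigr).
\]
The first two terms are each bounded in norm by $D(\Gamma(E),A)$, giving a contribution $\le 2D(\Gamma(E),A)$; but I want the $\Vert N_\Sy\Vert$ factor, so instead I would estimate $\Vert e^{i\theta N_\Sy}(A-\Gamma(E))e^{-i\theta N_\Sy}-(A-\Gamma(E))\Vert \le |\theta|\,\Vert [N_\Sy, A-\Gamma(E)]\Vert + o(\theta) \le 2|\theta|\,\Vert N_\Sy\Vert\, D(\Gamma(E),A) + o(\theta)$ using $\Vert [N_\Sy,B]\Vert\le 2\Vert N_\Sy\Vert\Vert B\Vert$. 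For the last term, since $E$ is $\mathcal{U}$-invariant, $e^{i\theta N_\Sy}\Gamma(E)e^{-i\theta N_\Sy}-\Gamma(E) = \Gamma_{\omega_\R}\bigl((e^{i\theta N_\Sy}\otimes\id)E(e^{-i\theta N_\Sy}\otimes\id)-E\bigr) = \Gamma_{\omega_\R}\bigl(E-(\id\otimes e^{i\theta N_\R})E(\id\otimes e^{-i\theta N_\R})\bigr)$, using invariance to trade the system rotation for a reference rotation. So this term equals $\omega_\R$-restriction of $[\,i\theta(\id\otimes N_\R), E\,] + o(\theta)$, i.e. $i\theta\,\Gamma_{\omega_\R}([\id\otimes N_\R, E]) + o(\theta)$.

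\textbf{Step 3: bound the reference term by the number fluctuation.} The remaining task is to bound $\Vert \Gamma_{\omega_\R}([\id\otimes N_\R, E])\Vert$ by something like $2(\Delta_{\omega_\R}N_\R)(2D(\Gamma(E),A)+V(A))^{1/2}$. Here I would use that $E$ is an effect, so $0\le E\le\id$ and $E-E^2\ge 0$ with $\Vert E-E^2\Vert = V(E)$ controllable by $V(A)+2D(\Gamma(E),A)$ (triangle-type estimate: $\Vert E-E^2\Vert\le\Vert E-A\Vert + \Vert A-A^2\Vert+\Vert A^2-E^2\Vert$, and $\Vert A^2-E^2\Vert\le 2D(A,E)$ since $\Vert A\Vert,\Vert E\Vert\le1$ — but note $D(A,E)$ is on $\lht$ vs $\lhs$, so one must be careful to read $\Gamma(E)$ here). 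For a fixed vector $\varphi\otimes\psi$ one expands $\langle \varphi\otimes\psi|[\id\otimes N_\R,E]\varphi\otimes\psi\rangle$ and applies Cauchy–Schwarz, writing $N_\R$ on one side as $(N_\R-\omega_\R(N_\R))$ (which is legitimate because the commutator kills the scalar part), producing a factor $\Delta_{\omega_\R}N_\R$ from $\Vert(N_\R-\omega_\R(N_\R))\psi\Vert$ and a factor $\langle\cdot|E - E^2|\cdot\rangle^{1/2}$-type quantity from the other — this is the standard ``variance of an effect'' trick. Summing/averaging over $\omega_\R$ and optimizing the vector $\varphi$ yields the claimed bound.

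\textbf{Main obstacle.} The delicate point is Step 3: getting the factor $V(A)$ (not merely $V(E)$) and the precise constant, and handling the interplay between operators on $\hit$ and $\his$ when $D(\Gamma(E),A)$ appears. One has to (i) correctly relate $V(E)$ on the big space to $V(A)$ on the small space through $\Gamma_{\omega_\R}$ and the bound $D(\Gamma(E),A)$ — note $\Gamma_{\omega_\R}$ is unital and positive, so $\Gamma_{\omega_\R}(E-E^2)\le \Gamma_{\omega_\R}(E)-\Gamma_{\omega_\R}(E)^2$ fails in general, the inequality goes the wrong way, so one needs the operator-convexity of $t\mapsto t^2$ carefully or a direct vector estimate — and (ii) verify that the Cauchy–Schwarz split genuinely isolates $\Delta_{\omega_\R}N_\R$ rather than $\omega_\R(N_\R^2)^{1/2}$. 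I expect the cleanest route is to avoid $\Gamma$-manipulations and instead fix a unit $\varphi\in\his$ achieving $\Vert[A,N_\Sy]\varphi\Vert$ near the norm, pair everything with $\varphi\otimes\psi$ for $\psi$ purifying/representing $\omega_\R$, and run the one-parameter differentiation entirely at the level of vectors in $\hit$, only invoking $D(\Gamma(E),A)$ at the very end via $|\langle\varphi|(A-\Gamma_\omega(E))\varphi'\rangle|\le D(\Gamma(E),A)$.
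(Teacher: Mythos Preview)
The paper does not actually prove this theorem; it is quoted from reference \cite{mlb} in the subsection reviewing results of that paper. So there is no in-paper proof to compare against. That said, your strategy is exactly the right one and matches how the result is obtained: write $[A,N_\Sy]=[A-\Gamma(E),N_\Sy]+[\Gamma(E),N_\Sy]$, bound the first term by $2\Vert N_\Sy\Vert D$, and use invariance of $E$ to rewrite the second as $-\Gamma_{\omega_\R}\bigl([\id\otimes N_\R,E]\bigr)$.

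Your identified obstacle in Step 3 is real but can be resolved cleanly, and you are essentially pointing at the right fix in your final paragraph. For pure $\omega_\R=P_\psi$ and unit $\varphi$, set $\Phi=\varphi\otimes\psi$, $c=\omega_\R(N_\R)$, and $\tilde\Phi=\varphi\otimes(N_\R-c)\psi$, so $\tilde\Phi\perp\Phi$ and $\Vert\tilde\Phi\Vert=\Delta_{\omega_\R}N_\R$. Since $[\id\otimes N_\R,E]$ is anti-self-adjoint, $\Vert\Gamma_{\omega_\R}([\id\otimes N_\R,E])\Vert=\sup_\varphi\bigl|\langle\Phi|[\id\otimes(N_\R-c),E]|\Phi\rangle\bigr|=\sup_\varphi 2\bigl|{\rm Im}\,\langle\tilde\Phi|E|\Phi\rangle\bigr|$. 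The crucial observation is that, because the $\hir$-components of $\tilde\Phi$ and $\Phi$ are orthogonal, one has $\langle\tilde\Phi|(\Gamma(E)\otimes\id)|\Phi\rangle=0$, and hence
\[
\bigl|\langle\tilde\Phi|E|\Phi\rangle\bigr|=\bigl|\langle\tilde\Phi|(E-\Gamma(E)\otimes\id)|\Phi\rangle\bigr|\le\Vert\tilde\Phi\Vert\,\Vert(E-\Gamma(E)\otimes\id)\Phi\Vert.
\]
Now $\Vert(E-\Gamma(E)\otimes\id)\Phi\Vert^2=\langle\varphi|\Gamma(E^2)-\Gamma(E)^2|\varphi\rangle\le\langle\varphi|\Gamma(E)-\Gamma(E)^2|\varphi\rangle\le V(\Gamma(E))$, using $E^2\le E$. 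Finally $V(\Gamma(E))\le V(A)+2D$ via the identity $\Gamma(E)(\id-\Gamma(E))-A(\id-A)=(\Gamma(E)-A)(\id-\Gamma(E))-A(\Gamma(E)-A)$ and the fact that $A,\Gamma(E)$ are effects. Combining gives the stated bound; mixed $\omega_\R$ follows by purification. Note that this sidesteps your worry about Kadison--Schwarz going ``the wrong way'': one never needs $\Gamma(E^2)\le\Gamma(E)^2$, only $\Gamma(E^2)\le\Gamma(E)$.
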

Therefore, good approximation between arbitrary absolute effects (of $\Sy$) and relative effects
(of $\Sy + \R$), a large spread in the reference's number operator is required, and sufficient for this is good phase localisation.

\subsection{Absolute Coherence}
The stipulation that observable quantities are invariant under the given symmetry action
bears strongly upon the possibility of operationally discerning between coherent superpositions (of eigenstates of the generator, in our case, number) and incoherent mixtures. It will therefore be useful to have a simple working definition and quantification of the \emph{absolute coherence} of states with respect to a number operator (see \cite{bcb1} for other measures and observables). 

\begin{definition}
Let $N$ be a number operator acting in {\rm(}generic Hilbert space{\rm)} $\hi$ and $\rho$ a density matrix. Then $\rho$ is \emph{absolutely coherent} with respect to $N$ if $\tau_*(\rho) \neq \rho$, and {\rm(}absolutely{\rm)} incoherent otherwise.
\end{definition}
 This suggests a measure of absolute coherence:
\begin{definition}
The absolute coherence $\mathcal{C}(\rho) :=\frac{1}{2}\no{\rho - \tau_*(\rho)}_1$.
\end{definition}
Clearly, then, a state $\rho$ is absolutely coherent if and only if it is not invariant under $\rho \mapsto e^{iN\theta}\rho e^{-iN\theta}$.
Let $\{\ket{n}\}$ be a (possibly infinite) orthonormal basis of $\hi$ consisting of eigenvectors of $N$. For an arbitrary state $\varphi = \sum_n c_n \ket{n}$, no self-adjoint operator $A$ commuting with $N$, i.e., no observable quantity, can distinguish between
$P[\varphi]$ and $\tau_{*}(P[\varphi]) = \sum{\mods{c_n}}\ket{n}\bra{n}$, i.e., between a state with absolute coherence and a state without it (Proposition \ref{prop:inv2}).

Since localised states with respect to phase conjugate to $N$ are not invariant, they are necessarily absolutely coherent. We now show that highly localised states have large absolute coherence, for the case of $S^1$ and finite dimensional Hilbert spaces. 

\begin{proposition}
Consider a covariant {\sc pom} $\E$ of $S^1$ on a finite
dimensional Hilbert space $\his$. 
For $\rho$ with overall width $W_{\epsilon}(\mu^{\E}_{\rho})$ 
and for an arbitrary $\epsilon$, 
it holds that 
\begin{eqnarray*}
\mathcal{C}(\rho) \geq 1 -2\epsilon -  \frac{3W_{\epsilon}(\mu^{\E}_{\rho})}{2\pi}
(1-2 \epsilon).
\end{eqnarray*} 
\end{proposition}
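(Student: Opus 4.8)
The plan is to lower-bound the absolute coherence $\mathcal{C}(\rho)=\tfrac12\|\rho-\tau_*(\rho)\|_1$ by pairing $\rho-\tau_*(\rho)$ against a well-chosen bounded self-adjoint observable and exploiting the localisation hypothesis. Recall that for any self-adjoint $T$, $\|T\|_1=\sup\{|\tr{TB}|:\,B=B^*,\ \|B\|\le1\}$; and since $\tau_*$ is self-adjoint as a map (with fixed-point algebra the $N$-invariant operators), $\tr{(\rho-\tau_*(\rho))B}=\tr{\rho(B-\tau(B))}$. So it suffices to produce a single effect-like observable $B$ with $\|B\|\le 1$ for which $\tr{\rho(B-\tau(B))}$ is large; equivalently, an observable $B$ that is far from its dephased version $\tau(B)$ in a way that $\rho$, being phase-localised near $0$, can detect.

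First I would pick $B$ to be (a rescaled version of) the phase effect $\E(\mathcal{I})$ associated with an interval $\mathcal{I}$ centred at $\theta=0$, or more precisely a smoothed/shifted combination designed so that $B-\tau(B)$ has controlled structure. The key point is that $\tau(\E(X))=\sum_n P_n\E(X)P_n$ is, by the covariance relation \eqref{eq:phasedef} and the computation in the Localisation Lemma, the ``diagonal part'' whose expectation in \emph{any} state is $|X|/2\pi$ — completely delocalised. Thus $\tr{\rho\,\tau(\E(\mathcal{I}))}=|\mathcal{I}|/2\pi$, while $\tr{\rho\,\E(\mathcal{I})}=\mu^{\E}_\rho(\mathcal{I})$ can be made close to $1$: by definition of the overall width, taking $\mathcal{I}$ to be the optimal interval of length $W_\epsilon(\mu^\E_\rho)$ achieving $\mu^\E_\rho(\mathcal{I})\ge 1-\epsilon$, we get $\tr{\rho(\E(\mathcal{I})-\tau(\E(\mathcal{I})))}\ge (1-\epsilon)-W_\epsilon(\mu^\E_\rho)/2\pi$. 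Since $\|\E(\mathcal{I})\|\le 1$ (it is an effect) and $\|\tau(\E(\mathcal{I}))\|\le1$, the operator $\E(\mathcal{I})-\tau(\E(\mathcal{I}))$ has norm $\le1$, so we already obtain $\mathcal{C}(\rho)\ge\tfrac12\bigl((1-\epsilon)-W_\epsilon/2\pi\bigr)$. This is the right shape but with the wrong constants, so the remaining work is to sharpen it to $1-2\epsilon-\tfrac{3W_\epsilon}{2\pi}(1-2\epsilon)$.

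To get the stated constants I would not pass through a single effect but instead use the $1$-norm directly via a Jordan-type decomposition, or test against $B=2\E(\mathcal{I})-\id$ (self-adjoint, $\|B\|\le1$, since $\E(\mathcal{I})$ is an effect), whose dephasing is $\tau(B)=2\tau(\E(\mathcal{I}))-\id$, giving $\tr{\rho(B-\tau(B))}=2(\mu^\E_\rho(\mathcal{I})-|\mathcal{I}|/2\pi)$. Optimising the interval: one can slightly enlarge $\mathcal{I}$ or split into the localising interval plus correction terms to trade the factor of $2$ against the $\epsilon$ and $W_\epsilon/2\pi$ terms, and the factor $3$ and the $(1-2\epsilon)$ in the cross term come from bounding $|\mathcal{I}|/2\pi$ in terms of $W_\epsilon$ after accounting for the possibility that the optimal width interval is not centred exactly at $0$ and must be symmetrised (costing roughly a factor up to $3$ in length on $S^1$, cf.\ the distinction between $W_\epsilon$ and $W^0_\epsilon$ in the earlier discussion of \cite{mlb}). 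So the steps are: (i) reduce $\mathcal{C}(\rho)$ to $\sup_B \tfrac12\tr{\rho(B-\tau(B))}$ over norm-$\le1$ self-adjoint $B$; (ii) take $B=2\E(\mathcal{I})-\id$ with $\mathcal{I}$ the width-$W_\epsilon$ interval; (iii) compute $\tr{\rho\,\tau(\E(\mathcal{I}))}=|\mathcal{I}|/2\pi$ from covariance exactly as in Lemma \ref{loclem}; (iv) bound $\mu^\E_\rho(\mathcal{I})\ge1-\epsilon$ and $|\mathcal{I}|\le W_\epsilon$ (up to the symmetrisation factor); (v) assemble and simplify.

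The main obstacle I anticipate is step (iv)–(v), namely extracting exactly the coefficients $2\epsilon$ and $3W_\epsilon/2\pi$ with the precise cross term $-\tfrac{3W_\epsilon}{2\pi}(1-2\epsilon)$: this requires being careful about whether the width-realising interval is centred at $0$ (it need not be, so one may have to enlarge it, or equivalently replace a factor by its worst case on the circle), and about choosing the test observable so that the $(1-2\epsilon)$ rather than $(1-\epsilon)$ appears — most naturally by using a convex combination of two interval effects or by a slightly loose bound $\mu^\E_\rho(\mathcal{I})\ge 1-2\epsilon$ that buys a better length estimate. Everything else (the duality formula for $\|\cdot\|_1$, the exact value $|\mathcal{I}|/2\pi$ for the dephased phase effect, self-adjointness of $\tau$) is routine given the results already in the excerpt.
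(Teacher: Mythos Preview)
Your approach coincides with the paper's: both use the trace-distance variational formula and test $\rho-\tau_*(\rho)$ against the phase effect $\E(\Delta)$ for an interval $\Delta$ of length $W:=W_\epsilon(\mu^\E_\rho)$ realising the overall width. (The paper works directly with the effect form $\mathcal{C}(\rho)=\sup_{0\le E\le\id}\bigl|\tr{\rho E}-\tr{\rho\,\tau(E)}\bigr|$; your detour via $B=2\E(\Delta)-\id$ lands in the same place.) Where you differ from the paper, and in fact do better, is step (iii): you observe that $\tr{\rho\,\tau(\E(\Delta))}=|\Delta|/2\pi$ \emph{exactly}, and this is correct --- by covariance $\tau(\E(\Delta))=\tfrac1{2\pi}\int \E(\Delta\dotplus\theta)\,d\theta$, and Fubini gives $\tfrac1{2\pi}\int\mu^\E_\rho(\Delta\dotplus\theta)\,d\theta=|\Delta|/2\pi$ for any state. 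The paper does \emph{not} compute this term; it only upper-bounds the integrand, using $\mu^\E_\rho(\Delta\dotplus\theta)\le 1-\epsilon$ for $\theta\in[-3W/2,3W/2]$ and $\le\epsilon$ outside (where $\Delta\dotplus\theta$ is disjoint from $\Delta$). That crude splitting is precisely the source of the factor $3$ and of the second $\epsilon$ in the stated constants --- not any symmetrisation or $W_\epsilon$-versus-$W^0_\epsilon$ subtlety. Your resulting bound $(1-\epsilon)-W/2\pi$ already dominates $1-2\epsilon-\tfrac{3W}{2\pi}(1-2\epsilon)$ throughout the non-trivial regime ($W<2\pi/3$, $\epsilon\le 1/2$), so you are finished after step (iii); the ``main obstacle'' you anticipate in (iv)--(v) does not exist.
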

\begin{proof}
For simplicity we assume $\Delta = [-W/2, W/2]$ satisfies
$\mbox{tr}(\rho \E(\Delta))= 1-\epsilon$, where $W= W_{\epsilon}(\mu^{\E}_{\rho})$. 
Since the claim is trivial for $3W/2 \geq \pi$, we assume 
$3W/2 <\pi$.  
Since $\mathcal{C}(\rho)= \frac{1}{2}\no{\rho - \tau_*(\rho)}_1 
= \sup_{E: \nul \leq \Vert E\Vert \leq \id}
| \mbox{tr}(\rho E) - \mbox{tr}(\rho \tau(E))|$ 
and $ \nul \leq E(\Delta) \leq \id$ hold, 
we have  
\begin{eqnarray*}
\mathcal{C}(\rho)
&\geq& \mbox{tr}( \rho \E(\Delta))- \frac{1}{2\pi} \int d\theta \mbox{tr}( \rho E(\Delta + \theta)).
\end{eqnarray*}
The first term in the right-hand side is $1-\epsilon$. 
The second term in the right-hand side is now estimated.
For any $\theta$, by the definition of the overall width, 
$\mbox{tr}(\rho E(\Delta +\theta)) \leq 1- \epsilon$ holds. 
For $\theta \in S^1 \setminus [-3W/2,3W/2]$,   
since $E(\Delta +\theta) \cap E(\Delta) = \emptyset$ holds,
we have $\mbox{tr}(\rho E(\Delta +\theta) )\leq \epsilon$. 
 Thus we obtain 
\begin{eqnarray*}
\frac{1}{2\pi} \int d\theta \mbox{tr}( \rho E(\Delta + \theta))
&=& \frac{1}{2\pi} \int^{3W/2}_{-3W/2} d\theta \mbox{tr}( \rho E(\Delta + \theta))
+ \frac{1}{2\pi} \int_{S^1 \setminus [-3W/2, 3W/2]}
 d\theta \mbox{tr}( \rho E(\Delta + \theta))
\\
&\leq& \frac{3W}{2\pi}(1-\epsilon) + \frac{2\pi -3W}{2\pi} \epsilon = \frac{3W}{2\pi}(1-2\epsilon) +\epsilon. 
\end{eqnarray*}
\end{proof}

\subsection{Summary and Analysis of a Potential Objection}\label{subsec:sapo}
Returning to the situation in which we identify a system $\Sy$ and a reference $\R$
with number observables $N_{\Sy}$ and $N_{\R}$ respectively, an apparent circularity arises.
Briefly summarising the story so far, we have argued that observable quantities are (defined
as) those which are invariant under given symmetries. Given a quantum object, this entails
that, in the phase-shift-invariance case, the states $\rho$ and $\tau_{\T_*}(\rho)$ are observationally equivalent and occupy the same equivalence class. From this point of view, absolute coherence is 
not a necessary feature of any description of the quantum object.

However, we have argued that in certain circumstances the given object may be separated into two parts: system $\Sy$ and reference $\R$, and that invariant quantities of $\Sy + \R$ can, in the case of 
$\R$ having a phase quantity possessing the norm-1 property, be arbitrarily well approximated 
by absolute quantities of $\Sy$, given a highly localised state of $\R$. Absolute quantities
are, in particular, sensitive to the difference between an absolutely coherent state $\rho$ and
its invariant, absolutely incoherent counterpart $\tau_{{\Sy}_*}(\rho)$.

Therefore, a description of $\Sy + \R$ in terms of only restricted/absolute quantities of $\Sy$ (not commuting with $N_{\Sy}$), along with states with absolute coherence is possible, \emph{given} states localised with respect to the absolute phase of $\R$, which requires that such states have absolute coherence with respect to $N_{\R}$. This poses a difficulty, since it appears that we claim the description in terms of $\Sy$ alone is a relational one, depending implicitly on $\R$, but have offered no such account for $\R$. Moreover, the \emph{appearance} of absolute coherence (of states of) $\Sy$ appears to depend on the \emph{actuality} of absolute coherence of (states of) $\R$.

What we therefore seek to develop in the next section is a ``fully relational" picture in which
$\Sy$ and $\R$ are treated on an equal footing. What 
emerges is that coherence is a truly relational notion in quantum mechanics, requiring two systems for its definition. From this, through development of the new concept of \emph{mutual coherence}, we are able to give an analysis
of interference experiments in terms of mutual coherence, and provide novel
perspectives on the ``reality" of optical coherence and the subtle issue of superselection rules 
and their relationship with quantum reference frames.

\section{Fully Relational Picture}

In order to obviate the objection raised in the previous section, we now rephrase our findings in
what we describe as a \emph{fully relational picture}, that is, in presenting our main results without taking recourse to absolute coherence, absolute localisation, absolute quantities, {\it etc}. In short, we may present the main theorem of the paper so far---Theorem \ref{prop:subrec}---in a fully invariant manner for $\Sy + \R$. This does not change the mathematical content of the theorem,
but highlights that only invariant states of $\Sy + \R$ are required for good approximation of relational quantities by absolute ones, from which we may conclude that non-invariant states
of $\Sy$ are representative of invariant ones of $\Sy + \R$, in direct analogy to the case of observables. It also motivates the concept of \emph{mutual coherence}, to be presented in the next section.

\subsection{States}

We return once again to the situation wherein $\R$ has phase quantity satisfying the norm-1 property. First, with $(\phi_i) \subset \hir$ a localising sequence (around $0$), we may write equation \eqref{eq:hll} as 
\begin{equation}\label{eq:hlwl}
\lim_{i \to \infty}\tr{\rho \otimes P[\phi_i] \Y(A)} = \tr{\rho A},
\end{equation}
holding for all $\rho \in \mathcal{L}_1(\his)$ and $A \in \lhs$. Thus, since $\Y(A)$ is invariant
we find that 
 \begin{equation}
\lim_{i \to \infty}\tr{\tau_{\mathcal{T}_*}(\rho \otimes P[\phi_i]) \Y(A)} = \tr{\rho A}
\end{equation}
for all $\rho \in \mathcal{L}_1(\his)$ and $A \in \lhs$. Hence, the limit and the resulting approximation may be carried out using only invariant/absolutely incoherent states of $\Sy + \R$. 

Just as absolute quantities of $\Sy$ may be used to represent invariant ones of $\Sy + \R$,
with good approximation coming with good localisation,
a state $\rho$ of $\Sy$ with absolute coherence may be used to represent invariant/absolutely incoherent states
of the form $\tau_{\mathcal{T}_*}({\rho \otimes P[\phi]})$ of $\Sy + \R$, again with good approximation
coming with high phase localisation of $\phi$. 

However, we recall that there are difficulties with ascribing physical significance 
to the absolutely localised state $\phi$ with respect to the absolute phase {\sc pom} appearing in the definition of $\Y$, namely, absolute properties such as localisation and coherence, and absolute quantities such as phase should be understood as relative to a reference, with the reference being only implicit. Therefore, we should seek a consistent formulation in which absolute properties of $\R$ are not required---the description should be entirely relational. The above discussion is a step in this direction: the states $\{\tau_{\mathcal{T}_*}({\rho \otimes P[\phi]})\}$ for some localised $\phi$ no longer ``contain" the localised $\phi$ in the sense that the partial trace over system or reference yields invariant/delocalised/absolutely incoherent states, and therefore a localised state cannot be attributed to $\R$. 

 We now introduce the concept of mutual coherence, which we view as the fully relational version of ordinary coherence.

\section{Coherence Revisited: Mutual Coherence}
The need for a relational understanding of coherence has been clearly enunciated in the literature (e.g., \cite{as, brs, dbrs, dia}). However, little formalism is provided to deal precisely with such a relational notion, and there is no framework capable of making sense of an external classical frame (which appears in \cite{brs, dbrs, dia}). We will analyse this in more detail in Section \ref{sec:cont} ; here we introduce the concept of mutual coherence, which we view as the relational counterpart (and a generalisation of)
of absolute coherence (usually referred to as coherence in standard treatments).

We treat the number/phase case, recalling that we view {\sc pom}s which are invariant under relevant symmetry transformations as the truly observable quantities, and use the term ``invariant quantity" for such objects.

\begin{lemma}\label{lemmaSRT}
It holds that 
\begin{eqnarray}
(\tau_{\Sy}\otimes id)\circ \tau_{\T}
= (id \otimes \tau_{\R}) \circ \tau_{\T}
=(\tau_{\Sy}\otimes \tau_{\R}) \circ \tau_{\T}.
\end{eqnarray}
\end{lemma}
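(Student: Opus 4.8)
The plan is to work directly from the definitions of the maps involved. Recall $\tau_{\Sy}(A)=\sum_n P_n^{\Sy} A P_n^{\Sy}$ acting on $\lhs$, with the analogous definition for $\tau_{\R}$ on $\lhr$; the map $\tau_{\T}$ is presumably the corresponding dephasing map on $\lh_{\T}=\lhs\otimes\lhr$ associated with the total number operator $N_{\T}=N_{\Sy}\otimes\id+\id\otimes N_{\R}$, i.e.\ $\tau_{\T}(C)=\sum_k Q_k C Q_k$ where $Q_k$ is the spectral projection of $N_{\T}$ onto total eigenvalue $k$, explicitly $Q_k=\sum_{n+m=k}P_n^{\Sy}\otimes P_m^{\R}$. (The stated identity is between maps on either $\lh_{\T}$ or $\mathcal L_1(\hit)$; both cases are proved the same way, so I will argue for operators.) First I would observe that since $\{P_n^{\Sy}\otimes P_m^{\R}\}_{n,m}$ is a finer orthogonal resolution of the identity than $\{Q_k\}_k$, with $Q_k=\sum_{n+m=k}P_n^{\Sy}\otimes P_m^{\R}$, the composite $(\tau_{\Sy}\otimes\id)\circ\tau_{\T}$ acts on a rank-one ``matrix unit'' block $\kb{n}{n'}\otimes B$ (with $B\in\lhr$) in a way I can compute explicitly.

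The key computation is the following. For a generic element write $C\in\lh_{\T}$ in the form $C=\sum \kb{n}{n'}\otimes B_{nn'}$. Then $\tau_{\T}(C)=\sum_k Q_k C Q_k$ picks out, for each $k$, the part of $C$ that is ``block diagonal in total number''; one checks $Q_k(\kb{n}{n'}\otimes B_{nn'})Q_k$ is nonzero only when $n=n'$ is consistent with $k$, more precisely $\tau_{\T}(C)=\sum_{n,n'}\kb{n}{n'}\otimes\big(P^{\R}_{?}\cdots\big)$ — the cleanest route is to expand $B_{nn'}=\sum_{m,m'}b\,\kb{m}{m'}$ in a number basis of $\hir$, so $C$ is spanned by $\kb{n}{n'}\otimes\kb{m}{m'}$, and then $\tau_{\T}$ kills every such term unless $n+m=n'+m'$, while $\tau_{\Sy}\otimes\id$ kills it unless $n=n'$, and $\id\otimes\tau_{\R}$ kills it unless $m=m'$. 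From this I read off that each of the three composites $(\tau_{\Sy}\otimes\id)\circ\tau_{\T}$, $(\id\otimes\tau_{\R})\circ\tau_{\T}$, $(\tau_{\Sy}\otimes\tau_{\R})\circ\tau_{\T}$ annihilates $\kb{n}{n'}\otimes\kb{m}{m'}$ unless simultaneously $n=n'$ and $m=m'$ (because $n+m=n'+m'$ together with $n=n'$ forces $m=m'$, and symmetrically), and on the surviving diagonal terms all three act as the identity. Hence the three composites agree on a spanning set of $\lh_{\T}$ (respectively a trace-norm dense subset of $\mathcal L_1(\hit)$), and by linearity and continuity — all maps here are normal, indeed trace-norm continuous as noted for $\tau_{\Sy *}$ earlier — they agree everywhere.

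The main obstacle, such as it is, is bookkeeping rather than anything deep: one must be careful that $N_{\Sy}$ and $N_{\R}$ may be unbounded with infinite spectrum, so $\tau_{\T}$ is defined by a weakly (or strongly) convergent sum of spectral projections and the interchange of $\tau_{\T}$ with $\tau_{\Sy}\otimes\id$ needs the normality/$\sigma$-weak continuity of these conditional-expectation-type maps (guaranteed since each is a norm-one projection onto a von Neumann subalgebra, or in the predual a trace-preserving channel). I would therefore first establish the identity on finite linear combinations of the matrix units $\kb{n}{n'}\otimes\kb{m}{m'}$, where everything is a finite sum and the computation above is rigorous, and then invoke normality (operator picture) or trace-norm continuity (state picture) to extend to all of $\lh_{\T}$ (resp.\ $\mathcal L_1(\hit)$). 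An alternative, slicker phrasing avoiding index-chasing: note $(\tau_{\Sy}\otimes\id)\circ\tau_{\T}$ is the dephasing associated with the commuting pair $(N_{\Sy}\otimes\id,\ N_{\T})$, whose joint spectral projections are exactly $P^{\Sy}_n\otimes P^{\R}_m$; the same holds for the pairs $(\id\otimes N_{\R},N_{\T})$ and $(N_{\Sy}\otimes\id,\id\otimes N_{\R})$ — each generates the same maximal abelian subalgebra of diagonal operators in the product number basis — so all three composites equal the single dephasing onto that common diagonal subalgebra.
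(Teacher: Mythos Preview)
Your proposal is correct and follows essentially the same approach as the paper: both arguments reduce to the observation that the constraint $n+m=n'+m'$ from $\tau_{\T}$ combined with $n=n'$ (from $\tau_{\Sy}\otimes id$) forces $m=m'$, so all three composites coincide with the dephasing onto the joint eigenspaces $P^{\Sy}_n\otimes P^{\R}_m$. The paper carries this out by sandwiching a general $A$ between the spectral projections directly (so it never needs your density/continuity extension step, and automatically allows degenerate eigenspaces of $N_{\Sy}$ or $N_{\R}$ where your matrix-unit notation $\kb{n}{n'}$ tacitly assumes rank-one projections); your ``slicker'' alternative via joint spectral projections is exactly the paper's computation rephrased.
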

\begin{proof}
We denote eigenvalue decompositions by 
$N_{\Sy}= \sum_n n P^{\Sy}_n$, $N_{\R}= \sum_m m P^{\R}_m$
and $N_{\T}= N_{\Sy}+N_{\R}= \sum_N N P_N$. 
Then,
$P_N = \sum_{n+m= N} P^{\Sy}_n \otimes P^{\R}_m$
and 
\begin{eqnarray*}
\tau_{\T}(A) &=& 
\sum_N P_N A P_N
\\ 
&=& \sum_N \sum_{n_1+m_1=N}
\sum_{n_2+ m_2= N}
(P^{\Sy}_{n_1}\otimes P^{\R}_{m_1})A (P^{\Sy}_{n_2} \otimes P^{\R}_{m_2}). 
\end{eqnarray*}
Since $\tau_{\Sy}(B) = \sum_n P^{\Sy}_n B P^{\Sy}_n$
and $\tau_{R}(C) = \sum_m P^{\R}_m C P^{\R}_m$, a simple calculation shows that
\begin{eqnarray*}
(\tau_{\Sy}\otimes id)\circ \tau_{\T}(A)
&=& \sum_N \sum_{n+m= N} 
(P^{\Sy}_n \otimes P^{\R}_m) A (P^{\Sy}_n \otimes P^{\R}_m)
\\
&=&
(id \otimes \tau_{\R})\circ \tau_{\T}(A)
=(\tau_{\Sy} \otimes \tau_{\R}) \circ \tau_{\T}(A). 
\end{eqnarray*}
\end{proof}

\begin{corollary}\label{cor:coh}
The following two conditions are equivalent.
\begin{itemize}
\item[(i)]\label{item:1}
There exists an invariant quantity $\E$ (thus $\tau_{\T}(\E(X)) = \E(X)$ for all $X$) and an $X$ such that 
\begin{eqnarray*}
\mbox{\rm tr}[(\tau_{\Sy *}(\rho_{\Sy}) \otimes \rho_{\R})\E(X)]
\neq \mbox{\rm tr}[(\rho_{\Sy}\otimes \rho_{\R}) \E(X)].
\end{eqnarray*}  
\item[(ii)]\label{item:2}
There exists an invariant quantity $\E$ ($\tau_{\T}(\E(X)) = \E(X)$ for all $X$) and an $X$ such that 
\begin{eqnarray*}
\mbox{\rm tr}[(\rho_{\Sy} \otimes \tau_{\R *}(\rho_{\R}))\E(X)]
\neq \mbox{\rm tr}[(\rho_{\Sy}\otimes \rho_{\R}) \E(X)].  
\end{eqnarray*}
\end{itemize}
\end{corollary}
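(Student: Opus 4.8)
The plan is to exploit Lemma \ref{lemmaSRT}, which gives $(\tau_{\Sy}\otimes id)\circ\tau_{\T} = (id\otimes\tau_{\R})\circ\tau_{\T}$, together with the defining property of an invariant quantity, namely $\tau_{\T}(\E(X)) = \E(X)$. The key observation is that for an invariant effect $\E(X)$, the dual identity to Lemma \ref{lemmaSRT} lets us move a one-sided twirl ($\tau_{\Sy*}$ on the state side) to an equivalent two-sided manipulation. First I would record the dual statement: since $\tau_{\T}$, $\tau_{\Sy}$, $\tau_{\R}$ are all self-adjoint idempotent maps (conditional-expectation-type projections), their preduals satisfy $(\tau_{\Sy*}\otimes id)\circ\tau_{\T*} = (id\otimes\tau_{\R*})\circ\tau_{\T*}$ as well. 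Equivalently, I will use the trace duality $\tr{\tau_{\Sy*}(\sigma)B} = \tr{\sigma\,\tau_{\Sy}(B)}$ directly at the level of the scalar expressions appearing in (i) and (ii).

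The core computation is then the following chain. Using self-duality of $\tau_{\Sy}$ on the first tensor factor,
\begin{align*}
\tr{(\tau_{\Sy*}(\rho_{\Sy})\otimes\rho_{\R})\,\E(X)}
&= \tr{(\rho_{\Sy}\otimes\rho_{\R})\,(\tau_{\Sy}\otimes id)(\E(X))}.
\end{align*}
Now because $\E(X)$ is invariant, $\E(X) = \tau_{\T}(\E(X))$, so $(\tau_{\Sy}\otimes id)(\E(X)) = (\tau_{\Sy}\otimes id)\circ\tau_{\T}(\E(X))$, and by Lemma \ref{lemmaSRT} this equals $(id\otimes\tau_{\R})\circ\tau_{\T}(\E(X)) = (id\otimes\tau_{\R})(\E(X))$, again using invariance. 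Feeding this back and using duality of $\tau_{\R}$ on the second factor,
\begin{align*}
\tr{(\rho_{\Sy}\otimes\rho_{\R})\,(id\otimes\tau_{\R})(\E(X))}
= \tr{(\rho_{\Sy}\otimes\tau_{\R*}(\rho_{\R}))\,\E(X)}.
\end{align*}
So for every invariant $\E$ and every $X$, the left-hand sides of the inequalities in (i) and (ii) coincide, while the right-hand side $\tr{(\rho_{\Sy}\otimes\rho_{\R})\E(X)}$ is literally the same in both. Hence the inequality in (i) holds for some $(\E,X)$ if and only if the inequality in (ii) holds for the same $(\E,X)$, which proves the equivalence (indeed with the same witness).

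The only genuine subtlety — the step I expect to need the most care — is justifying that $\tau_{\Sy}$ (resp. $\tau_{\R}$) is its own predual in the relevant sense, i.e. that $\tr{\tau_{\Sy*}(\sigma)B} = \tr{\sigma\,\tau_{\Sy}(B)}$ with $\tau_{\Sy}$ acting on the first tensor leg only; this is immediate from $\tau_{\Sy}(A) = \sum_n P_n A P_n$ and cyclicity of the trace, using that the $P_n$ are projections, and convergence is controlled exactly as for the dephasing channel discussed after Proposition \ref{prop:invh}. One should also note that "invariant quantity" in the statement means $\tau_{\T}(\E(X)) = \E(X)$ for all $X$, which is precisely what Proposition \ref{prop:invh} characterises, so the manipulation $\E(X) = \tau_{\T}(\E(X))$ is licensed. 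No boundedness issue arises since all effects are bounded and all the maps involved are normal and unital.
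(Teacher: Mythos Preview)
Your proof is correct and follows essentially the same route as the paper's own argument: both use the trace duality $\tr{\tau_{\Sy*}(\sigma)B}=\tr{\sigma\,\tau_{\Sy}(B)}$, insert $\tau_{\T}$ via invariance of $\E(X)$, apply Lemma~\ref{lemmaSRT} to swap $(\tau_{\Sy}\otimes id)\circ\tau_{\T}$ for $(id\otimes\tau_{\R})\circ\tau_{\T}$, and then undo the duality on the $\R$-side. Your observation that the \emph{same} witness $(\E,X)$ works for both (i) and (ii) is exactly what the paper's chain of equalities establishes.
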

\begin{proof}
Assume (i) holds. 
Then for $\E(X)$ satisfying the condition (i), 
\begin{eqnarray*}
\mbox{tr}[(\rho_{\Sy} \otimes \rho_{\R}) \E(X)] 
&\neq&
\mbox{tr}[((\tau_{\Sy * }(\rho_S) \otimes \rho_{\R})\E(X)] \qquad
\\
&=& 
\mbox{tr}[(\rho_{\Sy} \otimes \rho_{\R})(\tau_{\Sy} \otimes id) 
\circ \tau_T(\E(X))]
\\
&=&
\mbox{tr}[(\rho_{\Sy} \otimes \rho_{\R}) (id \otimes \tau_{\R}) \circ \tau_{\T}(\E(X))]
\\
&=&
\mbox{tr}[(\rho_{\Sy} \otimes \tau_{\R *}(\rho_{\R})) \E(X)]. 
\end{eqnarray*}
Thus (ii) follows and vice versa.
\end{proof} 
Moreover, one can observe that 
for condition (i) to hold both $\tau_{\Sy *}(\rho_{\Sy}) \neq \rho_{\Sy}$ 
and $\tau_{\R *} (\rho_{\R}) \neq \rho_{\R}$ must be satisfied. 

Therefore, since $(\tau_{\Sy} \otimes id) \circ \tau_{\T}
= (id \otimes \tau_{\R}) \circ \tau_{\T}$, one can conclude that 
a system state $\rho_{\Sy}$ is coherent relative to the reference state
$\rho_{\R}$ if and only if $\rho_{\R}$ is coherent relative to $\rho_{\Sy}$.
Thus, coherence has a truly relational character. This motivates the following definition:

\begin{definition}\label{def:mcp}
A pair of states $(\rho_{\Sy}, \rho_{\R})$ is called \emph{mutually coherent} if either of the conditions (i) or (ii) of Corollary \ref{cor:coh} holds.
\end{definition}

This may be generalised to an arbitrary (possibly non-separable) state $\Theta \in \mathcal{S}(\hit)$.
\begin{definition}
A state 
$\Theta$ of $\Sy + \R$ is said to be mutually coherent (with respect to $\Sy$ and $\R$) if
\begin{itemize}
\item[(i)'] there exists an invariant observable $\E$ and an $X$ such that 
\begin{eqnarray*}
\mbox{\rm tr}[(\tau_{\Sy*}\otimes id)(\Theta)\E(X)]
\neq \mbox{\rm tr}[\Theta \E(X)]
\end{eqnarray*} or (equivalently)
\item[(ii)'] there exists an invariant observable $\E$ and an $X$ such that 
\begin{eqnarray*}
\mbox{\rm tr}[(id\otimes\tau_{\R*})(\Theta)\E(X)]
\neq \mbox{\rm tr}[\Theta \E(X)].  
\end{eqnarray*}  
\end{itemize}
\end{definition}

A quantitative measure $\mathcal{M}(\Theta)$ of mutual coherence of $\Theta$ may be provided by the quantity (where the supremum is taken over invariant effects)
\begin{eqnarray*}
\mathcal{M}(\Theta):= \sup_{E} \bigl| \mbox{tr}\bigl[((\tau{_{\Sy*}}\otimes id)(\Theta) - \Theta)E\bigr]\bigr|
\\
= \sup_{E} \bigl| \mbox{tr}\bigl[ ((id\otimes\tau{_{\R*}})(\Theta) - \Theta)E\bigr]\bigr| 
\\
=\sup_{E}
\bigl|\mbox{tr}\bigl[((\tau{_{\Sy*}}\otimes \tau{_{\R*}})(\Theta ) - \Theta)E\bigr]\bigr|. 
\end{eqnarray*} 
The above equalities follow easily from Lemma \ref{lemmaSRT}.
For $\Theta = \rho_{\Sy} \otimes \rho_{\R}$ we may write this 
quantitative measure as 
$\mathcal{M}(\rho_\Sy, \rho_\R)$.

We note that this measure of mutual coherence is invariant with respect to the unitary representations $U_{\Sy} \otimes id$ and $id \otimes U_{\R}$ (and therefore also under $U_{\Sy} \otimes U_{\R}$). The following propositions show that if either (system or reference) state is invariant (absolutely incoherent), the mutual coherence vanishes, and at the other extreme (high reference localisation), the mutual coherence is well approximated by the absolute coherence (of the system state).

\begin{proposition}\label{prop:mut}
\begin{eqnarray*}
\mathcal{M}(\rho_\Sy, \rho_\R) \leq \min\{ \mathcal{C}(\rho_\Sy), 
\mathcal{C}(\rho_\R)\}
\end{eqnarray*}
\end{proposition}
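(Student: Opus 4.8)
The plan is to bound $\mathcal{M}(\rho_\Sy,\rho_\R)$ by $\mathcal{C}(\rho_\Sy)$ and, by the symmetry of the situation, by $\mathcal{C}(\rho_\R)$; the minimum then follows. For the first bound I would start from the representation of $\mathcal{M}$ using the $(\tau_{\Sy*}\otimes id)$ form, so that for any invariant effect $E$,
\[
\bigl|\tr{((\tau_{\Sy*}\otimes id)(\rho_\Sy\otimes\rho_\R)-\rho_\Sy\otimes\rho_\R)E}\bigr|
=\bigl|\tr{((\tau_{\Sy*}(\rho_\Sy)-\rho_\Sy)\otimes\rho_\R)E}\bigr|.
\]
The key step is to rewrite this using the duality between $\tau$ and $\tau_*$ together with the invariance of $E$: since $\tau_\T(E)=E$ and (by Lemma \ref{lemmaSRT}) $(\tau_\Sy\otimes id)\circ\tau_\T=\tau_\T$ when applied after another $\tau_\T$, one checks that $(\tau_\Sy\otimes id)(E)=E$ for invariant $E$ as well — more directly, for invariant $E$ the partial channel $\tau_\Sy\otimes id$ fixes $E$ because $E$ commutes with the spectral projections of $N_\T$, hence with $P^\Sy_n\otimes id$ summed appropriately. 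Granting this, I would push $\tau_\Sy$ onto $E$:
\[
\tr{((\tau_{\Sy*}(\rho_\Sy)-\rho_\Sy)\otimes\rho_\R)E}
=\tr{(\rho_\Sy\otimes\rho_\R)((\tau_\Sy\otimes id)(E)-E)}=0?
\]
— that can't be right, so instead the correct move is to keep $E$ fixed and bound directly: define the ancilla-reduced effect $E_{\rho_\R}:=(\,id\otimes\rho_\R)(E)\in\mathcal{E}(\his)$ via $\tr{\sigma E_{\rho_\R}}=\tr{(\sigma\otimes\rho_\R)E}$ for all $\sigma$. Then the displayed quantity equals $\tr{(\tau_{\Sy*}(\rho_\Sy)-\rho_\Sy)E_{\rho_\R}}$, and since $0\le E_{\rho_\R}\le\id$, taking the supremum over such effects gives
\[
\sup_E\bigl|\tr{(\tau_{\Sy*}(\rho_\Sy)-\rho_\Sy)E_{\rho_\R}}\bigr|
\le\sup_{0\le F\le\id}\bigl|\tr{(\tau_{\Sy*}(\rho_\Sy)-\rho_\Sy)F}\bigr|
=\tfrac12\no{\tau_{\Sy*}(\rho_\Sy)-\rho_\Sy}_1=\mathcal{C}(\rho_\Sy),
\]
using the standard variational formula for the trace norm of a self-adjoint trace-class operator. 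This yields $\mathcal{M}(\rho_\Sy,\rho_\R)\le\mathcal{C}(\rho_\Sy)$.

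For the bound $\mathcal{M}(\rho_\Sy,\rho_\R)\le\mathcal{C}(\rho_\R)$ I would repeat the argument verbatim starting instead from the $(id\otimes\tau_{\R*})$ representation of $\mathcal{M}$ (available by the equalities recorded just before the proposition), tracing out $\Sy$ against $\rho_\Sy$ to produce an effect on $\hir$ and invoking the trace-norm formula for $\tau_{\R*}(\rho_\R)-\rho_\R$. Since both bounds hold simultaneously, $\mathcal{M}(\rho_\Sy,\rho_\R)\le\min\{\mathcal{C}(\rho_\Sy),\mathcal{C}(\rho_\R)\}$.

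The main obstacle I anticipate is the very first reduction — verifying cleanly that $\tr{((\tau_{\Sy*}\otimes id)(\rho_\Sy\otimes\rho_\R)-\rho_\Sy\otimes\rho_\R)E}$ collapses to a one-sided trace of $(\tau_{\Sy*}(\rho_\Sy)-\rho_\Sy)$ against a genuine effect, i.e. making sure the "partial expectation over $\rho_\R$" step is legitimate and that $0\le E_{\rho_\R}\le\id$. This is essentially the statement that $\sigma\mapsto\tr{(\sigma\otimes\rho_\R)E}$ is a positive, unital, normal functional on $\mathcal{L}_1(\his)$, hence represented by an effect; it is routine but should be stated, since the whole bound hinges on it. Everything after that is the standard $\tfrac12\|\cdot\|_1 = \sup_{0\le F\le\id}|\tr{\cdot\,F}|$ identity applied to the self-adjoint operator $\tau_{\Sy*}(\rho_\Sy)-\rho_\Sy$ (and its $\R$-counterpart), which requires no new input.
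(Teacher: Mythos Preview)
Your proof is correct and essentially the same as the paper's: both bound the supremum over invariant effects by the supremum over all effects and reduce to $\tfrac12\|\tau_{\Sy*}(\rho_\Sy)-\rho_\Sy\|_1$. The paper does this in one line via $\|\,(\tau_{\Sy*}(\rho_\Sy)-\rho_\Sy)\otimes\rho_\R\,\|_1=\|\tau_{\Sy*}(\rho_\Sy)-\rho_\Sy\|_1$, while you take the dual route through the restriction $E\mapsto E_{\rho_\R}=\Gamma_{\rho_\R}(E)$; these are equivalent, and your initial false start (pushing $\tau_\Sy$ onto $E$) is harmless since you correctly abandon it.
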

\begin{proof}
\begin{eqnarray*}
\mathcal{M}(\rho_\Sy, \rho_\R) 
\leq \frac{1}{2} 
\Vert \tau_{\Sy*}(\rho_\Sy) \otimes \rho_{\R}-\rho_{\Sy} \otimes \rho_{\R}\Vert_{1} 
=\frac{1}{2}\Vert \tau_{\Sy*}(\rho_{\Sy}) - \rho_{\Sy}\Vert_1.
\end{eqnarray*}
\end{proof}
In particular, for an invariant state $\rho_\R$, 
the mutual coherence $\mathcal{M}(\rho_{\Sy}, \rho_{\R})$ vanishes.

\begin{proposition}
For a highly phase-localised state $\rho_{\R}$ of $\R$, $\mathcal{M}(\rho_{\Sy}, \rho_{\R})$
is approximately $\mathcal{C}(\rho_{\Sy})$ (the absolute coherence of $\rho_{\Sy}$).
\end{proposition}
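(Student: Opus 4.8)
The plan is to complement Proposition~\ref{prop:mut}, which already gives the bound $\mathcal{M}(\rho_{\Sy},\rho_{\R})\le\mathcal{C}(\rho_{\Sy})$ with no hypothesis on $\rho_{\R}$, by a matching lower bound $\mathcal{M}(\rho_{\Sy},\rho_{\R})\ge\mathcal{C}(\rho_{\Sy})(1-\mathrm{error})$ whose error is controlled by the localisation of $\rho_{\R}$ and vanishes as that localisation improves. The invariant effect used to probe $\mathcal{M}$ will be $\Y(A)$, where $A$ is an effect of $\Sy$ that (nearly) witnesses the absolute coherence of $\rho_{\Sy}$.

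First I would recall, from the identity used in the proof of the absolute-coherence estimate, that $\mathcal{C}(\rho_{\Sy})=\sup\{\,|\tr{(\rho_{\Sy}-\tau_{\Sy *}(\rho_{\Sy}))A}|:0\le A\le\id\,\}$ and $\Vert\rho_{\Sy}-\tau_{\Sy *}(\rho_{\Sy})\Vert_1=2\mathcal{C}(\rho_{\Sy})$. Fix $\delta>0$ and pick an effect $A$ on $\his$ with $|\tr{(\rho_{\Sy}-\tau_{\Sy *}(\rho_{\Sy}))A}|\ge\mathcal{C}(\rho_{\Sy})-\delta$; one may in fact take $A$ to be the spectral projection of $\rho_{\Sy}-\tau_{\Sy *}(\rho_{\Sy})$ onto $(0,\infty)$ and dispense with $\delta$. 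Since $\Y$ is positive and unital, $\Y(A)$ is an effect on $\hit$, and since $\mathcal{U}(\theta)\Y(A)\mathcal{U}(\theta)^*=\Y(A)$ for all $\theta$ (with $\mathcal{U}=U_{\Sy}\otimes U_{\R}$), Proposition~\ref{prop:invh} applied to $N_{\T}=N_{\Sy}+N_{\R}$ gives $\tau_{\T}(\Y(A))=\Y(A)$, so $\Y(A)$ is an admissible test effect in the supremum defining $\mathcal{M}(\rho_{\Sy},\rho_{\R})$. Using the defining adjunction $\tr{(\sigma\otimes\omega)X}=\tr{\sigma\,\Gamma_{\omega}(X)}$ with $\sigma=\tau_{\Sy *}(\rho_{\Sy})-\rho_{\Sy}$, $\omega=\rho_{\R}$, $X=\Y(A)$, together with $(\tau_{\Sy *}\otimes\mathrm{id})(\rho_{\Sy}\otimes\rho_{\R})-\rho_{\Sy}\otimes\rho_{\R}=(\tau_{\Sy *}(\rho_{\Sy})-\rho_{\Sy})\otimes\rho_{\R}$, one obtains
\[
\mathcal{M}(\rho_{\Sy},\rho_{\R})\ \ge\ \bigl|\tr{(\tau_{\Sy *}(\rho_{\Sy})-\rho_{\Sy})\,(\Gamma_{\rho_{\R}}\circ\Y)(A)}\bigr|.
\]

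The remaining step is to replace $(\Gamma_{\rho_{\R}}\circ\Y)(A)$ by $A$ at the cost of a small error $R:=(\Gamma_{\rho_{\R}}\circ\Y)(A)-A$. By Proposition~\ref{prop1}, $\Vert R\Vert=D\bigl(A,(\Gamma_{\rho_{\R}}\circ\Y)(A)\bigr)\le\Vert[N_{\Sy},A]\Vert\bigl(\tfrac12 W^0_{\epsilon}(\mu^{\F}_{\rho_{\R}})(1-\epsilon)+\pi\epsilon\bigr)$, which is small for a highly phase-localised $\rho_{\R}$ (small overall width, with $\epsilon$ chosen small accordingly); to avoid the commutator factor one may instead argue directly from Theorem~\ref{prop:subrec} and~\eqref{eq:hll}, whose error estimate is uniform in the system vector, so that along a localising sequence $(\phi_n)\subset\hir$ one has $\Vert(\Gamma_{\phi_n}\circ\Y)(A)-A\Vert\to0$ in operator norm. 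Either way $|\tr{\sigma R}|\le\Vert\sigma\Vert_1\Vert R\Vert=2\mathcal{C}(\rho_{\Sy})\Vert R\Vert$, hence $\mathcal{M}(\rho_{\Sy},\rho_{\R})\ge\mathcal{C}(\rho_{\Sy})-\delta-2\mathcal{C}(\rho_{\Sy})\Vert R\Vert$; letting $\delta\to0$ and combining with Proposition~\ref{prop:mut},
\[
\mathcal{C}(\rho_{\Sy})\bigl(1-2\Vert R\Vert\bigr)\ \le\ \mathcal{M}(\rho_{\Sy},\rho_{\R})\ \le\ \mathcal{C}(\rho_{\Sy}),
\]
so $|\mathcal{M}(\rho_{\Sy},\rho_{\R})-\mathcal{C}(\rho_{\Sy})|\le 2\mathcal{C}(\rho_{\Sy})\Vert R\Vert\to0$ as $\rho_{\R}$ becomes highly phase-localised.

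The main obstacle is pinning down what ``highly phase-localised'' should mean and controlling $\Vert R\Vert$ uniformly: the Proposition~\ref{prop1} route carries the factor $\Vert[N_{\Sy},A]\Vert$, which is infinite when $N_{\Sy}$ is unbounded and $A$ is the optimal projection, so one should either restrict to bounded $N_{\Sy}$ (finite dimensions, as in the trade-off section) or use the commutator-free route via Theorem~\ref{prop:subrec}, which needs only $A$ bounded. The one further point requiring care is that the convergence in~\eqref{eq:hll}, stated there for states, is applied here to the trace-class (non-state) element $\tau_{\Sy *}(\rho_{\Sy})-\rho_{\Sy}$; this is legitimate because the error in Theorem~\ref{prop:subrec} is bounded by $\Vert A\Vert$ times a quantity independent of the system vector, so it passes to all of $\mathcal{L}_1(\his)$ by linearity and a trace-norm estimate.
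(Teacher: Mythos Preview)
Your proof is correct and follows essentially the same strategy as the paper: you bound $\mathcal{M}(\rho_{\Sy},\rho_{\R})$ below by testing against the invariant effect $\Y(A)$ with $A$ (near-)optimal for $\mathcal{C}(\rho_{\Sy})$, pass to $\Gamma_{\rho_{\R}}(\Y(A))$ via the restriction channel, control $\Vert(\Gamma_{\rho_{\R}}\circ\Y)(A)-A\Vert$ by Proposition~\ref{prop1}, and combine with the upper bound from Proposition~\ref{prop:mut}. The only substantive differences are cosmetic: the paper keeps the supremum over all effects $E$ and bounds the error uniformly by $2\Vert N_{\Sy}\Vert\bigl(\tfrac12 W^0_{\epsilon}(\mu^{\F}_{\rho_{\R}})(1-\epsilon)+\pi\epsilon\bigr)$ (using $\Vert[N_{\Sy},E]\Vert\le 2\Vert N_{\Sy}\Vert$ for effects), whereas you fix a single optimal $A$ and obtain the slightly sharper $2\mathcal{C}(\rho_{\Sy})\Vert R\Vert$; both versions explicitly sit in finite dimensions.

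One small caution on your alternative ``commutator-free'' route via Theorem~\ref{prop:subrec}: the error $c_n$ there contains the term $\sup_{g\in\Delta_n}|f^A_{\varphi}(g)|$, which is uniform in $\varphi$ only when $g\mapsto U_{\Sy}(g)AU_{\Sy}(g)^*$ is norm-continuous (automatic in finite dimensions, not in general), so that route does not actually escape the finite-dimensional hypothesis without further work.
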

\begin{proof}
(We give a proof for finite dimensional Hilbert spaces.)
%For general $\rho_{\R}$, 
%\begin{eqnarray*}
%\mathcal{M}(\rho_{\Sy}, \rho_{\R}) \leq \frac{1}{2}\Vert \rho_{\Sy} -
%\tau_{\Sy *}(\rho_{\Sy})\Vert_1,
%\end{eqnarray*}
%where $\Vert\cdot \Vert_	1$ is the trace distance 
%defined by $\Vert \rho_1 -\rho_2\Vert_1:= 
%\sup_{A: \Vert A\Vert =1}| \mbox{tr}[(\rho_1 -\rho_2) A]|$. 
For highly localised $\rho_\R$, we have shown (Proposition \ref{prop1}) that 
for an arbitrary effect $E$ of $\Sy$, 
$\Gamma_{\rho_{\R}}(\Y(E))$ well approximates 
$E$, as
\begin{equation}\label{eq:use}
D(E, \Gamma_{\rho_{\R}}(\Y(E)))
\leq \Vert[N_{\Sy}, E]\Vert 
\left(
\frac{1}{2}W^0_{\epsilon}(\mu^{\F}_{\rho_\R})(1-\epsilon) 
+ \pi \epsilon \right).
\end{equation}
From the definition of $\mathcal{M}$ we observe that
\begin{align*}
\mathcal{M}(\rho_{\Sy}, \rho_{\R}) 
&\geq \sup_{E}
| \mbox{tr}((\tau_{\Sy_*}(\rho_{\Sy})\otimes \rho_{\R} 
- \rho_{\Sy}\otimes \rho_{\R})\Y(E)) |
\\
&=
\sup_{E} 
| \mbox{tr}((\tau_{\Sy_*}(\rho_{\Sy}) - \rho_{\Sy})\Gamma_{\rho_{\R}}(\Y(E)))|. 
\end{align*}
For a fixed effect $E$, 
we have, using \eqref{eq:use}:
\begin{align*}
|\mbox{tr}((\tau_{\Sy_*}(\rho_{\Sy})
-\rho_{\Sy})\Gamma_{\rho_{\R}}(\Y(E)))|
&\geq
|\mbox{tr}((\tau_{\Sy_*}(\rho_{\Sy})
-\rho_{\Sy})E)|
- |\mbox{tr}((\tau_{\Sy_*}(\rho_{\Sy})-\rho_{\Sy})
(\Gamma_{\rho_{\R}}(\Y(E))- E))|
\\
&\geq
|\mbox{tr}((\tau_{\Sy_*}(\rho_{\Sy})
-\rho_{\Sy})E)|- 2 D(E, \Gamma_{\rho_{\R}} (\Y(E)))
\\
&\geq
|\mbox{tr}((\tau_{\Sy_*}(\rho_{\Sy})
-\rho_{\Sy})E)|
-2\Vert N_{\Sy}\Vert 
\left(
\frac{1}{2}W^0_{\epsilon}(\mu^{\F}_{\rho_\R})(1-\epsilon) 
+ \pi \epsilon \right).
\end{align*}
%\begin{eqnarray*}
%\mathcal{M}(\rho_{\Sy}, \rho_{\R}) \simeq \frac{1}{2}\Vert \rho_{\Sy} - \tau_{\Sy *}%(\rho_{\Sy})\Vert_1 \equiv \mathcal{C}(\rho_{\Sy}).
%\end{eqnarray*}
Since $E$ is arbitrary, we have 
\begin{equation}\label{eq:rcba}
\mathcal{M}(\rho_{\Sy}, \rho_{\R}) 
\geq \mathcal{C}(\rho_{\Sy})
- 2\Vert N_{\Sy}\Vert 
\left(
\frac{1}{2}W^0_{\epsilon}(\mu^{\F}_{\rho_\R})(1-\epsilon) 
+ \pi \epsilon \right).
\end{equation}

We recall Proposition \ref{prop:mut}, which states that $\mathcal{M}(\rho_\Sy, \rho_\R) \leq \min\{ \mathcal{C}(\rho_\Sy), \mathcal{C}(\rho_\R)\}$. In the high localisation regime for $\rho_{\R}$, the second expression on the right side of \eqref{eq:rcba} becomes small and we may assume that $\mathcal{C}(\rho_\Sy) \leq \mathcal{C}(\rho_\R)$. Therefore, we have the approximate equality
$\mathcal{M}(\rho_\Sy, \rho_\R) \approx \mathcal{C}(\rho_\Sy)$, with the quality of approximation
becoming arbitrarily good as $\rho_{\R}$ becomes highly localised.
\end{proof}

In other words,  
 the mutual coherence takes on the appearance of absolute coherence in the high reference localisation limit. 
 
 We shall soon discuss the role of mutual coherence in interference phenomena and superselection rules. First, we note the following observations relating to approximation of 
relational observables by absolute quantities for some $\Y(A)$ constructed using a phase {\sc pom} possessing the norm-1 property.

Suppose that we have some non-invariant state $\rho_{\Sy} \neq \tau_{\Sy *}(\rho_{\Sy})$. Then for arbitrary $A$, $\tr{\rho_{\Sy}A}$ and $\tr{\rho_{\Sy} \otimes \rho_{\R}\Y(A)}$ can be made equal only if $(\rho_{\Sy}, \rho_{\R})$ is mutually coherent. The reason is clear: Suppose that $(\rho_{\Sy}, \rho_{\R})$ is not mutually coherent. Then, by Definition \ref{def:mcp}, for any invariant $R \in \lht$ it must be that $\tr{\rho_{\Sy} \otimes \tau_{\R*}(\rho_{\R})R} = \tr{\rho_{\Sy} \otimes \rho_{\R} R}$. Then, $\tr{\rho_{\Sy} \otimes \tau_{\R *}(\rho_{\R})R}=
\tr{\rho_{\Sy} \Gamma_{\tau_{\R *}(\rho_{\R})}(R)}$ (see Subsection \ref{subsec:fup}) and (by Lemma \ref{lem:cova}) $\Gamma_{\tau_{\R *}(\rho_{\R})}(R)$ is invariant. But, due to the invariance of $\Gamma_{\tau_{\R *}(\rho_{\R})}(R)$,  $\tr{\rho_{\Sy}\Gamma_{\tau_{\R *}(\rho_{\R})}(R)} = \tr{\tau_{\Sy *}(\rho_{\Sy})\Gamma_{\tau_{\R *}(\rho_{\R})}(R)}$. This latter quantity can never equal $\tr{\rho_{\Sy}A}$ for non-invariant $A$ and non-invariant $\rho_{\Sy}$. 

In fact, this also establishes the more general result that for any invariant $R\in \lht$ and non-invariant $\rho_{\Sy}$,  $\tr{\rho_{\Sy} \otimes \rho_{\R}R} = \tr{\rho_{\Sy}A}$
for arbitrary $A$
only if $(\rho_{\Sy}, \rho_{\R})$ is mutually coherent. Theorem \ref{thm:owb} demonstrates for a specific non-invariant effect $A \in \mathcal{E}(\his)$ that for a $\rho_{\R}$ with poor localisation ($\Delta_{\rho_{\R}}N_{\R} < 1/6$), the discrepancy $D(A,\Gamma_{\rho_{\R}}(\Y(A))) > 1/32$, and therefore that $\tr{\rho_{\Sy}A}$ and $\tr{\rho_{\Sy} \otimes \rho_{\R}\Y(A)}$ cannot even be close in this case.

\if
Also,
\begin{lemma}
Let $\varphi$ be an absolutely coherent pure state of $\Sy$ and $A \in \lhs$ non-invariant. If there exists invariant $R \in \lht$ and pure $\Theta \in \mathcal{L}_1(\hit)$ for which
$\ip{\varphi}{A \varphi}= \ip{\Theta}{R \Theta}$, then $\Theta$ is mutually coherent.
\end{lemma}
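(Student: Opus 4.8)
The plan is to argue by contraposition. Writing $\Theta=P[\Psi]$ for a unit vector $\Psi\in\hit$, I would assume that $\Theta$ is \emph{not} mutually coherent and attempt to deduce $\ip{\Psi}{R\Psi}\neq\ip{\varphi}{A\varphi}$. First I would convert non-mutual-coherence into a statement about all invariant operators, not just effects: by the general-state form of Definition \ref{def:mcp} we have $\tr{(\tau_{\Sy*}\otimes id)(\Theta)\E(X)}=\tr{\Theta\E(X)}$ for every invariant effect $\E(X)$. Since an arbitrary invariant self-adjoint operator is generated, through its spectral measure whose projections are invariant, by invariant effects, and a general invariant $R$ decomposes into invariant self-adjoint real and imaginary parts, this upgrades to $\tr{(\tau_{\Sy*}\otimes id)(\Theta)R}=\tr{\Theta R}$. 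Hence, for the given $R$, $\ip{\Psi}{R\Psi}=\tr{(\tau_{\Sy*}\otimes id)(\Theta)R}$, an expectation taken in a state whose $\Sy$-marginal ${\rm tr}_{\hir}[(\tau_{\Sy*}\otimes id)(\Theta)]=\tau_{\Sy*}({\rm tr}_{\hir}[\Theta])$ is invariant (absolutely incoherent) by Proposition \ref{prop:invh}.

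The intended closing step would be to show that no such ``$\Sy$-incoherent'' expectation of an invariant $R$ can reproduce $\ip{\varphi}{A\varphi}$ when $\varphi$ is absolutely coherent and $A$ non-invariant. This is exactly where I expect the argument to break, and it is the crux of the lemma. The hypothesis furnishes only one numerical identity and provides no structural tie between the pair $(\varphi,A)$ on $\Sy$ and the pair $(\Theta,R)$ on $\Sy+\R$. Concretely, set $c:=\ip{\varphi}{A\varphi}$ and take $R:=c\,\id_{\hit}$; then $R$ is invariant, since $\mathcal{U}(g)(c\,\id)\mathcal{U}(g)^*=c\,\id$, and $\ip{\Psi}{R\Psi}=c$ for \emph{every} unit $\Psi$. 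Choosing $\Psi$ to be a product of number eigenvectors makes $\Theta$ an invariant product state with both marginals absolutely incoherent, so $\mathcal{M}(\Theta)=0$ by Proposition \ref{prop:mut}. Thus the existential pair $(R,\Theta)$ can be realised with $\Theta$ \emph{not} mutually coherent, and the lemma fails as literally stated: matching the single number $\ip{\varphi}{A\varphi}$ is far weaker than requiring $R$ to represent $A$.

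To obtain a correct statement one must reinstate the representation requirement implicit in the surrounding discussion, the natural choice being $R=\Y(A)$ with $\Theta=P[\varphi\otimes\phi]$ of product form, so that $\rho_{\Sy}:={\rm tr}_{\hir}[\Theta]=P[\varphi]$. Under $R=\Y(A)$ the invariance of $\Y(A)$ together with the reduction above gives $\ip{\Psi}{\Y(A)\Psi}=\tr{(\tau_{\Sy*}\otimes id)(\Theta)\Y(A)}=\tr{\tau_{\Sy*}(P[\varphi])\,\Gamma_{\tau_{\R*}(\rho_{\R})}(\Y(A))}$; by \eqref{eq:yobs} and Lemma \ref{lem:cova} the restriction of $\Y(A)$ by the invariant state $\tau_{\R*}(\rho_{\R})$ is $\tau_{\Sy}(A)$, so the quantity collapses to $\tr{\tau_{\Sy*}(P[\varphi])A}=\ip{\varphi}{\tau_{\Sy}(A)\varphi}$. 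Equality with $\ip{\varphi}{A\varphi}$ would then force $\tr{(P[\varphi]-\tau_{\Sy*}(P[\varphi]))A}=0$, a non-degeneracy condition that absolute coherence of $\varphi$ alone does not preclude for a \emph{single} $A$. Making this condition explicit, namely that the coherence of $\varphi$ be ``detected'' by $A$, is precisely what the as-worded lemma omits; I expect that supplying it, together with the representation requirement $R=\Y(A)$, is what turns the false statement into a provable one.
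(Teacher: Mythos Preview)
Your diagnosis is correct, and it matches what the paper itself appears to have concluded. The lemma in question sits inside a commented-out block (\texttt{\textbackslash if \ldots \textbackslash fi}) in the source, and the paper's own proof consists of exactly one unfinished line: ``Let $\Theta$ be non-mutually coherent. Then for any invariant $R \in \lht$,'' --- and then stops. So the authors began precisely the contrapositive you set up and abandoned it; the lemma never made it into the final text.

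Your counterexample with $R = c\,\id_{\hit}$, $c = \ip{\varphi}{A\varphi}$, and $\Theta$ a product of number eigenstates is decisive: it satisfies every hypothesis of the lemma as worded while $\Theta$ has zero mutual coherence. The flaw you identify --- that a single numerical coincidence $\ip{\varphi}{A\varphi} = \ip{\Theta}{R\Theta}$ imposes no structural link between $(\varphi,A)$ and $(\Theta,R)$ --- is exactly the point. The surviving, uncommented text just before this block states the correct version: equality must hold \emph{for arbitrary} $A$ (with $R$ tied to $A$, e.g.\ via $\Y$), not for one fixed pair. Your discussion of what additional hypotheses ($R = \Y(A)$, product $\Theta$, and a detection condition on $A$) would rescue the claim is on target, and your observation that even $R = \Y(A)$ with product $\Theta$ still fails for a single $A$ unless $A$ actually detects the coherence of $\varphi$ is a genuine refinement beyond what the paper records.
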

\begin{proof}
Let $\Theta$ be non-mutually coherent. Then for any invariant $R \in \lht$, 
\end{proof}
\fi

\if
This gives an interesting physical observation. 
Let us suppose that Alice and Bob each has 
her (his) own classical reference frame. 
But they do not share the frames. That is, 
Alice does not know how Bob's reference frame is 
related to hers, and vice versa. 
In this case, Alice can measure noninvariant observable 
$A$ with respect to her own classical reference frame. 
Suppose that Alice tells this outcome (expectation value) to Bob. 
This gives no information to Bob. That is, even knowing Alice's 
expectation value, 
Bob cannot tell his future measurement result at all. 
On the other hand, if Alice conclude that 
a state $\rho_S$ is coherent with respect to 
her own classical reference frame and tells it to Bob, 
this will let Bob confirm that the same state 
is also coherent with respect to his own classical reference 
frame. Thus even without sharing the reference frames, 
Alice and Bob possessing own classical reference frame 
can share information on coherence. 
\fi

\section{Measurement}
The enquiry thus far has been of a kinematical nature. We now consider the important role 
played by dynamical evolution of states and ensuing measurements, considered in light of the relational perspective presented. 
The main theorem regarding the role of symmetry in quantum measurements is the Wigner-Araki-Yanase (WAY) theorem \cite{wig1,buschtrans,ay1}, which addresses measurements in the presence of additive conserved quantities of system-plus-reference. After presenting the essentials of the quantum theory of measurement required for our analysis, we present a ``strong" form of the WAY theorem, assuming the system on its own has a conserved quantity, followed by two readings of the WAY theorem: the orthodox reading, as presented in \cite{LB2011}, and the relational viewpoint.

\subsection{Measurement Theory: Brief Overview}

We briefly describe the quantum theory of measurement  of relevance to this work. For simplicity
we present these concepts without the impositions of symmetry. 

Let $\mathcal{H_S}$ be the Hilbert space representing a quantum system $\mathcal{S}$ under investigation, $\hia$ that representing a measuring apparatus, with the combined system then given by $\mathcal{H}=\his \otimes \hia$. A unitary mapping $U: \mathcal{H} \to \mathcal{H}$ models a measurement interaction, serving to correlate the states of the
system to those of the apparatus during an interaction period $T$. 
The specification of a self-adjoint ``pointer observable" $Z$ on 
$\mathcal{H_A}$, a fixed state $\phi \in \mathcal{H_A}$  (which for convenience is assumed to be pure) and the scaling function $f$ (which maps the values of the pointer
to those of the measured observable) then fix
the \emph{measurement  scheme} 
$\mathcal{M} \equiv \langle \mathcal{H_A},U,\phi ,Z,f\rangle$ for observable $\Esf$ of $\Sy$. With $\Psi _{T }=U(\varphi \otimes \phi )\in 
\mathcal{H}$, $\mathcal{M}$ must satisfy the
\emph{probability reproducibility condition}:
\begin{equation}\label{prc}
\left\langle \Psi _{T }|\id \otimes \Esf^{Z}\bigl(f^{-1}(X)\bigr)\Psi _{T}
\right\rangle \equiv \left\langle \varphi |\Esf(X)\varphi \right\rangle ,
\end{equation}
where $ \Esf^{Z}\bigl(f^{-1}(X)\bigr)$ are spectral projections of $Z$, and \eqref{prc} holds for all $\varphi$ and $X$. In words,
\eqref{prc} stipulates that the outcome
distribution for $\Esf$ in any state $\varphi $ may be recovered from the pointer
statistics in the final state $\Psi _{T}$. Conversely, given a
measurement scheme as described above, this relation determines the measured
observable $\Esf$. 

A measurement (scheme) is said to be \emph{repeatable} if, upon immediate repetition
of the measurement, the same outcome is achieved with certainty. This may be written:
\begin{equation}\label{rep}
\left\langle \Psi _{T} |\Esf(X)\otimes \Esf^{Z}\bigl(f^{-1}(X)\bigr)\Psi _{T}
\right\rangle =\left\langle \varphi |\Esf(X)\varphi \right\rangle.
\end{equation} 
We note that \eqref{prc} does not entail \eqref{rep}, and therefore the question of repeatability must be treated independently 
of that of probability reproducibility. 

\subsection{Conservation Laws: Strong and Weak WAY Theorems}\label{subsec:sww}

We present here the standard version and interpretation of the theorem of Wigner, Araki and Yanase
(WAY), as presented in \cite{LB2011}, giving both the no-go part, prohibiting sharp, repeatable measurements of an observable (in the ordinary sense) which does not commute with (the system part of) an additive conserved quantity, and the positive part demonstrating conditions under which good approximation can be achieved. We begin with a version of the WAY theorem subject to a stronger assumption than is typical---that subsystem quantities are conserved---which we therefore 
refer to as the \emph{strong WAY theorem}.

\subsubsection{Strong Conservation}

The stipulation that observability entails invariance follows as a theorem in the quantum theory of measurement from a constraint on $\mathcal{M}$, namely the conservation of some quantity of $\his$ (see also \cite{pape17}, ch. 21).

Consider the strongly continuous unitary group described by the operators $U_{\Sy}(t) \equiv e^{itL_{\Sy}}$, with $t \in \mathbb{R}$ or 
$t \in [0, 2 \pi]$ and $L_{\Sy}$ a self-adjoint operator acting in $\his$. Then the following holds.\footnote{A similar statement to Proposition \ref{prop:ssr1} was proven on the basis of stronger assumptions by S. Tanimura \cite{Tanimura2011}.}

\begin{proposition}\label{prop:ssr1}
Suppose that for any measurement scheme $\mathcal{M}$ for $\Esf$, $L_{\Sy}$ is conserved, i.e., $[U,U_{\Sy}(t) \otimes \id]=0$. Then
\begin{equation} \label{eqn:ssr}
U_{\Sy}(t) \Esf(X) U_{\Sy}(t)^* = \Esf (X)
\end{equation}
for all value sets $X \in \mathcal{F}$ and all $t$.
\end{proposition}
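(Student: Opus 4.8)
The plan is to feed a symmetry-rotated input state into the probability reproducibility condition \eqref{prc} and then slide the rotation through $U$ using the conservation law. First I would fix a measurement scheme $\mathcal{M} = \langle \hia, U, \phi, Z, f\rangle$ for $\Esf$ — one exists because every normalised {\sc pom} admits a measurement dilation — so that the hypothesis gives $[U, U_{\Sy}(t)\otimes \id] = 0$ for all $t$, and \eqref{prc} reads $\ip{U(\varphi\otimes\phi)}{\bigl(\id\otimes\Esf^{Z}(f^{-1}(X))\bigr)U(\varphi\otimes\phi)} = \ip{\varphi}{\Esf(X)\varphi}$ for every unit vector $\varphi\in\his$ and every value set $X$.

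Next I would substitute $\varphi \mapsto U_{\Sy}(t)\varphi$ into \eqref{prc}. On the left-hand side, $U\bigl(U_{\Sy}(t)\varphi\otimes\phi\bigr) = U\bigl(U_{\Sy}(t)\otimes\id\bigr)(\varphi\otimes\phi) = \bigl(U_{\Sy}(t)\otimes\id\bigr)U(\varphi\otimes\phi) = \bigl(U_{\Sy}(t)\otimes\id\bigr)\Psi_{T}$ by the conservation law. Since the pointer effect $\id\otimes\Esf^{Z}(f^{-1}(X))$ acts only on the apparatus factor it commutes with $U_{\Sy}(t)\otimes\id$, and $U_{\Sy}(t)$ is unitary, so the left-hand side collapses back to $\ip{\Psi_{T}}{\bigl(\id\otimes\Esf^{Z}(f^{-1}(X))\bigr)\Psi_{T}} = \ip{\varphi}{\Esf(X)\varphi}$. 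The right-hand side becomes $\ip{U_{\Sy}(t)\varphi}{\Esf(X)U_{\Sy}(t)\varphi} = \ip{\varphi}{U_{\Sy}(t)^{*}\Esf(X)U_{\Sy}(t)\varphi}$.

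Comparing the two expressions yields $\ip{\varphi}{\bigl(U_{\Sy}(t)^{*}\Esf(X)U_{\Sy}(t) - \Esf(X)\bigr)\varphi} = 0$ for all unit vectors $\varphi$, hence for all vectors by homogeneity; a bounded operator on a complex Hilbert space with vanishing diagonal matrix elements is zero, so $U_{\Sy}(t)\Esf(X)U_{\Sy}(t)^{*} = \Esf(X)$ for all $X$ and $t$, which is \eqref{eqn:ssr}. The argument is short, and there is no genuine obstacle: the only point needing care is bookkeeping of which tensor factor each operator acts on (so that the pointer observable commutes with $U_{\Sy}(t)\otimes\id$ while $\Esf(X)$ on the system does not), together with noting at the outset that a measurement scheme for $\Esf$ exists so that the hypothesis is not vacuous.
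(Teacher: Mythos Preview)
Your proof is correct and follows essentially the same route as the paper's own argument: substitute $U_{\Sy}(t)\varphi$ into the probability reproducibility condition, use the conservation law and the fact that the pointer acts only on the apparatus factor to see that the left-hand side is unchanged, and conclude that the right-hand side is unchanged as well. You have simply spelled out a few steps (existence of a dilation, the commutation of the pointer effect with $U_{\Sy}(t)\otimes\id$, and the passage from vanishing expectation values to operator equality) that the paper leaves implicit.
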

\begin{proof}
It is sufficient to consider measurement schemes for which the pointer function $f$ in \eqref{prc} is
the identity map. Equation \eqref{prc} gives 
\[
\ip{U\fii\otimes\phi}{\id\otimes{\mathsf E}^Z(X)U\fii\otimes\phi}=\ip{\fii}{\Esf(X)\fii}
\]
for all object initial states $\fii$ and all $X$. Replacing $\fii$ with $U_{\Sy}(t)\fii$ does not change the left hand side,
and therefore the right hand side is also unchanged, immediately giving (\ref{eqn:ssr}). 
\end{proof}

%\begin{remark} \rm [Turned into footnote before Prop.]
%A similar statement to Proposition \ref{prop:ssr1} was proven on the basis of stronger assumptions by S. Tanimura \cite{Tanimura2011}.
%\end{remark}

%\begin{remark}
%Proposition \ref{prop:ssr1}, when $L_{\Sy, \mathcal{A}}$ are number operators and $t \in [0, 2 \pi)$, has strong formal similarities%] 
%to the existence of a ``superselection rule for number"; observables of $\Sy$ must commute
%with $N_{\Sy}$. $N_{\Sy}$-eigenspaces are then called superselection sectors, and $N_{\Sy}$ a %superselection charge.
%\end{remark}

\begin{remark}\rm
If $\Sy$ is considered to be an {\em elementary} system,  the operators $U_{\Sy}(t)$ comprise an irreducible representation, in which case the only effects satisfying \eqref{eqn:ssr} are those of the form $\Esf (X) = c_X \id$ ($0 \leq c_X \leq 1$), i.e., 
the trivial effects. If, however, $\Sy$ is more complex, comprising several elementary systems for instance, $\Sy$ can be separated into an ``object" system $\Sy _{\mathcal{O}}$ and a ``reference" system $\Sy_{\mathcal{R}}$. Absolute quantities of $\Sy _{\mathcal{O}}$ then function only
as representatives of observables of $\Sy$ as a whole, and depending on the composition
of $\Sy_{\mathcal{R}}$ may or may not accurately represent observables; as we have seen,
$\Sy_{\mathcal{R}}$ under certain localisation requirements allows for the absolute quantities 
of $\Sy_{\mathcal{O}}$ to be good representations of observables. In particular, 
$\Sy_{\mathcal{R}}$ may be viewed as a measuring apparatus, as is the case in the WAY theorem, which we initially present in its conventional form, and subsequently reinterpret in a relative vein.
\end{remark}

%We also note that a ``strong" conservation, i.e., the conservation of $L_{\Sy}$ alone, follows from
%the unitary coupling being generated by observables of the system and apparatus:

%\begin{proposition}\label{prop:ssr2}
%Assume that the measurement coupling $U$ is of the form $U=\sum_{jk}A_j\otimes B_k$ 
%(or a limit of
%such terms), where $[A_j,L_{\Sy}] = 0$ and $[B_{k}, L_{\Ap}]=0$. Then $U,L_{\Sy}=0$.

%\begin{proof}
%Since $L_{\Sy}$ and $L_{\mathcal{A}}$ are superselection charges for $\mathcal{S}$ and $\mathcal{A}$, respectively,
%the operators $A_j$ and $B_k$ commute with these respective charges. Then it is clear that $U$ %commutes with
%$N_{tot}$ and hence the strong conservation law follows. 
%\end{proof}

\subsubsection{Weak Conservation: Wigner-Araki-Yanase Theorem}

In this instance a conservation law is applied to the system-apparatus combination, but is not assumed to hold `locally', i.e., for the system under investigation and apparatus separately.  We present the traditional reading of the WAY theorem.

\begin{theorem} {\rm(}Wigner-Araki-Yanase{\rm)}
Let $\mathcal{M} := \left\langle \mathcal{H}_A, U, \phi, Z, f \right\rangle$ be a measurement of a discrete-spectrum self-adjoint operator $A$ on $\mathcal{H_S}$, and let $L_{\Sy}$ and $L_{\Ap}$ be bounded self-adjoint operators on $\mathcal{H_S}$ and $\mathcal{H_A}$, respectively, such that 
$[U, L_{\Sy} +L_{\Ap}] = 0$. Assume that $\mathcal{M}$ is repeatable or $[Z,L_{\Ap}]=0$. Then $[A,L_{\Sy}]=0$.
\end{theorem}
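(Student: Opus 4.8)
The plan is to carry out the Araki--Yanase argument within the present formalism. Since $A$ has discrete spectrum, write $A=\sum_n a_nP_n$ with $\{P_n\}$ its orthogonal spectral projections and $\sum_nP_n=\id$; then $[A,L_\Sy]=0$ is equivalent to $P_mL_\Sy P_n=0$ whenever $m\ne n$, i.e. to $\langle\varphi_m|L_\Sy\varphi_n\rangle=0$ for all unit vectors $\varphi_m\in P_m\his$ and $\varphi_n\in P_n\his$ with $m\ne n$. So I would fix such $\varphi_m,\varphi_n$ and set $\Phi:=U(\varphi_m\otimes\phi)$, $\Psi:=U(\varphi_n\otimes\phi)$, and $\hat P_k:=\Esf^Z\bigl(f^{-1}(\{a_k\})\bigr)$; since $a_m\ne a_n$, the value sets $f^{-1}(\{a_m\})$ and $f^{-1}(\{a_n\})$ are disjoint, hence $\hat P_m\hat P_n=0$. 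The first input is \emph{pointer localisation}: applying the probability reproducibility condition \eqref{prc} with measured observable $\Esf^A$, value set $\{a_m\}$ and initial vector $\varphi_m$ gives $\langle\Phi|\id\otimes\hat P_m|\Phi\rangle=\langle\varphi_m|P_m\varphi_m\rangle=1$, hence $(\id\otimes\hat P_m)\Phi=\Phi$, and likewise $(\id\otimes\hat P_n)\Psi=\Psi$.

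Next I would invoke the conservation law. From $[U,L_\Sy\otimes\id+\id\otimes L_\Ap]=0$ we get $U^*(L_\Sy\otimes\id+\id\otimes L_\Ap)U=L_\Sy\otimes\id+\id\otimes L_\Ap$, so
\[
\langle\varphi_m\otimes\phi|(L_\Sy\otimes\id+\id\otimes L_\Ap)|\varphi_n\otimes\phi\rangle=\langle\Phi|(L_\Sy\otimes\id+\id\otimes L_\Ap)|\Psi\rangle .
\]
Using $\langle\phi|\phi\rangle=1$ and $\langle\varphi_m|\varphi_n\rangle=0$, the left-hand side collapses to $\langle\varphi_m|L_\Sy\varphi_n\rangle$, so the claim reduces to showing the right-hand side vanishes. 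Its $L_\Sy$ part is automatic once pointer localisation is in hand, because $L_\Sy\otimes\id$ commutes with $\id\otimes\hat P_k$:
\[
\langle\Phi|L_\Sy\otimes\id|\Psi\rangle=\langle\Phi|(\id\otimes\hat P_m)(L_\Sy\otimes\id)(\id\otimes\hat P_n)|\Psi\rangle=\langle\Phi|L_\Sy\otimes\hat P_m\hat P_n|\Psi\rangle=0 .
\]

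The genuinely non-trivial part, which I expect to be the main obstacle, is the $L_\Ap$ cross-term $\langle\Phi|\id\otimes L_\Ap|\Psi\rangle$, and this is exactly where the disjunctive hypothesis enters. If $[Z,L_\Ap]=0$, then $L_\Ap$ commutes with every $\hat P_k$, so $\langle\Phi|\id\otimes L_\Ap|\Psi\rangle=\langle\Phi|\id\otimes(\hat P_mL_\Ap\hat P_n)|\Psi\rangle=\langle\Phi|\id\otimes(L_\Ap\hat P_m\hat P_n)|\Psi\rangle=0$. If instead $\mathcal{M}$ is repeatable, then the repeatability condition \eqref{rep} (again with $\Esf^A$, $\{a_m\}$, $\varphi_m$) gives $\langle\Phi|P_m\otimes\hat P_m|\Phi\rangle=1$, whence $(P_m\otimes\id)\Phi=\Phi$ and $(P_n\otimes\id)\Psi=\Psi$, so $\langle\Phi|\id\otimes L_\Ap|\Psi\rangle=\langle\Phi|(P_mP_n)\otimes L_\Ap|\Psi\rangle=0$ because $P_mP_n=0$. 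In either case the right-hand side is $0$, giving $\langle\varphi_m|L_\Sy\varphi_n\rangle=0$; since $\varphi_m,\varphi_n$ ranged over the two eigenspaces, $P_mL_\Sy P_n=0$ for all $m\ne n$, i.e. $[A,L_\Sy]=0$. The only points I would take care over are that discreteness of $\sigma(A)$ really makes $\{P_n\}$ a resolution of the identity (so that vanishing of all off-diagonal blocks does yield commutativity, including when $A$ is unbounded) and the tensor-factor bookkeeping of which operator acts on which factor.
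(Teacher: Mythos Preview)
Your proof is correct and is precisely the classical Araki--Yanase argument; the paper does not give its own proof but refers to \cite{LB2011}, where exactly this argument appears. There is nothing to add.
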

We refer to \cite{LB2011} for a proof. Following Ozawa \cite{ozawa1}, we refer to the condition that the pointer observable $Z$ commutes with the apparatus part of the conserved quantity $L_{\Ap}$ as the
\emph{Yanase condition} \cite{yan1}. In the case that $[A,L_{\Sy}]\neq 0$, there is a positive counterpart to the impossibility result: approximate measurements of $A$, with approximate repeatability properties, are feasible, with increasingly good approximation properties the larger the variance $\left(\Delta_{\phi}L_2\right)^2$
becomes (see \cite{LB2011}, where more general measures of spread are also considered) and indeed that such large ``spread" is \emph{necessary} for good measurements of $A$.

Thus, in its usual reading, the WAY theorem does not prohibit accurate measurements of \emph{unsharp} observables 
which do not commute with $L_{\Sy}$, thus leaving room for a positive rephrasing of the theorem where a smeared, approximate version of $A$ can be measured accurately. We now address this point, arguing that, just as in the discussion following the strong version of the theorem, one should actually conclude that the measured observable in the WAY theorem must be understood as a representative of a relative observable of system and apparatus combined.

The standard interpretation of the WAY theorem states that any sharp $A$ not commuting with (the object part $L_1$ of) an additive conserved quantity $L$ for which the Yanase condition is satisfied ($[Z,L_2]=0$) cannot be measured precisely. Moreover, good approximation \emph{can} occur if there is large uncertainty with respect to the apparatus part $L_2$ of the conserved quantity in the initial state $\phi$ of the apparatus, i.e., if $\left(\Delta_{\phi}L_2 \right)^2$ is large.

In light of the theme of this paper, namely understanding the consequences of the principle that observables are invariant, we may reconsider the message of the WAY theorem. We recall that, for fixed $\phi \in \hir$, the equation
\begin{equation}
\ip{\varphi}{\Esf(X)\varphi}=\ip{\varphi \otimes \phi}{U^* \id \otimes \E^Z (X) U \varphi \otimes \phi},
\end{equation}
when stipulated to hold for all $X, \varphi$ determines the \po $\Esf$. In other words, $\Esf(X) = \Gamma_{\phi}(U^*\id \otimes \Esf^Z(X)U)$. Given the Yanase condition ($[Z,L_2]=0$)
and the conservation law ($[U,L]=0$), it follows that $[U^*ZU,L]=0$. Writing 
$U^*ZU \equiv Z(\tau)$, it is therefore evident that $Z(\tau)$ is invariant under the symmetry
generated by $L$ and that, furthermore, in the limit that $\left(\Delta_{\phi}L_2 \right)^2$
becomes large, $A$ (which is not necessarily observable) can become a good approximation of the 
observable $Z(\tau)$. 

If $L_2$ is the shift-generator in a conjugate quantity (e.g., a number operator generating phase shifts), then large $L_2$ spread in $\phi$ corresponds to high localisation
with respect to $\phi$ in the conjugate quantity, completely in line with the view that for
$A$ to be a good representative of an invariant observable, the reference system must be highly localised with respect to a phase-like quantity, {\it \`{a} la} $\Y$. This also sheds light on the reason
that $L_2$ must have large spread in the \emph{initial} state of the apparatus $\phi$. This view of the ordinary WAY theorem then arises when the strong WAY theorem is applied to system-plus-apparatus together, viewed as an isolated system.

\begin{example} \rm
{\bf Ozawa model of an unsharp position measurement: relative versus absolute position.}

The relational view just discussed may be exemplified in a position measurement model of Ozawa, introduced in 
\cite{Ozawa} and analysed further in \cite{Loveridge/Busch}, \cite{ldlthesis}. We consider the momentum--conserving position measurement scheme in which $\Sy+\R$ interacts with two apparatus
systems $\Ap+\Bp$.
This scheme  measures the absolute position $Q$ with the pointer observable $P_\Bp-P_\Ap$, a relativised momentum. Contrary to the claim in \cite{Ozawa}, a WAY-type limitation is exhibited for this model.
However, we show that the same scheme may be used to measure the relative position observable, $Q \otimes \id - \id \otimes Q_{\mathcal{R}}\equiv Q-Q_\R$; in this case there is no localisation
requirement at all for good measurements, as would be expected since $Q-Q_{\R}$ is already shift-invariant. Moreover, we demonstrate that the absolute position $Q$ well represents $Q-Q_{\R}$ precisely when $Q_{\R}$ is highly position-localised, corresponding to a large momentum spread in
the reference system $\R$ (cf.~Example \ref{ex:upo}).

The unitary measurement coupling is given by $U=e^{i \frac{\lambda}{2}(Q-Q_{\mathcal{R}})(Q_\Ap - Q_\Bp)}$, which commutes with the total momentum $P +P_{\R}+P_{\Ap}+P_{\mathcal{B}}$ (notice also that $P +P_{\R}$ is separately conserved and therefore falls under the remit of Proposition \ref{prop:ssr1}). Subsequently the pointer $Z$, given by the difference of momentum operators $P_\Bp-P_\Ap$, is measured.
We consider the initial state $\Psi_0(x,y,u,v)=\varphi(x)\phi(y)\xi_a(u)\xi_b(v)$, where $x,y,u,v$ are spectral values of $Q, Q_{\R},P_{\mathcal{B}} - P_{\Ap}, P_{\Ap}+P_{\mathcal{B}}$ respectively. The unique measured {\sc pom} $\widetilde{\mathsf{E}}: \mathcal{B} (\mathbb{R}) \to \mathcal{L}\bigl(\mathcal{H}\otimes \mathcal{H}_{\R}\bigr) \equiv \mathcal{L}(L^{2} (\mathbb{R}^2)) $
is  extracted from the condition
\begin{equation}\label{oz1}
\left\langle \Psi _{\tau }|\id\otimes \id\otimes 
\mathsf{E}^{Z}(f^{-1}(X))\otimes \id\Psi _{\tau }\right\rangle
=\left\langle \varphi \otimes \phi| \widetilde{\mathsf{E}}(X)\varphi\otimes \phi \right\rangle,
\end{equation}
required to hold for all $\varphi,~\phi$. It then follows that
\begin{equation}
\widetilde{\mathsf{E}}(X)=\chi _{X}\ast \widetilde{e}^{(\lambda )}(Q-Q_{\mathcal{R}}),
\end{equation}
where the right-hand side is the convolution of the set indicator function $\chi_X$ with the probability distribution $\widetilde{e}^{(\lambda )}(x)=\bigl\vert \xi_a^{(\lambda )}(x)\bigr\vert ^{2}$ with $\xi_a ^{(\lambda)}(s) = \sqrt{\lambda}\xi_a(\lambda s)$.

We see that $\widetilde{\mathsf{E}}$ is a smeared version of $\Esf^{Q-Q_{\mathcal{R}}}$. As such the former can be considered an approximation of the latter, and we may quantify the inaccuracy or error of that approximation by the variance of the distribution function $\widetilde{e}^{(\lambda )}$ (other measures such as overall width can also be used: see \cite{ldlthesis}). The variance of $\widetilde{e}^{(\lambda )}$ is ${\var}(\widetilde{e}^{(\lambda )})=\frac{4}{\lambda ^{2}}{\var}\left\vert \xi_a\right\vert ^{2}$. Therefore by tuning $\lambda$ to be large, arbitrarily accurate measurements of $Q-Q_{\mathcal{R}}$ can be achieved with no localisation requirement on the reference system $\R$.

The absolute position $Q$ then acts as an approximation of the observable $Q-Q_{\R}$, the approximation becoming good with good $Q_{\R}$ localisation. By fixing $\phi$ in the initial state $\Psi_0$, the measurement scheme can be viewed as ``measuring" a {\sc pom} $\Esf$ for $\Sy$:
$$
\ip{\varphi \otimes \phi}{\widetilde{\Esf}(X)\varphi \otimes \phi} =: \ip{\varphi}{\Esf (X) \varphi}.
$$ 
This is of the form $\Esf (X) =  \chi _X * e^{(\lambda)}(Q)$ with $e^{(\lambda)}$ given by 
$e^{(\lambda)}(x) =  \bigl|{\phi}\bigr|^2*\bigl|{\xi_a ^{(\lambda)}}\bigr|^2(x)$. 
The probability distribution for the relative position
has thereby been re-expressed in terms of a smeared distribution for the absolute position by considering a fixed reference state $\phi$. 
The approximation error of $\Esf$ relative to $\Esf^Q$ is given by
$\var (e^{(\lambda)}) = \var \mods{\phi} + \frac{4}{\lambda ^2} \var \mods{\xi_a}$.
The probability distributions corresponding to the relative coordinate in the states $\varphi \otimes \phi$ become indistinguishable from those of the absolute coordinate $Q$ in the limit that the localisation of the state $\phi$ with respect to $Q_{\mathcal{R}}$ 
is arbitrarily good (provided also that $\lambda$ is tuned to be large.)

This model, therefore, highlights how relative observables such as $Q-Q_{\R}$ may be measured
whilst preserving overall symmetry imposed by the conservation of total momentum,
with a pointer observable that also respects symmetry. In this case, the apparatus is not required to function as a reference system, which is internal to the measurement
device and whose localisation controls the quality of the approximation by the absolute quantity.
\end{example}
We have therefore seen that the picture of observables as relative quantities may be well maintained in the presence of dynamics. It was shown that the WAY theorem has a relative interpretation, and the model of Ozawa provided a measurement scheme for the relative position
observable, which could be re-expressed as an accurate measurement of an absolute position
precisely when the reference system was well position-localised. Absolute quantities
were seen to be good representatives of observables again in the high reference localisation limit,
the interpretation therefore not differing from the ``static" case. We now consider the impact of this enquiry on the status of superpositions, interference and superselection rules.

\section{Interference Phenomena}
We begin this section with a typical analysis of interference phenomena. Absolute coherence is the usual requirement for interference effects to manifest. We show that, from our relational perspective, 
mutual coherence replaces absolute coherence in regard to interference. We provide several models
which serve to illustrate the problem of observing coherence, and then turn to the role played by high phase localisation of the reference.

The stipulation that observables are phase-shift invariant implies that relative phase 
factors\footnote{It may be of help to note here that there are two distinct uses of the term {\em relative phase}: up to this point, we have used the term to designate observables that are defined as relative to a reference system; in the context of interference experiments, one speaks of {\em relative phase factors} to indicate that these are actually phase differences between two states appearing in superposition, contrasting these observable quantities with the unobservable overall phase factors one may attach to a state vector.} between states in a superposition 
of number eigenstates (with differing eigenvalue) ought to be unobservable. We have seen, for example, that the state $P[\varphi]$ of $\Sy$ with $\varphi =  \bigl( \ket{0} + e^{i \theta} \ket{1}\bigr)/{\sqrt{2}}$ cannot be distinguished from $\tau_{\Sy_*}(P[\varphi]) = 1/2(\ket{0}\bra{0} + \ket{1}\bra{1})$ (which has no dependence on $\theta$)
by any quantity commuting with $N_{\Sy}$. 

The usual reading of $\theta$-dependent expectation values (appearing in states akin to that discussed above) arising in measurements is that a coherent (in our language, absolutely coherent) superposition has been prepared and measured (to be coherent). Since this is equivalent to 
measuring an absolute quantity, this conclusion warrants further scrutiny. There is a history of debate and controversy surrounding the meaning 
of $\theta$-dependent expectation values in superpositions of number states (also understood as charge eigenstates) in the subject of superselection rules \cite{www, as, www70}, and of photon number states in the so-called optical coherence controversy \cite{molmer, dia}, regarding the reality of coherent states in describing the output field of a laser.

In the forthcoming subsections we motivate the question of relative phase-factor sensitivity more formally and in a dynamical context, by first discussing a generic interference experiment, followed by three model considerations. We finish with a discussion of the role of high phase localisation and the accompanying interpretation of the measurement statistics. We then use
our findings to analyse points of agreement and points of friction between our interpretation
and those appearing in the literature.

\subsection{Interferometry}

Ramsey interferometry exemplifies the typical form of interference experiments (see, e.g., \cite{dbrs}). Here, an atom enters a cavity in its ground state $\state{g}$, interacts with the cavity, and exits in a superposition of ground and excited states. At the level of the atom the following sequence (or similar) of (unitary) state evolutions is often given:
\begin{align}\label{eq:intf}
\psi _i \equiv \state{g} &\to \frac{1}{\sqrt{2}}(\state{g} - i  \state{e}) \to  \frac{1}{\sqrt{2}}(\state{g} - i e^{-i \theta} \state{e})\\ &\to \sin \left( \frac{\theta}{2} \right)\state{g} - \cos \left( \frac{\theta}{2} \right) \state{e} \equiv \psi_f,\label{eq:int2}
\end{align}
 where $\state{e}$ represents an excited state of the atom.
If the observable $P_g \equiv \state{g}\dstate{g}$ is measured in the final state, we see 
$\ip{\psi _f}{P_g \psi_f}= \sin ^2 \left( {\theta}/{2} \right)$. The orthodox reading 
of such a measurement is that this $\theta$--dependent
probability distribution for the observable $P_g$ in the state $\psi_f$ validates the 
coherence of the superposition state $\frac{1}{\sqrt{2}}(\state{g} - i e^{-i \theta} \state{e})$.

However, the Hamiltonian generating such
an evolution certainly does not commute with $P_g,~P_e $ (i.e., $N_{\Sy}$) and is, itself, therefore not phase-shift invariant (and thus not (an) observable). Equations \eqref{eq:intf} and \eqref{eq:int2} must, if applicable at all, therefore be viewed as approximate, reduced descriptions of the true, energy-conserving dynamics of system-plus-cavity. 

We will obtain a consistent description of measurements which at first sight appear
sensitive to relative phase factors between number superpositions. Keeping in mind
that observables are invariant and that states are class representatives, we may 
obtain statistics which look \emph{as if} absolute quantities have been measured
or, alternatively (and equivalently), that relative phase factors across number eigenspaces have been observed.
Again, this is a reduced, approximate description and not a true representation of the state 
of affairs. The models to be presented have strong formal similarities to the case of observability of phase factors between states of different charge and different baryon number, allowing for comparison to the issue of whether superselection rules 
may be obviated in practice (cf. \cite{as, brs, dbrs}). We show that all such attempts may be phrased purely in terms of measurements of relative quantities (i.e., observables), highlighting the fact the absolute quantities are never measured.

\subsection{Model 1: Two--level System }
We first consider a model in Hilbert space dimension $4$ to show
how to dynamically introduce a relative phase factor between number states (of the same total number eigenvalue), whilst  respecting symmetry. The restriction to low dimensions highlights the relational nature of the relative phase factor. The generic structure of this model can then be applied to the scenario
where the reference system's Hilbert space has infinite dimension, which resembles the situation for which there have been claims purporting to ``lift" \cite{brs} or evade superselection rules. However, we argue that there is no reason for the interpretation of measurement statistics in the infinite dimensional setting to be different from the model discussed below, except for the observation that with infinite dimensional reference systems, expectation values of absolute quantities (can be made to) agree arbitrarily well with those of the relative ones (contingent on a choice of reference state).

Let $N_{\Sy} \in {\lhs} \equiv \mathcal{L}(\mathbb{C}^2)$ be a number operator so that $N_{\Sy}\state{0}=0,~N_{\Sy} \state{1}=\state{1}$, and let $N_{\R} \in \lhr$ have the same definition. 
Any self-adjoint operator $A \in \lht$ must commute with $N :=N_{\Sy} \otimes \id+ \id\otimes N_{\R}$ if it is to be deemed observable.

We introduce two unitary operators $U_1$ and $U_2$ which represent two stages of time evolution,  defined as
\begin{align*}
\state{0}\state{0} & \overset{U_1}{\longrightarrow} \state{0} \state{0} \overset{U_2}{\longrightarrow} \state{0}\state{0};\\
\state{0}\state{1} & \overset{U_1}{\longrightarrow}\frac{e^{-i \frac{\theta} {2}}}{\sqrt{2}} \left( \state{0}\state{1} + e^{i \theta} \state{1} \state{0} \right) \\ & \overset{U_2}\longrightarrow  \left( \cos \left( \frac{\theta}{2} \right) \state{0}\state{1} -i \sin \left( \frac{\theta}{2} \right) \state{1} \state{0} \right);\\
\state{1}\state{0} &\overset{U_1}\longrightarrow \frac{e^{-i \frac{\theta} {2}}}{\sqrt{2}} \left( \state{0}\state{1} - e^{i \theta}\state{1} \state{0} \right)\\  & \overset{U_2}\longrightarrow \left(-  i \sin \left( \frac{\theta}{2} \right)\state{0}\state{1} + \cos \left( \frac{\theta}{2} \right)\state{1} \state{0} \right);\\
\state{1}\state{1}& \overset{U_1}\longrightarrow \state{1} \state{1} \overset{U_2}\longrightarrow \state{1} \state{1};
\end{align*}
and it can be seen that $[U_1,N]=[U_2,N]=0$. Furthermore, 
it is important to note that $U_2$ does not depend on $\theta$, which can be seen by the action of $U_2$ 
on the initial product states given by
$U_2 \state{0}\state{1} = \frac{1}{\sqrt{2}} \bigl(\state{0}\state{1} + \state{1}\state{0}\bigr)$ 
and $U_2\state{1}\state{0} = \frac{1}{\sqrt{2}} (\state{0}\state{1} - \state{1}\state{0})$.
The purpose of applying $U_2$ is to allow a measurement of an invariant quantity of $\Sy$,  
which gives rise to $\theta$--sensitive measurement statistics.

In other words, $U_1$ introduces the factor $\theta$, $U_2$ redistributes the $\theta$-dependence, so that the measurement of an invariant $\Sy$-quantity depends on $\theta$, which then
validates the superposition present after $U_1$.

Writing $P_0 :=\state{0} \dstate{0}$,  $\psi = \state{0}\state{1}$, and noting that 
$\tau{_{\Sy}} (P_0) =P_0$, 
we compute post-$U_2$ statistics: 
\begin{equation}\label{tls}
{\rm tr} \bigl[ P_0 \otimes \id \tau_{\T *}( P_{U_2U_1 \psi})  \bigr] ={\rm tr} \bigl[ P_0 {\rm tr} _{\mathcal{K}} P_{U_2U_1\psi} \bigr] .
\end{equation} 
This yields the probability $p_{U_2U_1\psi}^{P_0}(0) = \cos^2 \left( \frac{\theta}{2} \right)$, which depends explicitly on the phase $\theta$. Applying $\tau{_{\T_*}}$ at every stage does not alter the probabilities; we have, for example
\begin{align}
\tau{_{\T_*}} (P_{\psi}) &\to U_1 \tau{_{\T_*}} (P_{\psi})U_1 ^* = \tau{_{\T_*}} (U_1 P_{\psi} U_1 ^*)\\ \nonumber
& \to U_2\bigl(  \tau{_{\T_*}} (U_1 P_{\psi} U_1 ^*)\bigr) U_2 ^*  = \tau{_{\T_*}} (U_2 U_1 P_{\psi} U_1 ^* U_2 ^*) = \tau{_{\T_*}} (P_{U_2U_1 \psi}).
\end{align}
Then ${\rm tr} \bigl[P_0 \otimes \id\tau_{\T_*}(P_{U_2U_1 \psi})\bigr]$ coincides with the expression in \eqref{tls}.
The unitary maps $U_1$ followed by $U_2$ mimic what might occur in a realistic interference experiment in which the reference system is confined to a low dimensional Hilbert space. The interference fringes dictated by $\theta$ may be observed through the measurement of an invariant system-apparatus quantity. This does not require absolute coherence, i.e., does not imply the coherence of superpositions \emph{across} $N$-eigenspaces. It does, however, require mutual coherence.

Considering the states arising after application of $U_1$, it is immediately clear that the reduced states ${\rm tr}_{\mathcal{H}_R} [P_{U_1 \state{i}\state{j}}] $ and ${\rm tr}_{\his}{P_{U_1\state{i} \state{j}}}$ have no dependence on $\theta$, indicating that $\theta$ relates to
both $\Sy$ and $\R$. Since the post-$U_1$ states are entangled, the only means by which we may identify a system and a reference is via the partial trace.

We may define a restriction map 
$\Gamma_{\rho_{\R}}$ with $\rho_{\R}={\rm tr}_{\his}{P_{U_1\state{i} \state{j}}}$, which, since $\rho_{\R}$ is invariant, yields only invariant restricted quantities for $\Sy$ (if assumed invariant for $\Sy + \R$). The conclusion, then, is that $\theta$-dependent expectation values do not correspond to the observation of 
states with absolute coherence.

We now analyse a variety of infinite dimensional examples, and argue in subsection \ref{subsec:hpld} that we must
draw the same conclusion: relative phase factor sensitive measurement statistics can be achieved by measuring observables (and only observables), i.e., the relevant relative phase factors occur within an $N$-eigenspace. 
Only in the high reference phase localisation do these appear {\em as though} they pertain to
the system alone. 

\subsection{Model 2: Angular Momentum and Angle}\label{am}
We now adapt the previous model, replacing the space $\mathbb{C}^2$ of the reference system
with an infinite dimensional space, and construct a new unitary mapping (still calling it $U_1$ and restricting to the subspace spanned by $\{\ket{0},\ket{1}\}$ for the first system):
\begin{align}\label{eq:un1}
\state{0} \state{n} &\overset{U_1}\longrightarrow e^{-i \frac{\theta} {2}}\frac{1}{\sqrt{2}} \left(\state{0} \state{n} + e^{i \theta}\state{1} \state{n-1} \right),\\
 \state{1} \state{n-1} &\overset{U_1}\longrightarrow e^{-i \frac{\theta} {2}}\frac{1}{\sqrt{2}} \left(\state{0} \state{n} - e^{i \theta}\state{1} \state{n-1} \right) .\label{eq:un2}
\end{align}
Here the basis vectors are the eigenvectors of $N_i = \sum _{n=- \infty} ^{\infty}n ^{(i)} P_n ^{(i)}$. We observe that the partial trace over system or reference yields reduced states which do not depend on $\theta$.
Linearity and continuity entail
\begin{equation} \label{u1}
U_1: \Psi_0 \equiv \state{0} \state{\xi} \equiv \state{0} \sum_{n=-\infty}^\infty c_n \state{n} \longrightarrow  e^{-i \frac{\theta} {2}}\frac{1}{\sqrt{2}}\sum_{n=-\infty}^\infty c_n \left(\state{0}\state{n} + e^{i \theta} \state{1}\state{n-1}\right)\equiv\Psi_f.
\end{equation}
The initial state $\Psi_0$ under $\tau{_{\T_*}}$ takes the form (sums taken for $n$ running from $- \infty$ to $\infty$)
$$
\tau{_{\T_*}} (P_{\Psi_0}) = \sum_n P_n P_{\Psi_0} P_n = \state{0}\dstate{0} \sum_n \mods{c_n}\state{n}\dstate{n},
$$
 where the $P_n$ are the infinite-rank projectors onto the eigenspaces of $N = N_1 + N_2$ given as
\begin{equation}
P_n = \sum_{l+m=n}P^{(1)}_l \otimes P^{(2)}_m = \sum_{l} P^{(1)}_l \otimes P^{(2)}_{n-l}.
\end{equation}
We consider what observation may reveal about $\theta$ in the state $\tau_{\T *} (P_{\Psi_f})$ for 
%\linebreak$\Psi_f \equiv   e^{-i \frac{\theta} {2}} \sum c^{\prime} _n (\state{0}\state{n} + e^{i \theta} 
%\state{1}\state{n-1})$ (with $c_n ^{\prime} \equiv c_n /\sqrt{2}$) and 
%$\rho =\sum_n P_n \state{\Psi_f} \dstate{\Psi_f} P_n$.
$\Psi_f$ as given in \eqref{u1}. 
We have 
\begin{align}
\tau{_{\T_*}} (P_{\psi_f}) &=\sum_n \mods{c_n} \frac{1}{2}\Bigl\{ \state{0,n} \dstate{0,n} + \state{1,n-1} \dstate{1,n-1} \Bigr.
\nonumber\\ 
&\qquad\qquad +\Bigl. \state{0,n} \dstate{1,n-1} e^{-i \theta}  + \state{1,n-1} \dstate{0,n} e^{i \theta} \Bigr\}
\nonumber\\ 
&=\sum_n\mods{c_n}\,P_{\frac1{\sqrt2}\left(\state{0,n}+e^{i\theta}\state{1,n-1}\right)}.
\label{eq:70}
\end{align}
Here it is manifest that $\tau{_{\T_*}} (P_{\psi_f})$ is a mixture of states of different $N$-eigenvalues, and within each eigenspace labelled by $n$ there is a relative phase
factor between the states of the same $N$-eigenvalue. There exists an invariant quantity
of $\Sy + \R$ which is sensitive to $\theta$ in the state $\tau_{\T *} (P_{\Psi_f})$. For example, we may choose $A = \ket{0,n}\bra{1,n-1} + {\rm h.c.}$ and invoke relation \eqref{eq:spi} (replacing the spectral projections with the self-adjoint operators they define). 

We may extend the analysis and, in the spirit of the finite dimensional example, introduce a second unitary $U_2$ (which is independent of $\theta$), which with $U\equiv U_2 U_1$ yields on the number basis states
\begin{align}
\state{0} \state{n} &\overset{U}\longrightarrow \cos \left(\frac{\theta}{2}\right)\state{0} \state{n} - i \sin \left(\frac{\theta}{2} \right) \state{1} \state{n-1},\\
 \state{1} \state{n-1} &\overset{U}\longrightarrow -i \sin \left( \frac{\theta}{2} \right) \state{0} \state{n} + \cos \left (\frac{\theta}{2} \right)\state{1} \state{n-1} .
\end{align}

In analogy to the $2\times2$ case, we see that for example ${\rm tr} \bigl [\state{0}\dstate{0} {\rm tr} _{\mathcal{K}}[P_{U\state{0} \state{n}}] \bigr] = \cos^2 \left( \frac{\theta}{2} \right)$, and again,
since we have measured an observable (i.e., $\state{0}\dstate{0} \otimes \id$), applying $\tau{_{\T_*}}$ at all stages does not alter the result. The purpose of $U_2$ is then to bring the final states into a form wherein the measured observable is non-trivial 
only for $\Sy$ and measurement of an invariant quantity for $\Sy$ gives $\theta$-dependent expectation values. This validates the presence of mutual coherence, and does not indicate 
the existence of absolute coherence at any stage. As shall be shown, this becomes crucial in the question of whether superselection rules can be effectively overcome through a judicious choice of unitary mappings and measurements.

\subsection{Model 3: Number and Phase}

We now consider the number-phase case. Here, we have
$N_{\Sy} \otimes \id$ and $\id \otimes N_{\R}$ (each with spectrum given by $\mathbb{N}\cup\{0\}$) acting on 
$\mathcal{H}_S \otimes \mathcal{H}_R$ and $N=N_{\Sy} \otimes \id+\id\otimes N_{\R} = \sum_{n=0}^\infty n P_n$
with $P_n=\sum_{i+j=n}P_i ^{(1)} \otimes P_j ^{(2)}$.
A simple $N$-preserving unitary mapping is given by:
\begin{equation}\label{eq:evo1}
U_1: \state{0} \state{n} \to 
\left\{
\begin{array}{cl}
e^{-i \frac{\theta}{2}}\frac{1}{\sqrt{2}} \left( \state{0} \state{n}+ e^{i \theta}\state{1} \state{n-1}\right) & n > 0 \\
 \state{0} \state{0} & n=0
\end{array}
\right. 
\end{equation}
$$
U_1:\state{1} \state{n-1} \to e^{-i \frac{\theta}{2}}\tfrac{1}{\sqrt{2}} ( -e^{-i \theta}\state{0} \state{n}+ \state{1} \state{n-1})~n>0.
$$
Following the now familiar approach, we introduce a second unitary map $U_2$, under which $U \equiv U_2 U_1$ implements
\begin{equation}\label{eq:npd}
U: \state{0} \state{n} \to 
\left\{
\begin{array}{cl}
 \cos \left(\frac\theta2\right) \state{0} \state{n} - i \sin \left(\frac\theta2\right)\state{1} \state{n-1}) & n > 0 \\
 \state{0} \state{0} & n=0
\end{array}
\right. 
\end{equation}
$$
U:\state{1} \state{n-1} \to -i \sin \left(\tfrac\theta2\right)\state{0} \state{n}+ \cos\left(\tfrac\theta2\right) \state{1} \state{n-1}~n>0.
$$
Then ${\rm tr} \left[\state{0}\dstate{0} {\rm tr} _{\mathcal{K}} P_{U \state{0}\state{n}} \right] = \cos^2 \left(\frac{\theta}{2} \right)$
and once again we have a $\theta$-dependent probability distribution for an observable in the state $U\ket{0}\ket{n}$. Moreover, this does not differ from the distribution in the state
$\tau(P_{U\ket{0}\ket{n}})$. Of course, the $\theta$-dependence only corroborates the ``reality"
of the relative phase factor, within the eigenspace of $N$ with eigenvalue $n$ in the state on the top line of equation \eqref{eq:evo1}.

\subsection{Coherence and Mutual Coherence: Brief Discussion}
In each of the models we have discussed, the crucial component for witnessing interference
effects, in the form of $\theta$-dependent expectation values, is the presence of non-zero mutual coherence for states of $\Sy + \R$ (which is possible even in the absence of absolute coherence for $\Sy + \R$). Mutual coherence allows for (and is necessary for) the appearance of absolute coherence, even without a limit being taken.

For instance, $\varphi = \alpha \ket{0} + \beta \ket{1}$ gives, for $A = \ket{0}\bra{1} + \ket{1}\bra{0}$ the expectation value $2 Re (\alpha \bar{\beta})$. One can define the invariant (entangled) state $\tilde{\varphi} = \alpha \ket{01} + \beta \ket{10}$
and $\tilde{A} = \ket{01}\bra{10} + \ket{10}\bra{01}$ (which does not commute with $N_{\T}$)
so that $\ip{\tilde{\varphi}\tilde{A}}{\tilde{\varphi}} = \ip{\varphi}{ A \varphi} = 2 Re(\alpha \bar{\beta})$. This can also be done with an invariant $\tilde{A^{\prime}} = \tau_{\T}(\tilde{A})$. Thus, in this
case, asymmetric statistics of $\Sy$ can be given by symmetric ones of $\Sy + \R$ without the need for localisation. However, the physical interpretation is unclear due to the non-separability of $\Psi$. The important observation is that mutual coherence of $\tilde{\varphi}$  is required for the possibility of the appearance of absolute coherence of states of $\Sy$.

Next we examine the role of high reference phase localisation
in the interpretation of the measurement statistics.

\subsection{High Phase Localisation}\label{subsec:hpld}

We turn now to the behaviour of model 2 in the regime that the initial state of the reference system
is highly phase-localised. Let $c_n =\frac{ e^{in \theta'}}{\sqrt{2j+1}}$ for $-j \leq n \leq j$ and $0$ otherwise, and let $\state{\theta' _j} = \sum_{-j}^j c_n \state{n}$. This state is approximately
localised around the value $\theta'$, i.e., is an approximate eigenstate of the self-adjoint angle $\Theta_\R$ with eigenvalue $\theta'$, with the quality of approximation becoming increasingly good as $j$ becomes large. Indeed,
the sequence $(\state{\theta'_j})$ is an approximate eigenstate of $\Theta_{\R}$, in the sense that $\bigl\langle{\theta'_j}\big|{\Theta_{\R} \theta'_j}\bigr\rangle = \theta'$ and $\var (\Theta_\R)_{\theta' _j} \to 0$ as $j \to \infty$. (The sequence also describes a an approximately localised state in terms of concentration of probabilities, as described for the similar sequence $(\phi_n)$ of Example \ref{qinf}.)
Using the form of $U_1$ from section \ref{am}, we find
\begin{equation}\label{eq:err1}
\Psi_f \equiv U_1 \state{0} \state{\theta' _j} = e^{-i\frac{\theta}{2}}\frac{1}{\sqrt{2}}\left(\state{0}+ e^{i(\theta + \theta')} \state{1} \right)\state{\theta^{\prime} _j} + \state{\text{error}}_j 
\end{equation}
where the state 
\begin{equation}
\state{\text{error}}_j = e^{i\frac{\theta}{2}}e^{i \theta'}\frac{1}{\sqrt{2 (2j+1)}}\left( - e^{-ij\theta'}\state{1} \state{-j} +e^{i(j+1)\theta'}\state{1} \state{j+1}\right).
\end{equation}
 Clearly $\bigl\|{\state{\text{error}}_j}\bigr\|{^2}=(2j+1)^{-1}$ and therefore $\bigl\|{\state{\text{error}}_j}\bigr\| \to 0$ as $j \to \infty$. As this error term becomes arbitrarily small, $\Psi_f$ is arbitrarily norm--close (modulo an overall phase) to the product state $\bigl(\state{0}+ e^{i(\theta + \theta')} \state{1}\state{\theta' _j} \bigr)/\sqrt2$. 
 
Let $R \in \lht$ be invariant and self-adjoint. By continuity, $\lim_{j \to \infty}\no{R \state{\text{error}}_j} = 0$. Suppose we fix $\theta^{\prime}=0$ and $R = \Y(A)$ for some self-adjoint $A$ which does not commute with $N_{\Sy}$. Then, 

\begin{equation}
\lim_{j \to \infty} \ip{\Psi_f }{\Y(A) \Psi_f} = \ip{\varphi}{ A \varphi},
\end{equation}
with $\varphi:=\bigl(\state{0}+ e^{i\theta} \state{1}\state{\theta' _j} \bigr)/\sqrt2$.
Therefore, in this model, the expectation of the absolute quantity $A$ in the absolutely coherent state $\varphi$ approximates arbitrarily well the relational $\Y(A)$ in the invariant (absolutely incoherent) state $\tau_{\T *}(P[\Psi_f])$.

We may also consider the action of $U_2$, leading to an overall evolution $U$:

\begin{equation}\label{eq:err2}
U \ket{0}\ket{\theta ^{\prime} _j} = \left( \cos \left( \frac{\theta}{2} \right) \ket{0} -e^{i \theta ^{\prime}}i \sin \left( \frac{\theta}{2} \right) \ket{1}\right)  \ket{\theta_j ^{\prime}} + \ket{{\rm error}}_j,
\end{equation}
with
 \begin{equation}
\lim_{j \to \infty} \bigl\| \ket{{\rm error}}_j \bigr\| = \lim_{j \to \infty}\bigl\| \frac{1}{\sqrt{{2}(2j+1)}} \left( e^{i(j+1)\theta^{\prime}} \ket{1} \ket{j} -e^{ij \theta^{\prime}} \ket{1} \ket{-j-1} \right) \bigr\| =0,   
\end{equation}
leading to the evolution up to a term of arbitrarily small norm of 
\begin{equation}
 U \ket{0}\ket{\theta ^{\prime} _j} \approx \left( \cos \left( \frac{\theta}{2} \right) \ket{0} -ie^{i \theta ^\prime} \sin \left( \frac{\theta}{2} \right) \ket{1}\right)  \ket{\theta_j ^{\prime}}.
 \end{equation}
Therefore, if the error term $\state{{\rm error}}_j$, which can be made to have arbitrarily small
norm by choosing $j$ large enough is ignored, the state of the system alone is 
given by the first factor in the tensor product, achieved by partial tracing over $\R$. For simplicity
we set $\theta^{\prime} = 0$. Then measurement sensitivity of the observable $\ket{0}\bra{0}$ to 
$\theta$ (which is still present after operating with $\tau_{\T *}$) seems to validate the existence and measurability of (the relative phase factor $\theta$ in) the superposition $\frac{1}{\sqrt{2}}\left(\state{0}+ e^{i\theta} \state{1} \right)$, since the latter state is given as the state of the system again by ignoring the error term in equation \eqref{eq:err1}. It looks as though coherence across $N_1$ eigenspaces has been prepared 
and confirmed. We now critically analyse this conclusion.

\subsection{Interpretation} \label{subsec:int}
 
Analysis of the post-$U_1$ and post-$U$ states in the above high reference localisation regime
highlights several key points, variants of which will reappear throughout the rest of this paper under various guises. We first recapitulate:
\begin{itemize}
\item[---] Any reasonable measure of entanglement capable of capturing this situation would show
that the state $\Psi_f$ becomes arbitrarily close to an unentangled state for suitably large $j$.

\item[---] Continuity (of $R$) dictates that the statistics of absolute $A$ in absolutely coherent 
$\varphi$ can be approximated arbitrarily well, for suitably large $j$, by $\Y(A)$ in the state $\tau_{\T *}(P[\Psi_f])$. In particular, $\theta$-dependent expectation values are present before
the limit is taken.
\end{itemize}
The limit $j \to \infty$ itself must be treated with extreme caution---the rigorous
existence of such limits must be questioned, and the meaning of physical conclusions drawn
from the limit may not be clear. The main dangers of taking the large amplitude limits in the example we have discussed
are summarised below.
\begin{itemize}
\item[---] The limit $j \to \infty$ in the state $\ket{\theta ^{\prime}_j}$ does not yield a normalisable Hilbert space vector.

\item[---] $N_2$ (and thus $N$) is not a bounded/continuous operator and therefore 
$\no{N \ket{\rm{error}}_j}$ need not vanish even as $\no{\ket{\rm{error}}_j}$ does in the large $j$ limit.

\item[---] If the error term is ignored, the dynamics no longer conserve number (this is due to the unboundedness). This is most acutely observed by noting that in \eqref{eq:evo1}, $\theta$ may take any real value. Choosing $\theta= \pi$ 
and ignoring the error state, the evolution takes the form 
$U\ket{0}\ket{\theta^{\prime}}_j = \ket{1} \ket{\theta^{\prime}}_j$. It appears as though
the state $\ket{1}$ has been manufactured from $\ket{0}$ with no energy cost.

\item[---] Ignoring the error term leads to a ``reduced" unitary 
$U_{\rm eff} = U_{\Sy}\otimes \id$ and it is clear that $[U_{\rm eff},N] \neq 0$ and 
$[U_{\Sy},N_1] \neq 0$. Therefore, no matter how small 
$\no{\ket{{\rm error}}_j}$ may become, in order to properly account for energy/$N$ conservation, 
it must not be taken to be zero.

\item[---] The partial trace ${\rm tr}_{\R}[\tau_{\T *}(P[\Psi_f])]$, for any finite $j$, 
yields an invariant/absolutely incoherent state of $\Sy$ (by Proposition \ref{prop:pti}.)
Only in the limit does absolute coherence for $\Sy$ appear.

\end{itemize}

We now discuss the large amplitude limit in more detail.

\subsubsection{Meaning of the Limit}

In analysing the physical interpretation of the high amplitude limit, we will be guided
by two principles, referred to by Landsman \cite{LandBohr} as {\it Earman's principle} \cite{ear1} and 
{\it Butterfield's principle} \cite{butt1}. Earman's principle states that 
\begin{quote}
While idealisations are useful and, perhaps, even essential to progress in
physics, a sound principle of interpretation would seem to be that no effect can
be counted as a genuine physical effect if it disappears when the idealisations
are removed. (\cite[p.~191]{ear1}).
\end{quote}
Butterfield's principle then addresses the question of idealisations given by infinite limits, and describes in more detail the type of behaviour that must exist prior 
to a limit being taken:
\begin{quote}
There is a weaker, yet still vivid, novel and robust behaviour that occurs
before we get to the limit, i.e. for finite $N$. And it is this weaker behaviour
which is physically real. (\cite[p.~1065]{butt1}).
\end{quote}
Here, $N$ refers to particle number, but this principle is readily adapted to our situation. Taking this all into account, the following appears to be a consistent interpretation. 

The overall ($\Sy + \R$) dynamics are number-conserving,
and the observables which may be measured are invariant under phase shifts generated by $N$ (hence, commute with $N$). 
There is a reduced description, applicable to $\Sy$ on its own in which, in direct analogy to 
the discussion of high localisation in the kinematical case, absolute quantities, absolute coherence, and non-$N$-conserving dynamical maps approximately capture the observed statistics.

This reduced description is suitable in its convenience and usefulness in certain situations,
and provides an adequate tool for computing, to arbitrary approximation, empirically verifiable measurement statistics. For instance, the descriptions afforded by $A$ and $\Y(A)$ may be observed to be arbitrarily close given arbitrarily high reference phase localisation, with the limit
then featuring as an idealisation in which $A$ and $\Y(A)$ (or more correctly, $(\Gamma_{\phi} \circ \Y)(A)$) may be taken to be equal.

However, in addressing fundamental issues, the use of the idealisation (high localisation limit) betrays the essence of the phenomena under investigation. Guided by Earman's and Butterfield's principles, we may therefore discard as being artefacts of the idealisation those 
phenomena present in the limit but which disappear prior to the high localisation limit actually being taken. The attribution to $\Sy$ of
a state which is a superposition of eigenstates (and which is physically different from its corresponding mixture) of different $N_{\Sy}$ eigenvalue is such an example: taking the partial trace (over $\R$, with $\theta ^{\prime} = 0$) (in \eqref{eq:err1}) with the error term included (finite $j$) yields the state 
$\rho_{\Sy} = \frac{1}{2}(\ket{0}\bra{0}+ \ket{1}\bra{1})$, whereas ignoring the error term (infinite $j$) we find the state ${\rm tr}_{\R}\left[{P_{\Psi_f}}\right] = P_{\frac{1}{\sqrt{2}} (\state{0}+e^{i\theta } \ket{1})}$, i.e., the projection onto the vector unit vector 
$\varphi_{\Sy}=\frac{1}{\sqrt{2}} (\state{0}+e^{i\theta } \ket{1})$.\footnote{We may again analyse the states after the second stage of evolution. Writing $U=U_2U_1$ and considering the situation post-$U_2$,
ignoring the error term and partial tracing over $\R$ yields the state 
$\varphi ^{\prime}_{\Sy} = \cos \left( \frac{\theta}{2} \right) \ket{0} -i \sin \left( \frac{\theta}{2} \right) \ket{1}$, and therefore, in contrast to the post-$U_1$ situation, the observable $P_0=\ket{0}\bra{0} \otimes \id$
gives probabilities 
\begin{equation}
\ip{\varphi ^{\prime}_{\Sy} \otimes \theta^{\prime}_j}{P_0 \otimes \id \varphi ^{\prime}_{\Sy} \otimes\theta^{\prime}_j} = \ip{\varphi ^{\prime}_{\Sy}}{P_0\varphi ^{\prime}_{\Sy}}= \cos ^2\left( \frac{\theta}{2} \right)
\end{equation}
which are dependent on $\theta$.  There is, however, no reason to believe that such sensitivity to $\theta$ here
entails anything about observable relative phase factors between states of different number; we have already argued that $\theta$, properly interpreted, pertains to system and reference combined,
and only takes on the appearance of a relative phase factor between number states when the error
term is not taken into account---an illegitimate manoeuvre as far as the conservation law is concerned. We also note that the effective $U_2$ does not commute with $N_1$.
}

Another is the violation of energy conservation.  At any finite $j$, energy is manifestly conserved, whereas in the limit, with the error term ignored, energy conservation is violated. These two instances must therefore be viewed as pertaining to not physically real effects in the sense of Earman. The physically real effects are the statistics arising from the measurement of invariant quantities. Approximating these statistics in a convenient manner by non-invariant
quantities is legitimate, but attributing the measurement statistics to such quantities {\it as observables} is not. The measurement statistics containing $\theta$-dependent terms, close in approximation to the absolutely coherent superposition, are physically real, but the state description of $\Sy$ as absolutely coherent is not. 

Working with the idealised limit is legitimate when it comes to computing certain expectation
values which may arise in experiments. For example, using $A$ rather than $\Y(A)$ is unproblematic, provided the reference frame is prepared in a highly localised state. However, given the nature of our enterprise, that is, to understand the fundamental role played by symmetry  upon the definability and measurability of quantum mechanical quantities, it is illegitimate to move to an idealisation in which the symmetry is no longer manifest, {\it a fortiori} when the symmetry is present at every finite value of $j$ prior to the limit being taken. 

In other words, since we are interested in symmetry, we should not have recourse to a theoretical description in which, even though valid insofar as certain calculations are concerned, the symmetry 
in question is no longer present. Thinking of the description of a ball bouncing against a wall (cf. \cite{brs}), there is no problem, as far as the modelling of the ball is concerned, in taking the wall to be of infinite mass. But if one is performing an investigation of the limitations on dynamics imposed by momentum conservation, then taking the large mass limit of the wall---the limit in which momentum conservation is violated---cannot be viewed as fundamentally valid and completely obscures the issue at hand, namely the role played by symmetry and conservation.

\if It is crucial for the application of our framework that high phase localisation in the initial state of $\Sy + \R$ is precisely what is needed to make the error term in the final state arbitrarily small. When this happens, we can approximately separate system from reference, wherein we may apply all of the machinery of relativisation and restriction. The ``coincidence" that high localisation results in both approximate separability of the final states and the good approximation of relative quantities by absolute ones is worthy of a thorough separate investigation.
\fi

We now address controversies surrounding superselection rules and the reality of optical coherence, by critically analysing a number of models in the literature aimed at, in essence, obviating superselection rules. We will observe the use of dynamics and limits very similar to those
discussed above, with identical interpretation.

\if
Computing the partial trace over the reference system after the action of $U_1$, whilst ignoring the error term and setting $\theta ^{\prime} = 0$
yields ${\rm tr}_{\R}\left[{P_{\Psi_f}}\right] = P_{\frac{1}{\sqrt{2}} (\state{0}+e^{i\theta } \ket{1})}$, i.e., the projection onto the vector unit vector 
$\varphi_{\Sy}=\frac{1}{\sqrt{2}} (\state{0}+e^{i\theta } \ket{1})$.
It therefore looks \emph{as if} coherence has been created between states of different number at the level of $\Sy$, with relative phase factor $e^{i\theta}$. However, taking the same partial trace (over $\R$) with the error term included yields the state 
$\rho_{\Sy} = \frac{1}{2}(\ket{0}\bra{0}+ \ket{1}\bra{1})$. Therefore, the relative phase factor
remains a quantity relating $\Sy$ to $\R$. Also, no observable of $\Sy$ is sensitive to 
the relative phase factor at the level of $\Sy$ even if the error term is ignored.
\fi

\section{Controversies}\label{sec:cont}
The final part of this paper addresses a number of controversies which have appeared
in the literature over the last 65 years. The first relates to the fundamental status of superselection rules and the role played by reference frames there, and the second, appearing much later but strongly connected to the superselection rule debate, the question of the reality of optical coherence of laser beams.

We critique two opposing standpoints on the meaning and validity of superselection rules. Wick, Wightman and Wigner's (WWW's) seminal 1952 paper \cite{www} was met with objection from Aharonov and Susskind (AS) 15 years later \cite{as}, which was then obliquely criticised again by WWW. Subsequent efforts have been devoted on the one hand to rigorous work on superselection rules
in quantum field theory (see, e.g., \cite{Haag}), whilst on the other towards more practical questions on the role of superselection in information and communication theoretic tasks (e.g., \cite{brs, pres}).

After briefly introducing Wick, Wightman and Wigner's original argument, we focus on Aharonov and Susskind's contribution, highlighting points of agreement and disagreement between our perspective and theirs. For instance, the meaning of coherence/superpositions as requiring 
a relational understanding \cite{as} we view as ground-breaking, and this point of view has inspired much of the work in this paper. However, we do not support their conclusion (e.g., in the abstract of \cite{as}) that ``contrary to a widespread belief, interference may be possible between states with different charges"; nor do we agree that this conclusion follows from their argument. The paper suffers from mathematical flaws and a lack of conceptual clarity; what is at stake is nothing more than the appearance of measurability of absolute quantities/coherence in the presence of symmetry, and therefore the explicitly relational framework presented is well-suited to bringing a consistent and clear explanation of the issue of whether superselection rule ``forbidden" states
can be superposed to give a physically different state from its corresponding mixture.

We also critique more recent contributions \cite{brs, dbrs} along similar lines, focussing on the latter. The former \cite{brs} suffers from serious mathematical defects, some of which we have already pointed out and some of which are irreparable, which severely limit the conclusions that can be drawn from
the work. The scope of \cite{brs} is also catered heavily towards the role of reference frames in information-theoretic tasks and agent-based scenarios, e.g., entanglement theory, quantum key distribution, communication tasks, all when the given agents have no knowledge of each other's reference frame. The ensuing practical limitations gradually morph through the paper into fundamental ones, with far reaching conclusions that we contend are not warranted. We again give points of agreement (e.g., that ``all observable quantities ought to be relational") and disagreement (``superselection rules cannot provide any fundamental restrictions on quantum theory"), and again clear up dubious arguments by consistently applying
the principle that observable quantities are invariant. This also applies to \cite{dbrs}, which shares many mathematical problems with \cite{brs}. We find the language vague and occasionally conceptually unclear, and we will critique this work in detail, drawing upon ideas thus far presented.

\subsection{Brief Overview}

The notion of a superselection rule
was introduced by Wick, Wightman and Wigner \cite{www}, who proposed that superpositions
of states of bosons and fermions should be considered as equivalent to the associated mixture (i.e., that relative phase factors in superpositions of bosons and fermions are unobservable in principle), 
and a similar position was advocated for states of differing electric charge. Aharonov and Susskind  \cite{as} disagreed with the latter claim and offered a concrete experimental arrangement, very similar those we have considered in this paper (for the express purpose of critique), to demonstrate the possibility of preparing and 
observing coherent superpositions of states of different electric charge, via a formal analogy to the case of angular momentum. WWW then replied \cite{www70} with a theorem demonstrating that coherence is required 
in the initial state of one system in order to observe it in another, pointing to a circularity
in Aharonov and Susskind's argument and similar to the objection raised here in subsection \ref{subsec:sapo}.

Subsequently the issue of the ``reality" of quantum optical coherence was raised by M{\o}lmer (\cite{molmer}),
who suggested that if the gain medium of the laser is properly accounted for, the actual laser field is described by a mixture of number states, and that therefore the coherence
is merely ``convenient fiction".

Bartlett, Spekkens and Rudolph \cite{brs} (BRS), also in collaboration with Dowling
\cite{dbrs} (DBRS), have shed light on aspects of the superselection rule debate, particularly in clarifying the position of Aharonov and Susskind \cite{as}, and on the ``optical coherence controversy" \cite{dia}, highlighting the relative nature of states (and also, therefore, of coherence) and the accompanying role of reference frames.

We now present the form that these controversies take from the perspective of the relational formalism presented here. We believe that the framework we have developed for dealing with relative quantities
clarifies the seemingly opposing viewpoints of AS and WWW, and in a certain sense unifies them.
We will see that the attempts to overcome or ``lift" superselection rules (as they arise through the lack of a reference frame - see \cite{brs}) correspond to model considerations that take the same
form as the dynamical models already considered (many of which are modelled on \cite{as} and \cite{dbrs}). The framework afforded by observables-as-invariants allows for a circumvention of the ``relative" and ``global" decompositions 
of the system-apparatus Hilbert space described in \cite{brs, dbrs} (see also \cite{ldlthesis}) which have mathematical flaws, and allows for a direct assessment of the status of claims to, in essence, obviate superselection rules. 

\subsection{The Exchange between Aharonov-Susskind and Wick-Wightman-Wigner}

\subsubsection{Wick, Wightman, Wigner: The First Superselection Rule}

In 1952, Wick, Wightman and Wigner \cite{www} made a simple argument to demonstrate the existence of a dichotomy
between the assumption that all self-adjoint operators represent observables on one hand (a working assumption since von Neumann's book \cite[p.~313]{vonN}), 
and relativistic invariance on the other. Since double time reversal, 
$T^2:\his \to {\his}$ (with $\his \equiv \his{_b} \oplus \his{_f}$, the decomposition into bosonic and fermionic subspaces defining the projections $P_b$ and $P_f$ respectively), they argue, cannot be observed, and since $T^2$ has the effect of leaving bosonic
states invariant and introducing a minus sign on fermionic states:
\begin{equation}
\Psi_+ \equiv \frac{1}{\sqrt{2}}\left(\varphi_b + \varphi_f \right) 
\overset{T^2}\longrightarrow \left( \varphi_{b} - \varphi_f \right) \equiv \Psi_- ,
\end{equation}
it must be that any observable leaves the bosonic and fermionic sectors invariant, with the sign difference then unobservable. This follows
since for any self-adjoint $A$, for the consequences of a double time reversal to be unobservable,
it must be that $\ip{\Psi_+}{A \Psi_+}=\ip{\Psi_-}{A \Psi_-}$, from which it follows that 
any observable $A$ must commute with $P_b$
and $P_f$ and thus any observable $W$ with $P_b$ and $P_f$
as spectral projections. $W$ is then a \emph{superselection observable}. 

We observe that the stipulation that observables of $\Sy$ commute
with $W$ leads to the equivalence of states $\rho$ and $\tau_{\Sy *} (\rho)$, with
$\tau_{\Sy *}(\rho) := P_b\rho P_b + P_f \rho P_f$ in this case. WWW also conjectured (subsequently proven in quantum field theory by Strocchi and Wightman \cite{strowi}) that the relative phase factors
in superpositions of states of different electric charge have the same status, namely
cannot be determined by experiment, even in principle, and therefore that the states
$\rho$ and $\sum P_n \rho P_n \equiv \tau_{\Sy *}(\rho)$ are equivalent, with the sum running over all possible 
values of electric charge. Any observable must commute with charge, and must thus be invariant under shifts in phase/angle conjugate to charge.

The stipulation of such a superselection rule is formally identical to the limitation imposed by 
the a priori assumption that observables are invariant under symmetry (phase shifts in the charge case). Therefore, it must be understood whether the statement of a (say, electric charge) superselection rule amounts to anything more than the restriction thus far discussed. First, we discuss the reply of AS to
the WWW paper, along with another model (due to Dowling et al., \cite{dbrs}) purporting to 
prepare and measure (absolutely) coherent superpositions of atoms and molecules (against baryon number superselection).

As has been shown, the requirement that observables be phase shift invariant allows 
for the relative phase of system and reference to be observed, with the absolute phase 
representing the relative phase given high reference phase localisation. The reply
by Aharonov and Susskind to the WWW paper advocating the possibility of measuring relative phase factors in charge superpositions paper makes explicit use of such phase references. We now review their reply and hope to clarify their position by employing the methods and language introduced in this paper.

\subsubsection{Reply of Aharonov and Susskind: Proton-Neutron Superpositions}

In favour of observability of relative phase factors between superselection rule ``forbidden" superpositions, we sketch two thought experiments; the first is due to Aharonov and Susskind \cite{as}, conceived so as to demonstrate a realistic scenario in which coherent superpositions of states of different electric charge can be prepared and measured. The second, due to Dowling {\it et al.} \cite{dbrs}, is similar in spirit, and purports to prove that atoms and (diatomic) molecules
can be (absolutely) coherently superposed.\footnote{At least, it is claimed at one point, that ``The experiment we present aims to exhibit quantum coherence between states corresponding
to a single atom and a diatomic molecule..."}

It will be shown that in both of these examples there is an implicit relativisation of the operators to be measured, thereby constructing an invariant operator (observable) not unlike the ones we have discussed. Furthermore, in both cases a crucial role is played by the limit of high localisation of a reference state (in both cases provided by a coherent state)
with respect to a covariant phase-like operator conjugate to the symmetry generator, in direct  analogy
to the models and general results that have been presented. To our knowledge, it has not been explicitly stated anywhere that such localisation is the key property.

The Hilbert space $\his \otimes \hi _{\R _1} \otimes \hi_{\R_2}$ of Aharonov and Susskind's thought experiment is to correspond to a proton-neutron system $\Sy$ and two cavities ($\R_1$, $\R_2$) capable of containing any integer number of negatively charged mesons.
Aharonov and Susskind imagine preparing $\R_1$ and $\R_2$ in charge-coherent states 
(we include normalisation factors that were omitted in the original treatment)
\begin{equation}\
|q_1, \theta \rangle = e^{-q_1/2}\sum_n \frac{q_1 ^{n/2}}{\sqrt{n!}}\exp{(i n \theta)} |n\rangle
 \equiv \sum c_n (\theta) \ket{n}
\end{equation}
and 
\begin{equation}
|q_2, \theta^{\prime} \rangle = e^{-q_2/2}\sum_n \frac{q_2 ^{n/2}}{\sqrt{n!}}\exp{(i n  \theta^{\prime})}|n \rangle
 \equiv  \sum c'_n(\theta ^{\prime}) \ket{n}
\end{equation}
 respectively, where $\ket{n}$ denotes a charge eigenstate corresponding to $n$ negatively charged mesons. The parameters $q_1$ and $q_2$  represent the respective mean charge values in the coherent states, corresponding to the observables $Q_1, \ Q_2$, 
 %$Q_2=\sum_{n^{\prime}}n^{\prime} \ket{n^{\prime}} \bra{n^{\prime}}$ 
which are structurally identical to the number operators we have encountered thus far, except that $n$ takes (only) non-positive values.

The initial state of the nucleon is a proton $\ket{P}$, and we will use $\ket{N}$ to represent a neutron.  $\ket{P}$ and $\ket{N}$ are thus  eigenstates of the charge observable $Q_{\Sy}$  of $\Sy$
with eigenvalues $1$, and $0$, respectively. The dynamics, which take place in two stages, are governed by a Jaynes-Cummings-type Hamiltonian (which commutes with charge) $H=g(t)(\sigma ^+ a^- + \sigma ^- a^+)$ where
$\sigma^{+} = |N \rangle \langle P| $, $\sigma^{-} = |P \rangle \langle N|$ (sometimes referred to as the \emph{isospin} operators), and $a^{\pm}$ are meson creation and annihilation operators which act on the states of the cavities. The function $g(t)$ describes the interaction strength and fixes the duration of the interaction (given physically by the passage time of the nucleon travelling through the cavity). Explicitly, the dynamics are governed by $H_1 = g_1(t)(\sigma ^+ \otimes a^- \otimes \id + \sigma ^-  \otimes a^+ \otimes \id )$ with $g_1(t)=g\chi_{[0,T]}(t)$, followed by $H_2 = g_2(t)(\sigma ^+ \otimes \id \otimes a^- + \sigma ^-  \otimes \id \otimes a^+)$ with $g_2(t)=g\chi_{[T,2T]}(t)$.
The unitary $U_1$ effects the following transitions on charge eigenstates (omitting the second cavity):

\begin{align}
|N\rangle |n\rangle &\longrightarrow i \sin{\left(Tg\sqrt{n}\right)}|P\rangle |n - 1 \rangle +   \cos{\left(Tg\sqrt{n}\right)}|N\rangle |n\rangle,\\
%\end{equation}
%\begin{equation}
\ket{P}\ket{n} &\longrightarrow  \cos{\left(g\sqrt{n+1}\right)} |P \rangle |n\rangle + i \sin{\left(Tg\sqrt{n+1}\right)}|N \rangle |n+1 \rangle.
\end{align}
Referring back to equation \eqref{eq:npd}, these are of an almost identical form. Analogous to what we saw there, 
we find here that we may measure the observable $\ket{P}\bra{P}\otimes \id$ in the state $U_1 \ket{P}\ket{n}$ to find the proton probability $\cos^2{\left(Tg\sqrt{n+1}\right)}$. 

Starting with the initial state $\Psi_0=\ket{P}\ket{q_1,\theta}\ket{q_2,\theta'}$, the state after the first cavity is
% (and still omitting the state of the second cavity), with $\mathcal{R}_1$ prepared in $|q_1, \theta \rangle$, is
\begin{equation} \label{eqas}
U_1 \Psi_0 =\sum_n c_n\left[ \cos{\left(Tg\sqrt{n+1}\right)}|P\rangle |n\rangle  +  i \sin{\left(Tg\sqrt{n+1}\right)}|N\rangle |n+1 \rangle \right]\,\ket{q_2,\theta'}.
\end{equation}
One must then consider the limit of large $q_1$, which yields 
\begin{equation}
U_1 \Psi_0 \approx \left(i e^{i \theta}\sin{\left(gT\sqrt{q_1}\right)}\ket{N} + \cos{\left(gT\sqrt{q_1}\right)}\ket{P} \right)\, \ket{q_1, \theta} \, \ket{q_2, \theta ^{\prime}}.
\end{equation}

The nucleon is then approximately ``separated" from the cavities; it enters the second cavity and exits, this time in the large $q_2$ limit, as 
\begin{align}
& \Big[  \left( \cos{(gT\sqrt{q_1})}\cos{(gT\sqrt{q_2}}- e^{i(\theta - \theta^{\prime})} 
\sin{(gT\sqrt{q_1})}\sin{(gT\sqrt{q_2})}  \right) \ket{P}\\ 
& + i\left( e^{i \theta^{\prime}} \cos{(gT\sqrt{q_1})}\sin{(gT\sqrt{q_2})} + 
 e^{i \theta}\sin{(gT\sqrt{q_1})} \cos{(gT\sqrt{q_2})} \right ) \ket{N} \Big] \, \ket{q_1, \theta} \, 
\ket{q_2, \theta ^{\prime}}.\nonumber
\end{align}
As observed, the proton probability (i.e., $\tr{\ket{0}\bra{0}U \Psi_0}$) now depends on $\theta - \theta^{\prime}$, the \emph{relative phase} between $\R_1$ and $\R_2$. 

Therefore, as argued by Aharonov and Susskind, the nucleon is in a coherent superposition of proton and neutron
with relative phase $(\theta - \theta ^{\prime})$ ``\emph{when referred to the frame provided b}y $\R _2$". The idea is that the absolutely coherent superposition is created by the first cavity
(cf. \eqref{eqas}) and then confirmed by measuring an invariant quantity of $\Sy$ after passage through the second cavity. However, the model presented by AS suffers from the same kind of difficulties as discussed in subsection \ref{subsec:int}.

From the perspective developed in the present paper, we would instead say that in the limit of high reference system localisation,
we are faced with the \emph{appearance} of measuring an absolute quantity (namely a phase-like quantity sensitive to
relative phase in nucleon superpositions) in an absolutely coherent state, but this is appropriately understood as pertaining to a relative phase-like observable
between the nucleon and the cavities (and a mutually coherent state). The analogy to the angular momentum/angle case, as employed by Aharonov and Susskind to compel one to believe in the observability of proton-neutron superpositions, is indeed a good one. However, we argue for the opposite conclusion: it is \emph{not} that since absolute coherence of states of different angular momentum is observable, therefore so is the relative phase factor in superpositions of charge states, but rather, absolute coherence for angular momentum is not possible, and nor is it in the charge case.

Indeed, it is stated quite explicitly in \cite{as}, that ``the coherence of states of different angular momentum is measured relative to a frame of reference". Thus coherence itself is viewed as a relative feature; from this point of view, there is no absolute coherence of states of angular
momentum. Once again, we see the importance of the mutual coherence concept. 

The Aharonov-Susskind paper was understood by many as proving the possibility of coherent superpositions of states
of different electric charge---a situation conjectured impossible by Wick, Wightman and Wigner (WWW) \cite{www} 15 years previously. Three years after Aharonov and Susskind's contribution, WWW demonstrated the \emph{necessity} of using
superpositions of states of different charge (i.e., the absolutely coherent  cavity states) in order to demonstrate their existence (i.e., for the nucleon); see subsection \ref{subsec:WWW}. Aharonov and Susskind were alert to such a circularity and tried to avoid any possible objection in the final part of their paper, where they attempted to construct a charge eigenstate out of the two charge coherent states, providing a manifestly (phase-shift) invariant state.
This takes the form of an integral (\cite{as}, final page)
\begin{equation}\label{ASfinalpage}
\ket{i} = \int \ket{q \theta _1} \ket{q ^{\prime} \theta _2} \delta \left(\theta _1 - \theta_2 -(\theta ^{\prime} - \theta)\right)e^{-i(q + q^{\prime}) \theta _1}d \theta _1 d \theta_2,
\end{equation}
where the initial state $\ket{i}$ is then % $\ket{q+q^{\prime},\theta ^{\prime} - \theta}$; 
an eigenstate of charge $q+q'$ and fairly well-defined phase $\theta ^{\prime} - \theta$.\footnote{The numbers $q$ and $q^{\prime}$ pertain to the amplitude of the coherent states $\ket{q, \theta_1}$ and $\ket{q^{\prime}, \theta_2}$ and, as such, are continuous. However, for the expression in equation \eqref{ASfinalpage} to represent an eigenstate of the total charge, $q$ and $q^{\prime}$ must be restricted to taking integer values.}
They claim that the proton probability distribution is unchanged even when the cavities are prepared in a charge eigenstate. 
The following calculation demonstrates that their proposal is flawed: if the  two-cavity system is prepared in a charge eigenstate, $\ket{i}$, under charge-conserving evolution the approximation they give can never be valid. Suppose under evolution $U$ which conserves total charge we have 
$$
\ket{P}\otimes \ket{i} \stackrel{U}{\longrightarrow}\phi_{i+1}
$$ 
where $Q^c\phi_{i+1}=(i+1) \phi_{i+1}$. Then for
arbitrary $\psi = \alpha \ket{P} + \gamma \ket{N}$, $\left \vert \alpha \right \vert ^2+\left \vert \gamma \right \vert ^2 = 1$, we note the following trivial observation: 
$$
\no{\phi_{i+1} - \psi \otimes \ket{j}} = 0\quad\text{if and only if $\gamma = 0$ and $i=j$}.
$$
%Indeed, 
% $\|{\phi_{i+1} - \psi \otimes \ket{j}}\| = 0$ if and only if $\phi_{i+1}=\psi \otimes \ket{j} = ( \alpha \ket{P} + \gamma \ket{N}) \otimes \ket{j}$. This is clearly satisfied if and only if $\gamma = 0$ and $j=i$.
 By contrast, in the example where $\alpha = \gamma = 1/ \sqrt{2}$, we have  $\nos{\phi_{i+1} - \psi \otimes \ket{i}} \geq 2-\sqrt{2}$. Thus the resulting state is a finite (norm) distance from an eigenstate, independent of the ``size'' of the reference system. This ``fix" by Aharonov and Susskind is therefore untenable, and their conclusion that interference effects may be observed 
between states of different electric charge, given the restriction of not assuming its possibility from the outset, does not follow from their argument.
 
The approximation based on high amplitude coherent states was mathematically valid and results in states close to a product state containing proton--neutron superpositions in the system Hilbert space. High amplitude coherent states, however, already exhibit absolute coherence if the 
observables are not restricted to invariants (such a constraint on observables is barely mentioned in AS's paper.) The above result demonstrates that if the coherent states are replaced with a charge eigenstate, no such approximation can occur. WWW responded to the AS paper, also implicitly criticising the error, which we now discuss.

\subsubsection{Response of Wick, Wightman, Wigner}\label{subsec:WWW}

Wick, Wightman and Wigner \cite{www70} responded to Aharonov and Susskind's challenge to the superselection
rule for charge, making three key points which we now summarise. We note that WWW's argument was not to offer a proof of charge superselection, but rather to argue that superpositions of states of different charge cannot arise from (composition of) invariant states, charge-conserving dynamics, and subsystem separation. This therefore take place 
in the Schr\"{o}dinger picture.

 We assume that charge
($Q_{\Sy}$ for $\Sy$ and $Q_{\R}$ for $\R$) may take positive and negative values, and recall that $\tau_{\T *}(\rho) = \sum_{-\infty}^{\infty}P_n \rho P_n$ (with appropriate indices for subsystems, $\Sy$ and $\R$). 
\begin{enumerate}
\item The composition $\rho_{\Sy} \otimes \rho_{\R}$ for $\rho_{\Sy}$ and $\rho_{\R}$ invariant yields a state which commutes with total charge (i.e., is invariant) and no interference of states of different charge of $\Sy + \R$ is possible.
\item The time evolution $U:\his \otimes \hir \to \his \otimes \hir$, 
which commutes with $Q=Q_{\Sy} + Q_{\R}$, gives $[Q,U\rho U^*]=0$ for any $\rho$ for which $[\rho,Q]=0$. Equivalently, with $\mathcal{U}(\cdot) = U (\cdot) U^*$, 
$\mathcal{U}\circ \tau = \tau \circ \mathcal{U}$.
\item Given $\tau_{\T *} (\rho)$, the reduced states of $\Sy$ and $\R$ commute with $Q_{\Sy}$ and $Q_{\R}$,
respectively. Equivalently, $\tau_{\Sy *}({\rm tr}_{\R}\left[\tau_{\T *} (\rho)\right]) = {\rm tr}_{\R}\left[\tau_{\T *} (\rho)\right]$.
\end{enumerate}
The three steps outlined above correspond to composition, evolution and separation, respectively.
Regarding observing absolutely coherent superpositions of states of different charge of $\Sy$, WWW
showed that it ``takes one to know one"; specifically, a coherent superposition of states 
of different charge for $\R$ are required in order to observe them at the level of $\Sy$, showing that AS's argument, as presented, is circular (in using coherent states for the cavities) or flawed (in using a charge eigenstate for the combined cavities).

Dowling, Bartlett, Spekkens and Rudolph presented, in 2006 \cite{dbrs}, an argument in favour of superpositions 
of states of different baryon number, correcting some flaws in Aharonov-Susskind's argument. We now present this model, before comparing the viewpoints of the two ``camps" (those who believe superselection can be obviated in practice, and those who don't), and discussing the wider context of superselection rules and their obviation.

\subsection{Atom-Molecule Superpositions according to Dowling \textit{et al.}}\label{dowling}

In the spirit of the 1967 contribution by Aharonov and Susskind, Dowling \textit{et al.} \cite{dbrs} attempt to model the observation of a coherent superposition of an atom and a (diatomic) molecule, as a possible demonstration of coherent superpositions of states of differing baryon number. In order to avoid the error of Aharonov and
Susskind in preparing the cavities in an eigenstate of the conserved quantity, they instead utilise
the coherent state, but acknowledge that appropriate ``sectorising'' (i.e., application of the $\tau_{\T *}$/twirling map) is necessary in order to
respect the symmetry for the composite system. 

The reference system is provided
by a Bose--Einstein condensate (BEC), coherent states of which are written $\state{\beta} = \sum_{n=0} ^{\infty} c_n \state{n}$ ($\state{n}$ representing a state of $n$ atoms) with $c_n = \exp{(-\mods{\beta}/2}) \beta ^n / \sqrt{n!}$. We write $\beta = \sqrt{m} e^{i \theta}$, and have that $\langle N \rangle_{\beta} = \mods{\beta} = m$ and $(\Delta N )_{\beta} = \sqrt{m}$, and
as $m$ becomes large, coherent states become arbitrarily highly localised in phase. Therefore the coherent
state looks increasingly like a phase ``eigenstate''.
It is also useful to note that $\tau_{\R *} (P_{\beta}) = \sum_{n=0}^{\infty} P_n \state{\beta} \dstate{\beta}P_n = \sum_{n=0} ^{\infty} \mods{c_n} \state{n} \dstate{n}$. 

Dowling et al. describe an experiment, again with a multistage unitary
along the lines of the models we have outlined, which goes as follows: The initial state is $P_{\state{A} \otimes \state{\beta}}$ ($\sim \state{A}\dstate{A} \otimes \tau_{\R*} (P_{\beta})$), where the state $\state{A}$ is to represent an atom; accordingly molecule states are written $\state{M}$ (both of these are to be understood as shorthand: $\state{A} \equiv \state{0}_M\state{1}_A$ and $\state{M} \equiv \state{1}_M\state{0}_A$). 
Defining the cavity states
\begin{equation}
| \beta _{A}^1 \rangle = \sum_{n=0}^\infty c_n\cos{\left(\frac{\pi}{4}\sqrt{\frac{n}{m}}\right)} |n\rangle  = \sum_{n=0}^{\infty} \frac{e^{-m/2}m^{n/2}}{\sqrt{n!}}e^{in \theta}\cos{\left(\frac{\pi}{4}\sqrt{\frac{n}{m}}\right)} |n \rangle
\end{equation}
and 
\begin{equation}
\state{\beta_{M} ^1} = -i \sum_{n=0}^\infty c_n \sin \left( \frac{\pi}{4} \sqrt{\frac{n}{m}} \right) \state{n-1},
\end{equation}
they give the following sequence of unitary maps (for details on the specific form of the Hamiltonians, see \cite{dbrs}):
\begin{equation}\label{d1}
\Psi ^{\prime} \equiv U_1 |A \rangle \otimes \state{\beta} = \state{A} \otimes \state{\beta _{A}^1} + \state{M} \otimes \state {\beta _{M}^1}
\end{equation}
followed by free evolution under a Hamiltonian of the form $K \state{M} \dstate{M}$ (with $K$ a constant)
\begin{equation}
\Psi ^{\prime} \to \Psi ^{\prime \prime} \equiv U_2 \Psi ^{\prime} = \state{A} \otimes \state{\beta _{A}^1} + e^{i \phi}\state{M} \otimes \state {\beta _{M}^1},
\end{equation}
where $\phi = T K$ and $T$ is the duration of free evolution. Thus $U_2$ explicitly depends on $\phi$. 
Finally,
\begin{equation}
U_3 \Psi^{\prime \prime} = \state{A} \otimes \state{\beta _{A}^3} + \state{M} \otimes \state {\beta _{M}^3},
\end{equation}
with 
\[
\state{\beta _{A}^3} = \sin\left(\frac{\phi}2\right) \state{\beta} - i \cos \left(\frac{\phi}2\right)\sum c_n \cos \sqrt{\frac{n}{m}\frac{\pi}{2}} \state{n}
\]
 and 
\[
\state{\beta _{M}^3}=- \cos \left(\frac{\phi}2\right) \sum c_n \sin\sqrt{\frac{n}{m}\frac{\pi}{2}} \state{n-1}
\]
again representing cavity states.
The purpose of $U_2U_1$ is to introduce the relative phase factor $\phi$; $U_3$ then allows a measurement of a convenient quantity (i.e. $\state{M}\dstate{M}, \state{A}\dstate{A}$) for realistic experiments, but also to
measure an invariant quantity of $\Sy$.
For the purposes of discussing relative phase factor observability it is sufficient to consider the state following the application of $U_1$ or $U_2$, along with the asymptotic behaviour outlined in \cite{dbrs}.

Since discussions pertaining to the type of convergence thus far encountered here have been somewhat informal in the existing
work, we provide a proof in the appendix that, for example, 
\begin{equation}\label{eq:lim-m}
\bigl\Vert \state{\beta _{A}^1} - \frac{1}{\sqrt{2}} \left\vert \beta \right\rangle \bigr\Vert  \to 0~ \text{as}~ m \to \infty .
\end{equation}
We may write
\begin{equation}\label{d1}
U_1 |A \rangle \otimes \state{\beta} = \left (\tfrac{1}{\sqrt{2}}\state{A}   - ie^{i \theta}\tfrac{1}{\sqrt{2}} \state{M} \right) \otimes \state {\beta}+ \state{\text{error}}_m
\end{equation}
where
\begin{equation}
 \state{\text{error}}_m = \state{A} \otimes \left(\tfrac{1}{\sqrt{2}} \state{ \beta} - \state{\beta _{A}^1}\right) + \state{M} \otimes \left(i e^{i \theta}\tfrac{1}{\sqrt{2}} \state{\beta} + \state{\beta _{M}^1}\right)
\end{equation}
with $\theta \equiv \arg{\beta}$. It is clear that $\left \Vert \state{\text{error}}_m \right \Vert \to 0$ as $m\to\infty$ if and only if 
\[
\left \Vert\tfrac{1}{\sqrt{2}} \state{ \beta} - \state{\beta _{A}^1}\right \Vert  \to 0
\quad \text{and}\quad
\left \Vert i e^{i \theta}\tfrac{1}{\sqrt{2}} \state{\beta} + \state{\beta _{M}^1} \right \Vert  \to 0
\]
 individually, using the fact that $\ip{A}{M}=0$ and $\bigl \Vert \state{A} \bigr \Vert = 
\bigl \Vert \state{M} \bigr \Vert = 1$. 

However, one can also consider the post $U_3$ state;
again, asymptotically and ignoring the error term we have (as given in \cite{dbrs})
\begin{equation}\label{d2}
U_3U_2U_1 \state{A} \otimes \state{\beta} \cong \left[ \sin \left( \frac{\phi}{2} \right)\state{A}  - e^{i \theta} \cos \left(\frac{\phi}{2}\right) \state{M} \right] \otimes \state{\beta}.
\end{equation}

The interpretation given in \cite{dbrs} is that since one can apply $\tau_{\T *}$ at every stage (under the approximation) and still
achieve atom/molecule probabilities of $\sin ^2 (\phi/2)$ and $\cos ^2 (\phi/2)$ respectively,
a coherent superposition of an atom and a molecule has been observed.\footnote{E.g., ``...it is possible, in principle, to perform a Ramsey-type interference experiment
to exhibit a coherent superposition of a single atom and a diatomic molecule". We note, however, that DBRS do acknowledge that in states like that appearing in Eq. \eqref{d1} with the error term taken to be zero, one would be inclined to again ``twirl" ($\tau_{\Sy *}$) the resulting $\Sy$-state $\frac{1}{\sqrt{2}}\left (\state{A}   - ie^{i \theta} \state{M} \right)$, going ``full-circle" and returning from whence we came: to an equivalence between coherent and incoherent descriptions. From this ensues a discussion of alternative tensor product decompositions of the system-reference Hilbert space---relative and global---as a way of interpreting the new state. This procedure cannot work in general (or even in the case they give), and is unnecessary anyway.} 

In view of the work we have presented, along with the argument of WWW \cite{www70}, we do not agree with this view. Given the problems with taking the limit (violation of the conservation law, non-existence of limit for states, unphysical nature of such a limit), we believe that the limit should not be taken in considering the fundamental status of these experiments. As such, the analysis of WWW holds, and absolute coherence cannot be observed for atom-molecule ``superpositions". What is instead observed is mutual coherence, and the observability of the interference effects as given
by (for example) $\sin ^2 (\phi/2)$ only demonstrates the feasibility of measuring relative phase factors
within a sector, and the phase $\phi /2$ should be viewed as precisely this. The large reference system,
which provides high reference phase localisation, again provides the appearance of a relative phase factor
at the level of the system only. 

Therefore, we return once more to the main point: absolute quantities are not measurable, but represent measurable, relative quantities, with good approximation coming with good localisation (suitable, relationally, interpreted). We conclude this section with a final analysis of the two views concerning the observability of ``forbidden" superpositions.

\subsubsection{Analysis of the Opposing Standpoints}

Following, for example, the prescription given in section \ref{am}, it is possible to follow WWW's three-step
sequence to the letter:
\begin{enumerate}
\item Compose: $\ket{\Psi_0}\bra{\Psi_0} = \ket{0}\bra{0}\otimes \sum |c_n|^2 \ket{n}\bra{n}$;
\item Evolve: $\ket{\Psi_0}\bra{\Psi_0}$ evolves according to the charge-conserving unitary defined in \eqref{eq:un1} and \eqref{eq:un2} yielding $\tau_{\T*}(\ket{\Psi_f}\bra{\Psi_f}) =\sum_n\mods{c_n}\,P_{\frac1{\sqrt2}\left(\state{0,n}+e^{i\theta}\state{1,n-1}\right)}$ (eqn.\eqref{eq:70});
\item Separate: ${\rm tr}_{\R}[\tau(\ket{\Psi_f}\bra{\Psi_f})] = 1/2 \id$ (on the two-dimensional subspace spanned by $\{\ket{0}, \ket{1}$).
\end{enumerate}
On this basis, it is clear that there can never be interference observed
between $\ket{0}$ and $\ket{1}$ under the processes outlined by WWW.

On the other hand, as described in subsection \ref{subsec:hpld}, we may prepare
the state $\ket{0}\bra{0}\otimes \tau_{\R *}P[\Psi_0]$, with $\Psi_0 = \sum_{n}c_n \ket{n}$, choosing $c_n = \frac{e^{in\theta ^{\prime}}}{\sqrt{2j+1}}$ for $|n|\leq j$ and $0$ otherwise.
Then, for finite $j$, there exists invariant $A \in \lht$ so that $\tr{A \tau_{\T *}P[\Psi_f]}$ depends on $\theta$. This $\Psi_f$, as $j$ becomes arbitrarily large, becomes arbitrarily close to the product state
\begin{equation}\label{eq:finsr}
\frac{1}{\sqrt{2}}\bigl(\state{0}+ e^{i(\theta + \theta')} \state{1}\bigr) \state{\theta' _j} .
\end{equation}
Then employing relation \eqref{eq:spi}, the statistics of an invariant quantity in $\tau_{\T*}(P_{\Psi_f})$ are identical to the statistics in $\Psi_f$. One finds that, for example,
$\ip{\Psi_f}{(\Theta - \Theta_{\R}) \Psi_f}$
gives rise to statistics which are sensitive to the relative phase $e^{i(\theta + \theta^{\prime})}$. With $\theta ^{\prime} = 0$, one finds that 
$\ip{\Psi_f}{(\Theta - \Theta_{\R}) \Psi_f} =\ip{\varphi_{\ell}}{\Theta \varphi _{\ell}}$
with $\varphi_{\ell}:=\bigl(\ket{0} + e^{i \theta}\ket{1}\bigr)/\sqrt{2}$. Thus it appears as though one has measured an absolute observable in a superposition state.

In order to attempt to avoid the appearance of measuring an absolute quantity, the second unitary (e.g., that introduced in \eqref{eq:npd}) allows, on the system level
and ``once the limit has been taken", for something like this to occur:
\begin{equation}
\frac{1}{\sqrt{2}}\bigl(\state{0}+ e^{i\theta}\ket{1}\bigl) \mapsto \cos{\left(\frac{\theta}{2}\right)}\ket{0}-i\sin{\left(\frac{\theta}{2}\right)}\ket{1}.
\end{equation}
Then the observable (e.g.) $\ket{0}\bra{0}$ can be measured and a $\theta$-dependent probability
distribution achieved.

The upshot is that both WWW and AS/DBRS make arguments which bear out (once the errors have been remedied). The former show, quite correctly, that strictly speaking, only (absolute) coherence begets (absolute) coherence, and
if you don't have it, you'll never get it, as one would expect. The latter ``camp", in their attempt to show the positive possibility of creating absolute coherence from states without it,
actually show the possibility of well-approximating absolute quantities and states with absolute coherence by relative quantities and states without absolute coherence. The crucial ingredients for such an approximation are mutual coherence and high localisation.

\subsection{Further analysis: Superselection Reconsidered}

Bartlett, Spekkens and Rudolph \cite{brs} argue a superselection rule may be ``lifted", that is (we think), the following holds: a superselection rule applies to some system $\Sy$. A reference frame $\R$ may be included, the superselection rule applied to $\Sy + \R$, whose statistics then exactly give those of 
$\Sy$ as if there weren't a superselection rule for $\Sy$. This is taken as proof that ``superselection rules cannot provide any fundamental restrictions on quantum theory"  since, they argue, a SSR is simply a lack of an appropriate frame, which can always be introduced.\footnote{Such a view appears to be favoured also by Lubkin, 1970 \cite{lub1}.} 

We do not endorse this view. First, the analysis preceding the above quote in \cite{brs} is mathematically flawed. Second, the reason given for the (e.g.,photon number) superselection rule
is a practical one: agents may not share a classical phase reference. Finally, as we have noted, if the analysis is done rigorously, one sees that the ``superselection-violating statistics" of $\Sy$ can be achieved only when there is a localised/absolutely coherent state for $\R$, which
just shifts the problem of absolute coherence from $\Sy$ to $\R$. Only through the mutual coherence concept can this circularity be avoided. The question, then, is whether mutually coherent states exist in all given situations, i.e., for all phase-like quantities.

In more concrete terms, we have seen that, through the $\Y$ construction, absolute quantities and absolutely coherent states can arbitrarily well approximate the statistics of a relational quantity in an invariant state, contingent on a highly localised reference state. We view this statistical equivalence not as ``lifting" in order to show that it can be violated for $\Sy$, but rather as an expression of the fact that the ordinary 
usage of quantum mechanics, with its absolute quantities and absolutely coherent states, captures to a very good degree the true, physical situation represented by invariant quantities of system plus reference, in line with fundamental symmetry requirements.

The situation of ``lacking a phase reference", in our conception, pertains not to the lack of shared knowledge of physicists, but to the physical scenario in which the physical system being used as a reference is completely delocalised with respect to phase, for instance, if it is a number state. This gives rise to a ``reduced" description in which the structure of a superselection rule must be enforced. Whether such a reduced description afforded by absolute quantities and absolutely coherent states
does yield what is observed in any given situation is an empirical question. It seems, to us, that there may be situations in which they do not, in which case a ``superselection rule" stronger than that mooted for photon number could be in force. For example, there may be physical situations in which it is impossible for mutually coherent states to arise from unitary evolution of absolutely incoherent product states, making the approximation of relative quantities by absolute ones impossible. A ``strong" conservation law, as 
presented in subsection \ref{subsec:sww} for instance, would have this effect.

Finally, superselection rules, as they arise in quantum field theory, correspond to 
inequivalent representations of the algebra of observables (possible only for systems with infinitely many degrees of 
freedom---also suspicious according to Earman and Butterfield) and entirely different in nature, it would seem, from the kind of constraint arising from the non-observability of absolute quantities. The connection of these with the superselection rules we have discussed in this manuscript remains a task for the future. 

We conclude this section with a note of caution about the possibility of ``lifting" a superselection rule arising from the indistinguishability of quantum particles.

\subsubsection{A Cautionary Note}

In order to urge a degree of circumspection regarding the idea that reference frames can be used to
overcome superselection rules in general, we discuss now an example based on the 
indistinguishable particle superselection rule in which the physical meaning of a reference frame
is unclear.

Consider a tensor product space $L^2(\mathbb{R})\otimes L^2(\mathbb{R})$ with the action of 
$\mathbb{Z}_2$ which exchanges particle numbering, i.e., $U(a)\Psi(x_1,x_2)=\Psi(x_2,x_1)$ ($a$ is the non-identity element). Indistinguishability requires
that any observable $A$ satisfies $[A,U(a)]=0$ (cf. \cite{frede}). Addend another Hilbert space $\mathbb{C}^2$ with projectors $P{\colvec{1}{0}}$ and 
$P{\colvec{0}{1}}$ with $\mathbb{Z}_2$ action $U^{\prime}(a)P{\colvec{1}{0}} = P{\colvec{0}{1}}$.

Then by demanding invariance of observables only at the level of $\mathcal{H}_1 \otimes \mathcal{H}_2 \otimes \mathbb{C}^2$ one can take an arbitrary 
$A \otimes B \in \mathcal{L}(\mathcal{H}_1 \otimes \mathcal{H}_2)$ and see that 
\begin{equation}
A \otimes B \otimes P{\colvec{1}{0}}+ B \otimes A \otimes P{\colvec{0}{1}}
\end{equation}
defines an invariant quantity (observable). Indeed, this is $\Y(A \otimes B)$ for this (finite) group.
Then, 
\begin{equation}
\ip{\varphi \otimes \phi}{\left( A \otimes B \otimes P{\colvec{1}{0}}+B \otimes A \otimes P{\colvec{0}{1}} \right) \varphi \otimes \phi} = \ip{\varphi}{A \otimes B \varphi}
\end{equation}
for all $\varphi$ and $\phi$ the `phase-localised' state $\phi=\colvec{1}{0}$. Therefore one can introduce a reference system in order to ``measure" particle labelling. In the BRS language,
the corresponding SSR has been ``lifted". However, such a ``reference frame" provided by the $\mathbb{C}^2$ system appears highly artificial and there is a question of whether it makes any physical sense.

\subsection{Reality of Optical Coherence}

In \cite{molmer}, M\o lmer claimed that the representation of laser light using coherent states,
i.e., states of the form 
\begin{equation}
\ket{\beta} := e^{\frac{-\mods{\beta}}{2}}\sum_{n=0}^{\infty} \frac{\beta ^n}{n!}\ket{n},
\end{equation}
while being legitimate for the purposes of calculation, does not reflect the true state of affairs. Actually, he claimed, that, after analysing the internal workings of laser light production in
a physical system, the ``actual" state is (in our notation) $\tau{_{\Sy}}_*(P[\ket{\beta}])$, and (the coherence of)
$\ket{\beta}$ is nothing more than a `convenient fiction'.

The ensuing controversy is well described in \cite{dia} (see also references therein), where a fictional dialogue is presented
between hypothetical physicists representing two groups with contrasting views: those who believe in the ``fact" of optical coherence, and those who view it as fictional. Given the nature of the problem (of the reality of laser coherence), we may re-visit
the controversy and provide a perspective based on the formal framework developed here  (see also \cite{lbm}). 

The issue is whether $\ket{\beta}$ and $\tau{_{\Sy}}_*(P[\ket{\beta}])$ of some laser system $\Sy$ can be empirically distinguished, given that no invariant quantity of $\Sy$ can tell $\ket{\beta}$ from $\tau{_{\Sy}}_*(P[\ket{\beta}])$. As we have seen, however, non-invariant
quantities of $\Sy$ can be used to represent invariant quantities of $\Sy + \R$, contingent on a suitable state of $\R$. The question then is whether there is a feasible physical experiment
in which $\ket{\beta}$ and $\tau{_{\Sy}}_*(P[\ket{\beta}])$, in their role as representing invariant states of $\Sy + \R$, give rise to differing physical predictions.

An absolute phase observable $\F^{\Sy}$ of $\Sy$ (in particular, the canonical phase) is mathematically suitable for separating $\ket{\beta}$ from $\tau{_{\Sy}}_*(P[\ket{\beta}])$.
We may choose also a canonical phase for $\R$, and use $\Y$ to construct the relative phase
observable $\F^{\T}=\Y\circ \F^{\Sy}$. Fixing a sequence $(\beta^{\R}_i) \subset \hir$ of coherent states with the property of becoming increasingly well localised at $0$ as $i$ becomes large,
we then find that 
\begin{align}\label{eq:lalc}
\ip{\beta}{\Fsf^{\Sy}(X)\beta}
 &= \lim_{i \to \infty}\ip{\beta \otimes \beta^{\R}_i}{(\Y\circ \Fsf ^{\Sy})(X)\beta \otimes \beta^{\R}_i} \\
 & \nonumber = \lim_{i \to \infty} \ip{\beta}{ \Gamma_{\beta^{\R}_i}\circ\Y\circ\F^{\Sy}(X) \beta}\\
 & \nonumber = \lim_{i \to \infty}\tr{\Fsf^T(X)\tau_{\T *}(P[\beta \otimes \beta^{\R}_i)}
\end{align}
for each $X \in \mathcal{B}(S^1)$.

From an absolute point of view, absolute coherence (of $\beta^{\R}_i$ for large $i$) is required
to witness absolute coherence of $\ket{\beta}$. From a relational point of view, all that is required (for good approximation of the right hand side by the left) is mutual coherence of the pair $(\ket{\beta}, \ket{\beta_i^{\R}})$. The final line of equation \eqref{eq:lalc} shows that
the limit can be taken using only invariant states of $\Sy + \R$, and that an absolute phase
with an absolutely coherent (coherent) state captures the statistics to arbitrarily good approximation. 

Given that absolute phase observables $\F^{\Sy}$ can be reconstructed in homodyne detection experiments (e.g. \cite{psp}), with the reference state/local oscillator given as a high-amplitude coherent state, we conclude that laser light is mutually coherent. In the high amplitude limit,
the mutual coherence takes on the appearance of absolute coherence for $\ket{\beta}$. We therefore
have a resolution of the puzzle of optical coherence through the application of the `observables are invariants' principle and the concept of mutual coherence.

\section{Summary and Conclusion}
The thesis of this paper is that observable quantities
are invariant under symmetry and that, in quantum mechanical laboratory experiments,
the measured statistics pertain not to some absolute quantity, but rather to an observable, relative quantity,
corresponding to the system and apparatus combined, along with the 
appropriate high localisation limit on the side of the apparatus. 
This is quite general, and not specific
to any particular absolute quantity, though in this paper special attention
has been given to phase, angle and position.

Through our relativisation procedure, we have shown that absolute quantities with absolutely coherent states provide a good account of the observable, relative quantities (with absolutely incoherent states) under high reference localisation. In this sense, the incorporation of a reference frame into the physical description makes it look ``as though" symmetry-violating statistics exist for a subsystem. However, since we argue that the description afforded by
subsystem quantities is theoretical shorthand for the relative description, we do not believe it 
is consistent to argue that symmetry may be violated by the introduction of a reference frame. Indeed, it is the introduction of such a frame that makes symmetry explicit; some quantities simply require two systems for definition, and one of these may me called a reference frame.

Therefore, we agree with prominent physicists (Aharonov/Susskind, Bartlett/Spekkens/Rudolph) that quantum states refer not only to systems 
to which they symbolically refer (i.e., the system under investigation), but also to external physical objects which are not explicitly part of the theoretical description. We have shown that 
complete reference phase delocalisation gives rise to a reduced description formally identical to one in which
a superselection rule is present, giving a new interpretation of the phrase ``lack of a phase reference implies a photon number superselection rule".  The idea that such a rule may be ``lifted" \cite{brs}, as we understand it, corresponds to the observation that a superselection rule may be applied to system-plus-reference, in which case, under reference localisation, it appears as though a superselection rule is not applicable to the system. We believe that, since the ``reduced" description is not a full account of the state of affairs, it is not correct to conclude that 
superselection-rule-``violating" superpositions can be produced or measured. This would indicate that absolute quantities can be measured.

An important question, however, is whether, in all mooted instances of superselection rules, a reference frame may exist which makes it look like the superselection rule can be lifted or overcome. It is empirically the case that for photon number, such a frame does exist. Mutually coherent pairs of systems exist in this case, making absolute phases and coherent states a suitable shorthand description for the true, relative description, with the associated relative phase observable. On the other hand, a reference frame for lifting a superselection rule corresponding to indistiguishability appears highly suspect. As far as we know, it has yet to be settled in a laboratory whether absolute phases conjugate to charge provide an empirically adequate account.

\noindent{\bf Acknowledgements}
Thanks are due to Stephen Bartlett, Rob Spekkens, Terry Rudolph, Dennis Dieks and Guido Bacciagaluppi for helpful conversations, and to Rebecca Ronke for valuable feedback on earlier drafts of this manuscript.

\section*{Appendix}
\noindent
We prove Eq.~\eqref{eq:lim-m}, %\begin{proposition}
$\bigl\Vert \state{\beta _{A}^1} - \frac{1}{\sqrt{2}} \left\vert \beta \right\rangle \bigr\Vert  \to 0~ \text{as}~ m:=|\beta|^2 \to \infty $.

%\end{proposition}

%\begin{proof}
Let:
\begin{align}
& w_m (n) = |c_n|^2 = \frac{m^n }{n!}e^{-m};\\
& f_m (n)=\left[ \cos{\left(\sqrt{\frac{n}{m}}\frac{\pi}{4}\right)} - \frac{1}{\sqrt{2}}\right]^2;\\
& a_m = \sum_n  w _m (n) f_m (n) = \bigl\||\beta _{A}^1\rangle - \tfrac{1}{\sqrt{2}} |\beta\rangle \bigr\|^2.
\end{align}
 Firstly note that $\bigl|f_m (n)\bigr| \leq 3$. Let $I_{m,k} := \bigl[m-k\sqrt{m}, m + k \sqrt{m}\bigr]~ \text{with }k \in \mathbb{N}$:
\begin{equation}\label{eqconv}
\sum_n w_m (n)f_m (n)  = \sum_{n \in I_k} w_m (n)f_m (n) +  \sum_{n \notin I_k} w_m (n)f_m (n)
\end{equation}

An application of Chebyshev's inequality gives that $p(|n-m| \geq k \sigma) \leq \frac{1}{k^2}$ (where $p$ denotes the  probability distribution $n\mapsto w_m(n)$, $\sigma = \sqrt{m}, k \in \mathbb{N}$); therefore
\begin{equation}
\sum_{n \notin I_k}w_m (n)f_m (n) \leq 3 \sum_{n \notin I_k}w_m (n) \leq \frac{3}{k^2}.
\end{equation}
Exploiting the continuity of cosine, for each $k \in \mathbb{N}$ define $\delta_k$ such that $|\frac{n}{m}-1| < \delta _k$ implies $\bigl|\cos{\bigl(\sqrt{\frac{n}{m}}\frac{\pi}{4}\bigr)} -\cos{\bigl(\frac{\pi}{4}\bigr)}\bigr| < \frac{1}{k}$ (and therefore $f_m (n)< \frac{1}{k^2}$). For each $k \in \mathbb{N}$, let $M=\frac{k^2}{\delta_k ^2}$, and so for $m>M$, $\delta_k > \frac{k}{\sqrt{m}}$. In \eqref{eqconv}, we therefore have that

\begin{equation}
\sum_n w_m (n)f_m (n)  < \Bigl( \sum_{n \in I_k}w_m (n)+3 \Bigr) \frac{1}{k^2} < \frac{4}{k^2}.
\end{equation}
Since $k$ is arbitrary, this proves the result. \qed
%\end{proof}

\end{document}